\def\qu#1 {\fbox {\footnote {\ }}\ \footnotetext { From Qu: {\color{red}#1}}}
\def\hqu#1 {}
\def\kq#1 {\fbox {\footnote {\ }}\ \footnotetext { From KangQuan: {\color{blue}#1}}}
\def\hkq#1 {}
\newtheorem{Th}{Theorem}
\newtheorem{Prop}[Th]{Proposition}
\newtheorem{Prob}[Th]{Problem}
\newtheorem{Lemma}[Th]{Lemma}
\newtheorem{example}{Example}
\newtheorem{Conj}[Th]{Conjecture}
\newtheorem{Rem}[Th]{Remark}
\newcommand{\tr}{{\rm Tr}}
\newcommand{\gf}{{\mathbb F}}
\newcommand{\figcaption}{\def\@captype{figure}\caption}
\newcommand{\tabcaption}{\def\@captype{table}\caption}
\begin{document}
	\title{Binary linear codes with few weights from two-to-one functions }
	\author{
		{ Kangquan Li, Chunlei Li, Tor Helleseth and Longjiang Qu}
		\thanks{\noindent Kangquan Li is with the College of Liberal Arts and Sciences,
			National University of Defense Technology, Changsha, 410073, China and is currently a visiting Ph.D. student at the Department of Informatics, University of Bergen, Bergen N-5020, Norway.
			Chunlei Li and Tor Helleseth are with the Department of Informatics, University of Bergen, Bergen N-5020, Norway.
			Longjiang Qu is with the College of Liberal Arts and Sciences,
			National University of Defense Technology, Changsha, 410073, China, and is also with
			the State Key Laboratory of Cryptology, Beijing, 100878, China. The work of Longjiang Qu was supported by the Nature Science Foundation of China (NSFC) under Grant  61722213, 11531002,   National Key R$\&$D Program of China (No.2017YFB0802000),  and the Open Foundation of State Key Laboratory of Cryptology. The work of Tor Helleseth and Chunlei Li was supported by the Research Council of Norway (No.~247742/O70 and No.~311646/O70).
			The work of Chunlei Li was also supported in part by the National Natural Science Foundation of China under Grant (No.~61771021). The work of Kangquan Li was supported by China Scholarship Council. Longjiang Qu is the corresponding author.
			
			\smallskip
			
			\textbf{Emails}: 			likangquan11@nudt.edu.cn,  chunlei.li@uib.no, tor.helleseth@uib.no, 
			ljqu\_happy@hotmail.com
		}
	}
	\maketitle{}

\begin{abstract}

In this paper, we apply two-to-one functions  over $\gf_{2^n}$  in two generic constructions of binary linear codes.
We consider two-to-one functions  in two forms: (1) generalized quadratic functions; and (2) $\left(x^{2^t}+x\right)^e$ with $\gcd(t, n)=1$ and $\gcd\left(e, 2^n-1\right)=1$.  
Based on the study of the Walsh transforms of those functions or their related-ones, we present many classes of linear codes with few nonzero weights, including one weight, three weights, four weights and five weights.
The weight distributions of the proposed codes with one weight and with three weights are determined. 
In addition, we discuss the minimum distance of the dual of the constructed codes and show that some of them achieve the sphere packing bound. { Moreover, several examples show that some of our codes are optimal and some have the best known parameters.}

\end{abstract}

\begin{IEEEkeywords}
	Binary linear codes, two-to-one functions , $3$-weight linear codes, constant-weight linear codes
\end{IEEEkeywords}

\section{Introduction}

Let $q$ be a power of a prime $p,$ $\gf_{q}$ be the finite field of $q$ elements and  $\gf_{q}^*$ be its multiplicative group. 
An $[n, k, d]$ linear code $\mathcal{C}$ over $\gf_{q}$ is a $k$-dimensional subspace of $\gf_q^n$ with minimum (Hamming) distance $d$. 
{ 
	An $[n,k,d]$ code is sometimes usually said to be optimal when its minimum distance $d$ 
	achieves the maximum value 
	with respect to the Hamming bound \cite{huffman2010fundamentals}. Given an $[n, k, d]$ linear code $\mathcal{C}$ over $\gf_q$, 
	the dual code of $\mathcal{C}$ is an $[n, n-k]$ linear code defined by 
	$\mathcal{C}^{\perp} = \left\{ \mathbf{x}\in\gf_q^n: \mathbf{x}\cdot \mathbf{c} = 0, \forall \mathbf{c}\in\mathcal{C}  \right\},$  where $\mathbf{x}\cdot \mathbf{c} = \sum_{i=1}^{n}x_ic_i$ is a scalar product.
	}
 Let $A_i$ denote the number of codewords with Hamming weight $i$ in a code $\mathcal{C}$ of length $n$. The {weight enumerator} of $\mathcal{C}$ is defined by $1+A_1z+A_2z^2+\cdots+A_nz^n$. The sequence $(1,A_1,A_2,\cdots,A_n)$ is called the weight distribution of $\mathcal{C}$. A code $\mathcal{C}$ is said to be a $t$-weight code if the number of nonzero $A_i$ in the sequence $(A_1,A_2,\cdots,A_n)$ is equal to $t$.
Linear $t$-weight codes with small values of $t$
have many applications, including secret sharing schemes \cite{carlet2005linear,yuan2005secret}, authentication codes \cite{ding2005coding}, association schemes \cite{calderbank1984three}, strongly regular graphs \cite{calderbank1986geometry}, etc.  In particular, one-weight codes are known as constant-weight codes, which are closely connected to the theory of Steiner systems and designs.


Known linear codes with good properties are constructed largely by two generic approaches \cite{carlet2005linear,ding2007cyclotomic,ding2016}. 
By the first approach, linear codes over $\gf_q$ are defined based on a function $f$ from $\gf_{q^n}$ to itself 
as 
$$
\overline{\mathcal{C}}_f=\{
(\tr_n(af(x)+bx))_{x\in\gf_{q^n}},\,\, a,\,b \in \gf_{q^n}
\}
$$ or $$
\mathcal{C}_f=\{
(\tr_n(af(x)+bx))_{x\in\gf_{q^n}^*},\,\, a,\,b \in \gf_{q^n}
\}
$$ when $f(0)=0$,
where $n$ is a positive integer and $\tr_n$ is the trace function from $\gf_{q^n}$ to $\gf_{q}$. 
This generic construction has a long history and its importance is
supported by Delsarte’s Theorem \cite{Delsarte1975}. It also gives a coding-theory characterisation of APN monomials, almost bent functions, and bent functions \cite{carlet1998codes}, and 
cross-correlation between $m$-sequences and their decimations \cite{dobbertin2006niho}.
The second generic construction, introduced by Ding and Niederreiter, is described in terms of defining sets \cite{ding2007cyclotomic}. More specifically,
it takes a subset $D=\{d_1, d_2, \cdots, d_{\ell}\} \subseteq \gf_{q^n}$ and defines a linear code of length $\ell$ over $\gf_q$ as 
\begin{equation*}
\label{define set constructions}
\mathcal{C}_D = \left\{ \left( \tr_n(xd_1), \tr_n(xd_2), \cdots, \tr_n(xd_{\ell}) \right): x\in\gf_{q^n}  \right\},
\end{equation*}  where $D$ is called the defining set of the code $\mathcal{C}_D$. 
When the defining set $D$
is properly chosen, the code $\mathcal{C}_D$ can have good or optimal properties. This construction is generic in the sense that many classes of known codes could be produced
by selecting proper defining sets $D$. 
In recent years, researchers have proposed new  families of linear codes with few weights by considering defining sets derived from the support and image of certain functions over $\gf_{q^n}$, see \cite{ding2016,ding2015linear,ding2015class,heng2016three,zhou2016linear,tang2016linear} etc.
Interested readers may refer to a recent survey by Li and Mesnager in \cite{lirecent} and references therein for good or optimal linear codes constructed from these two generic approaches. 

Nonlinear functions over finite fields play important roles in cryptography, combinatorics, designs and sequence design. 
In coding theory, they have been employed in the above two generic constructions,  resulting in a number of linear codes with good or optimal properties.
Recently Mesnager and Qu in \cite{mesnager2019two} made a systematic study of two-to-one functions  over arbitrary finite fields, motivated by their close connection to special important primitives in symmetric cryptography.
Later, Li et al. further developed the study of two-to-one functions  over finite fields with characteristic $2$ and proposed some two-to-one trinomials and quadrinomials \cite{li2019further}. 

  {The research of linear codes from two-to-one functions, to our best knowledge, began in \cite{ding2015linear,ding2016construction}
  	when o-monomials and APN functions are discussed in the context. 
  In this paper we will provide a more comprehensive study of  two-to-one functions in constructing binary linear codes with few weights.  
 Two forms of two-to-one functions from $\gf_{2^n}$ to itself are considered.
  	The first form is the generalized quadratic polynomial $f(x)$, for which
	there exists a positive integer $e$ with $\gcd(e, 2^n-1)=1$ such that $f(x^e)$ is a quadratic function.
  	The second form is the function $(x^{2^t}+x)^e$ with $\gcd(t,n)=1$ and $\gcd(e, 2^n-1)=1$.  
Among the generalized quadratic polynomials, of particular interest are those
with large ranks  because they can produce linear codes with few weights.
Hence some two-to-one functions  in \cite{li2019further} and two newly constructed two-to-one polynomials are considered. 
As a result, we obtain many classes of  1-weight (a.k.a constant-weight), 3-weight, 5-weight binary linear codes by the two generic constructions.
  For the second form $(x^{2^t}+x)^e$, we give an interesting connection between the weight distribution of linear codes in second construction and the Walsh spectrum of the Boolean function $\tr_n(x^e)$.
 The connection enables us to derive many classes of 3-weight, 4-weight and 5-weight binary linear codes. 
}
By applying the Pless power moments, the weight distribution of the proposed constant-weight and 3-weight linear codes are determined.
We do not manage to determine the weight distribution of those 5-weight linear codes in this paper.   
{In the end, based on the experiment results, we propose some open problems for the linear codes.  
}


The remainder of this paper is organized as follows. Section 2 introduces mathematic foundations and auxiliary results.
Section 3 first recalls some known two-to-one functions  in \cite{li2019further} and then investigates the parameters of 
binary linear codes constructed from those two-to-one functions .
In Section 4, we construct two new classes of two-to-one functions  and propose 3-weight linear codes from them.  {In Section 5, we discuss the 
	properties of linear codes from the two-to-one functions of the form $(x^{2^t}+x)^e$.}
Finally, the concluding remark on our work is given in Section 6.

\section{Preliminaries}

This section presents basic notations, definitions and auxiliary results for the subsequent sections.
Throughout this paper, we will restrict our discussion to finite fields with characteristic $2$.

Let $n$ be a positive integer. For $m\mid n$, let $\tr_{n/m}(\cdot)$ denote the relative trace function from $\gf_{2^n}$ onto $\gf_{2^m}$, i.e., $\tr_{n/m}(x)=x+x^{2^m}+\cdots+x^{2^{\left(\frac{n}{m}-1\right)m}}$ for any $x\in\gf_{2^n}$. Particularly, when $m=1$, we use $\tr_n(\cdot)$ to denote the absolute trace function  from $\gf_{2^n}$ onto $\gf_2$. 
For any set $E$, we denote by $|E|$ the cardinality of $E$.  

\subsection{Binary codes from two-to-one functions}

Let $f(x)$ be a mapping from $\gf_{2^n}$ to itself with $f(0)=0$. Recall that the Walsh transform of $f(x)$ at $(a, b)\in \gf_{2^n}\times \gf_{2^n}$ is given by 
\begin{equation}
\label{Wfab} W_f(a,b) = \sum\limits_{x\in \gf_{2^n}}(-1)^{\tr_n(ax+bf(x))}.
\end{equation} Here we add the case that $b=0$ in the definition for convenience.

In the first generic construction,  the binary linear code from $f$ is given by
\begin{equation}\label{Eq-GenCons1}
\mathcal{C}_f=\left\{
\mathbf{c}_{a,b}=(\tr_n(ax+bf(x)))_{x\in\gf_{2^n}^*},\,\, a,\,b \in \gf_{2^n}
\right\}.
\end{equation} 
Note that with the restriction $f(0)=0$, we have $\tr_n(ax+bf(x))=0$ for any $a,\,b \in \gf_{2^n}$ when $x=0$.
Hence the code $\mathcal{C}_f$ is commonly considered in the literature over the code $\overline{\mathcal{C}}_f$ in the first generic construction.
In this case the linear code $\mathcal{C}_f$ has length $2^n-1$ and dimension at most $2n$. Furthermore, in order to determine the dimension of $\mathcal{C}_{f}$, we only need to compute the number of $a, b\in \gf_{2^n}$ such that
$\tr_n(ax+bf(x))=0$ for any $x\in\gf_{2^n}$ since the code is linear. Equivalently, the dimension of $\mathcal{C}_{f}$ is equal to $2n-d_{K_1}$, where $d_{K_1}$ is the dimension of the 
$\gf_2$-vector space
\begin{equation}\label{dimensionK1}
K_1 = \left\{ (a, b)\in \gf_{2^n}^2:  \sum_{x\in \gf_{2^n}}(-1)^{\tr_n(ax+bf(x))}=2^n  \right\}.
\end{equation}
For a codeword $\textbf{c}_{a,b}$, its Hamming weight is given by 
\begin{equation}\label{WtDist-Relation1}
\begin{split}
\mathrm{wt}(\mathbf{c}_{a,b}) 
	& = 2^n-1-\#\{x\in \gf_{2^n}^*\,:\, \tr_n(ax+bf(x))=0\} \\
&=2^{n-1}-\dfrac{1}{2} \sum\limits_{x\in\gf_{2^n}}(-1)^{\tr_n(ax+bf(x))}\\
&=2^{n-1}-\dfrac{1}{2} W_f(a,b).
\end{split}
\end{equation}
Therefore,  the weight distribution of $\mathcal{C}$ can be directly derived from the Walsh spectrum of $f(x)$:
$$
\left\{ \left\{
W_f(a,b)=\sum\limits_{x\in\gf_{2^n}}(-1)^{\tr_n(ax+bf(x))}\,:\, a, \, b\in \gf_{2^n}
\right\} 
\right\}.
$$ Namely, if a value of $W_f(a,b)$ occurs $X$ times in the Walsh spectrum of $f$, then there are $X/2^{d_{K_1}}$ codewords 
in $\mathcal{C}_f$ with Hamming weight $2^{n-1}-\dfrac{1}{2} W_f(a,b)$.
In particular, when the Walsh transforms of $f$ take only three values $v_1,v_2,v_3$ for all $a\in \gf_{2^n}$ and $b\in \gf_{2^n}^*$, 
the value distribution of $W_f(a,b)$ can be calculated by solving the following equations derived from the first three power moment identities: 
\begin{equation}\label{Eq-WtDist1}
\begin{cases}
\sum\limits_{a\in \gf_{2^n}} \sum\limits_{b\in \gf_{2^n}} W_f^0(a,b) & =2^{2n} = 2^{d_{K_1}}+X_1+X_2+X_3, \\
\sum\limits_{a\in \gf_{2^n}} \sum\limits_{b\in \gf_{2^n}} W_f(a,b) & =2^{2n} = 2^{d_{K_1}}\cdot 2^n+v_1X_1+v_2X_2+v_3X_3, \\
\sum\limits_{a\in \gf_{2^n}} \sum\limits_{b\in \gf_{2^n}} W_f^2(a,b) & =2^{3n}= 2^{d_{K_1}}\cdot 2^{2n}+v_1^2X_1+v_2^2X_2+v_3^2X_3, \\
\end{cases}
\end{equation}
where $X_i$ is the occurrences of $W_f(a,b)=v_i$'s, $i=1, 2, 3$ in the Walsh spectrum of $f$. Then the weight distribution of $\mathcal{C}_f$
can be determined accordingly.

\medskip

In the second construction, let $D(f)=\{ f(x): x \in\gf_{2^n} \}\backslash\{0\}=\{d_1,d_2,\ldots, d_{\ell}\}$ and define the binary linear code
\begin{equation}\label{Eq-GenCons2}
\mathcal{C}_{D(f)}= \left\{ \textbf{c}_b=\left( \tr_n(bd_1), \tr_n(bd_2), \ldots, \tr_n(bd_{\ell}) \right): b\in\gf_{2^n}  \right\}.
\end{equation} 
It is clear that the code $\mathcal{C}_{D(f)}$ has length $\ell=|D(f)|$ and dimension at most $n$. 
Furthermore, in order to determine the dimension of $\mathcal{C}_{D(f)}$, we need to compute the number of $b\in \gf_{2^n}$ such that
$\tr_n(bf(x))=0$ for any $x\in\gf_{2^n}$ since the code is linear. Equivalently, the dimension of $\mathcal{C}_{D(f)}$ is equal to $n-d_{K_2}$, where $d_{K_1}$ is the dimension of the 
$\gf_2$-vector space
\begin{equation}\label{dimension}
K_2 = \left\{ b\in\gf_{2^n} \,:\, \sum_{x\in \gf_{2^n}}(-1)^{\tr_n(bf(x))}=2^n  \right\}.
\end{equation}
For any $b\in\gf_{2^n}$, the Hamming weight of a codeword $\mathbf{c}_b$ in $\mathcal{C}_{D(f)}$  is given by
\begin{equation*}
\begin{split}
\mathrm{wt}(\mathbf{c}_{b})   =  |\{1\leq i \leq \ell \,:\, \tr_n(bd_i)=1\}|  
= \dfrac{1}{2}\left(|D_f| - \sum\limits_{d\in D(f)} (-1)^{\tr_n(bd)}\right).
\end{split}
\end{equation*}
From the above formula, one sees that the weight distribution of the linear code $\mathcal{C}_{D(f)}$ is 
essentially the value distribution of a partial exponential sum, which is generally intractable if $f$ is not properly chosen.

Suppose that $f(x)$ is a two-to-one mapping from $\gf_{2^n}$ to itself, which means that $|f^{-1}(a)|=2$ for any $a\in \mathrm{Im}(f)$. 
Then the linear code $\mathcal{C}_{D(f)}$  has length $|D_f|=2^{n-1}-1$ and the Hamming weight of its codeword is given by
\begin{equation}
\label{wt_new}
\mathrm{wt}(\mathbf{c}_b) = \frac{1}{2}\left( |D_f|-  \frac{1}{2}\sum_{x\in \gf_{2^n}}(-1)^{\tr_n(bf(x))} +1\right) = 2^{n-2} - \frac{1}{4}\sum_{x\in \gf_{2^n}}(-1)^{\tr_n(bf(x))}.
\end{equation}
From \eqref{dimension} and \eqref{wt_new}, one sees that the dimension and the weight distribution of $\mathcal{C}_{D(f)}$ heavily depend on the value of 
\begin{equation}
\label{wf}
W_f(b) \triangleq W_f(0,b)=\sum_{x\in \gf_{2^n}}(-1)^{\tr_n(bf(x))}, \quad b\in \gf_{2^n}.
\end{equation} In particular, if $W_f(b)$ takes only three values $v_1, v_2$ and $v_3$ for $b\in \gf_{2^n}^{*}$,
then the code $\mathcal{C}_{D(f)}$ has three nonzero weights, namely,  $\mathrm{w}_i=2^{n-2}-v_i/4$ for $i=1, 2, 3$. 
Denote by $A_{i}$ the number of codewords with weight $\mathrm{w}_i$ in $\mathcal{C}_{D(f)}$.
Note that the dual of $\mathcal{C}_{D(f)}$ has Hamming weight no less than $3$ since $\tr_n(xd)=0$ holds for all $x\in \gf_{2^n}$ if and only if $d=0$.
The first three Pless Power Moments \cite[p. 260]{huffman2010fundamentals} leads to the following system of equations:
\begin{equation} \label{Eq-WtDist2}
\begin{cases} A_{1} +A_{2}+A_{3}= 2^n-1 \\ 
\mathrm{w}_1 A_{1} + \mathrm{w}_2 A_{2}+ \mathrm{w}_3 A_{3}= \ell 2^{n-1} \\ 
\mathrm{w}_1^2 A_{1} + \mathrm{w}_2^2 A_{2} + \mathrm{w}_3^2 A_{3}= \ell ( \ell +1) 2^{n-2}, \end{cases}
\end{equation} where $\ell = 2^{n-1}-1$. Therefore, the weight distribution of  $\mathcal{C}_{D(f)}$ can be determined 
from the above system of equations when it is shown to have only three nonzero weights.

The above discussion shows that for a two-to-one mapping $f$, 
the parameters of the linear codes $\mathcal{C}_f$ in \eqref{Eq-GenCons1} and $\mathcal{C}_{D(f)}$ in \eqref{Eq-GenCons2}
depend on the investigation of the Walsh transform of $f$. In addition, it is clear that the number of nonzero weights in $\mathcal{C}_{D(f)}$ 
is no more than that of $\mathcal{C}_f$. {Therefore, we will focus on the two-to-one functions of which the Walsh transforms have few different values.} 

{At the end of this subsection, we consider the minimum distance of dual codes of $\mathcal{C}_f$ in \eqref{Eq-GenCons1}  and $\mathcal{C}_{D(f)}$ in \eqref{Eq-GenCons2} in the following theorem.

\begin{Th}
		\label{dual_codes}
	Let $f$ be a two-to-one mapping over $\gf_{2^n}$, $\mathcal{C}_f$ and $\mathcal{C}_{D(f)}$ be defined as in \eqref{Eq-GenCons1} and \eqref{Eq-GenCons2}, respectively. Let $\mathcal{C}_f^{\perp}$ and $\mathcal{C}_{D(f)}^{\perp}$ be the dual code of $\mathcal{C}_f$ and $\mathcal{C}_{D(f)}$, respectively. Let $d_{K_1}$ and $d_{K_2}$ be defined as in \eqref{dimensionK1} and \eqref{dimension}, respectively. Then 
	\begin{enumerate}[(1)]
		\item $\mathcal{C}_f^{\perp}$ is a $\left[2^n-1, 2^n-1-2n+d_{K_1}\right]$ binary code with the minimum distance $d_f^{\perp}$ satisfying $ 3 \le d_f^{\perp}\le 6$. Particularly, if $d_{K_1}\ge 2$, $3\le d_f^{\perp}\le 4$. Moreover, $d_f^{\perp}=3$ if and only if there exist two distinct elements $x_1,x_2\in\gf_{2^n}^{*}$ such that $f(x_1)+f(x_2)+f(x_1+x_2)=0$. 
		
		\item $\mathcal{C}_{D(f)}^{\perp}$ is a $\left[2^{n-1}-1, 2^{n-1}-1-n+d_{K_2}\right]$ binary code with the minimum distance $d_{D(f)}^{\perp}$ satisfying $3\le d_{D(f)}^{\perp} \le 4$. Particularly, when $d_{K_2}=1$, the equality of the sphere packing bound can be achieved.  Moreover, $d_{D(f)}^{\perp}=3$ if and only if there exist three distinct elements $x_1,x_2,x_3\in\gf_{2^n}^{*}$ such that $f(x_i)\neq f(x_j)$ for $1\le i <j\le 3$ and $f(x_1)+f(x_2)+f(x_3)=0$. 
	\end{enumerate}
\end{Th}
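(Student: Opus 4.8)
I would translate both minimum-distance problems into statements about minimal $\gf_2$-linear dependences among the columns of a generator matrix of $\mathcal{C}_f$ (respectively $\mathcal{C}_{D(f)}$), which is a parity-check matrix of the dual. Fixing a basis $\alpha_1,\dots,\alpha_n$ of $\gf_{2^n}$ over $\gf_2$ and writing $a,b$ in coordinates, the coordinate of $\mathbf{c}_{a,b}$ at position $x$ equals $\sum_i a_i\tr_n(\alpha_i x)+\sum_i b_i\tr_n(\alpha_i f(x))$, so the column at $x\in\gf_{2^n}^*$ is identified with the pair $(x,f(x))\in\gf_{2^n}^2$; likewise the column of $\mathcal{C}_{D(f)}$ at a coordinate $d\in D(f)$ is identified with $d$. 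Hence $d_f^{\perp}$ is the least $w$ for which there are distinct $x_1,\dots,x_w\in\gf_{2^n}^*$ with $\sum_j x_j=0$ and $\sum_j f(x_j)=0$, while $d_{D(f)}^{\perp}$ is the least $w$ for which there are distinct $d_1,\dots,d_w\in D(f)$ with $\sum_j d_j=0$. The dimension formulas are then immediate from the established identities $\dim\mathcal{C}_f=2n-d_{K_1}$ and $\dim\mathcal{C}_{D(f)}=n-d_{K_2}$ together with $\dim\mathcal{C}^{\perp}=(\text{length})-\dim\mathcal{C}$.

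From this reformulation the lower bounds and the weight-$3$ criteria follow directly. A weight-$1$ dependence would force $x=0$, and a weight-$2$ dependence would force two equal columns (i.e.\ $x_1=x_2$, respectively $d_1=d_2$), both impossible; thus $d_f^{\perp}\ge 3$ and $d_{D(f)}^{\perp}\ge 3$. A weight-$3$ dependence for $\mathcal{C}_f$ reads $x_1+x_2+x_3=0$ and $f(x_1)+f(x_2)+f(x_3)=0$; substituting $x_3=x_1+x_2$ (automatically nonzero and distinct from $x_1,x_2$ once $x_1\ne x_2$) gives exactly the stated condition $f(x_1)+f(x_2)+f(x_1+x_2)=0$. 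For $\mathcal{C}_{D(f)}$ a weight-$3$ dependence is $d_1+d_2+d_3=0$ with the $d_i=f(x_i)$ pairwise distinct, which is the stated criterion; one checks that if some $f(x_i)$ vanished the other two would coincide, so their being nonzero and distinct is automatic.

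For the upper bounds I would invoke the sphere-packing (Hamming) bound, using that the redundancy of $\mathcal{C}_f^{\perp}$ is $\dim\mathcal{C}_f\le 2n$ and that of $\mathcal{C}_{D(f)}^{\perp}$ is $\dim\mathcal{C}_{D(f)}\le n$. If $d_f^{\perp}\ge 7$, then $\mathcal{C}_f^{\perp}$ corrects three errors, so $\sum_{i=0}^{3}\binom{2^n-1}{i}\le 2^{\dim\mathcal{C}_f}\le 2^{2n}$, which fails because $\binom{2^n-1}{3}>2^{2n}$ once $n$ is moderately large; similarly $d_{D(f)}^{\perp}\ge 5$ would give $\sum_{i=0}^{2}\binom{2^{n-1}-1}{i}\le 2^{n}$, contradicted by $\binom{2^{n-1}-1}{2}>2^{n}$. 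The few small values of $n$ are disposed of directly from the length. The perfectness claim in part~(2) is then a parameter check: when $d_{K_2}=1$ the dual has parameters $[2^{m}-1,\,2^{m}-1-m]$ with $m=n-1$, i.e.\ those of the binary Hamming code, so whenever $d_{D(f)}^{\perp}=3$ it meets the sphere-packing bound with equality.

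Finally, for the refinement $d_{K_1}\ge 2\Rightarrow d_f^{\perp}\le 4$ I would first prove the key lemma $d_{K_2}\le 1$: if $b_1,b_2$ were independent in $K_2$, then $\mathrm{Im}(f)$ would lie in the intersection of two distinct hyperplanes, forcing $|\mathrm{Im}(f)|\le 2^{n-2}$ and contradicting $|\mathrm{Im}(f)|=2^{n-1}$. Since the $a=0$ part of $K_1$ is exactly $\{0\}\times K_2$, the hypothesis $d_{K_1}\ge 2>d_{K_2}$ guarantees some $(a,b)\in K_1$ with $a\ne 0$. For such $(a,b)$, adding the identities $\tr_n(ax+bf(x))=0$ at $x$ and at its paired preimage $\sigma(x)$ (where $f(\sigma(x))=f(x)$) yields $\tr_n\!\big(a(x+\sigma(x))\big)=0$, so all $2^{n-1}$ preimage-pair differences lie in the linear hyperplane $\{\tr_n(a\,\cdot)=0\}$, which contains only $2^{n-1}-1$ nonzero elements; by pigeonhole two distinct pairs share a difference, and the four elements they contribute (or, when the pair $\{0,x_0\}$ with $f(x_0)=0$ is involved, three elements) form a dependence of weight at most $4$. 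The main obstacle is the plain bound $d_f^{\perp}\le 6$: a purely structural proof via three preimage pairs needs three pair-differences summing to zero, which can fail exactly when these differences form a maximal sum-free set, namely an affine hyperplane; it is to sidestep this degeneracy that I lean on the sphere-packing bound for the unconditional value $6$, and the same phenomenon explains why the sharper value $4$ is obtained only under the extra hypothesis $d_{K_1}\ge 2$.
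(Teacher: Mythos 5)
Your proposal is correct, and its skeleton coincides with the paper's proof: dual distance read off as minimal column dependences of a generator matrix of the primal code, dimensions from $\dim\mathcal{C}_f=2n-d_{K_1}$ and $\dim\mathcal{C}_{D(f)}=n-d_{K_2}$, elimination of weights $1$ and $2$, the same translations of weight-$3$ dual codewords, and the sphere-packing bound for the upper bounds and the perfectness claim. The one genuine divergence is the refinement $d_{K_1}\ge 2\Rightarrow d_f^{\perp}\le 4$. The paper disposes of it by a second sphere-packing application: if $d_f^{\perp}\ge 5$ then $\sum_{i=0}^{2}\binom{2^n-1}{i}=2^{2n-1}-2^{n-1}+1\le 2^{2n-d_{K_1}}\le 2^{2n-2}$, which is false --- a one-liner. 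Your route --- the lemma $d_{K_2}\le 1$, hence some $(a,b)\in K_1$ with $a\ne 0$, then pigeonhole on the $2^{n-1}$ preimage-pair differences inside the hyperplane $\{z:\tr_n(az)=0\}$ --- is valid and buys more: it constructs an explicit dual word of weight $4$ (or $3$ when the pair $\{0,x_0\}$ intervenes), and $d_{K_2}\le 1$ is a clean fact the paper never isolates; the price is length. One caution shared by both proofs: the packing argument needs \emph{strict} failure of the bound, and at $n=3$ with $d_{K_1}=0$ one has $\sum_{i=0}^{3}\binom{7}{i}=64=2^{2n}$, exactly the parameters of the perfect $[7,1,7]$ repetition code, so your phrase ``disposed of directly from the length'' does not settle that case (nor does the paper's inequality, which silently fails there); excluding $\mathcal{C}_f^{\perp}=\{\mathbf{0},\mathbf{1}\}$ for $n=3$ requires an extra argument, e.g.\ comparing the third power moment of $W_f$ with the count of trivial solutions of $x+y+z=0$, $f(x)+f(y)+f(z)=0$.
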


\begin{proof}
 From the above discussion, we know that the linear code $\mathcal{C}_f$ (resp. $\mathcal{C}_{D(f)}$) has length $2^n-1$ (resp. $2^{n-1}-1$) and dimension $2n-d_{K_1}$ (resp. $n-d_{K_2}$). Then according to the definition,	the length and dimension of $\mathcal{C}_f^{\perp}$ and $\mathcal{C}_{D(f)}^{\perp}$ can be trivially determined. Thus it suffices to consider the minimum distances of the dual codes. 
	
	(1)  For $\mathcal{C}_f^{\perp}$, it is clear that $d_f^{\perp}\neq1$. If $\mathcal{C}_f^{\perp}$ has a codeword of weight two, then there exist two distinct elements $x_1,x_2\in\gf_{2^n}^{*}$ such that $x_1+x_2=0$ and $f(x_1)+f(x_2)=0$, which is impossible. Thus $d_f^{\perp}\ge 3$.  We now prove that $d_f^{\perp}\le 6$. Suppose that $d_f^{\perp}\ge 7$. We would then have 
	\begin{eqnarray*}
& &	\sum_{i=0}^3 \binom{2^n-1}{i}(2-1)^i \\
&=& 1 + 2^n-1 + \left(2^n-1\right)\left(2^{n-1}-1\right) + \frac{\left(2^n-1\right)\left(2^n-3\right)\left(2^{n-1}-1\right)}{3} \\
&=& 2^{2n-1}-2^{n-1}+1 + \frac{\left(2^n-1\right)\left(2^n-3\right)\left(2^{n-1}-1\right)}{3}  \\
&>& 2^{2n-d_{K_1}}, 
	\end{eqnarray*}
which is contrary to the sphere packing bound. Thus $d_f^{\perp}\le 6$. In particular, obviously, if $d_{K_1}\ge 2$, $d_f^{\perp}\le 4$. Moreover, according to the definition, $\mathcal{C}_f^{\perp}$ has a codeword of weight three if and only if there are pairwise distinct elements $x_1,x_2,x_3\in\gf_{2^n}^{*}$ such that 
 $$\begin{cases} x_1+x_2+x_3 = 0 \\ 
f(x_1)+f(x_2)+f(x_3) = 0, \end{cases}$$
i.e., there exist two distinct elements $x_1,x_2\in\gf_{2^n}^{*}$ such that $f(x_1)+f(x_2)+f(x_1+x_2)=0$. 

(2) For $\mathcal{C}_{D(f)}^{\perp}$, it is also clear that $d_{D(f)}^{\perp}\neq1$. Moreover, if $\mathcal{C}_{D(f)}^{\perp}$ has a codeword of weight two, then there exist two distinct elements $d_1,d_2\in \mathrm{Im}(f)$ such that $d_1+d_2=0$, which is contrary.  Thus $d_{D(f)}^{\perp}\ge3$. We next show that $d_f^{\perp}\le 4$. Suppose that $d_f^{\perp}\ge 5$. We would then have 
\begin{eqnarray*}
& &	\sum_{i=0}^2 \binom{2^{n-1}-1}{i}(2-1)^i \\
&=& 1+2^{n-1}-1 + \left(2^{n-1}-1\right)\left(2^{n-2}-1\right) \\
&=& 2^{2n-3}-2^{n-2}+1 \\
&>& 2^{n-d_{K_2}},
\end{eqnarray*}
which is contrary to the sphere packing bound. Thus $d_f^{\perp}\le 4$. Particularly, when $d_{K_2}=1$, the equality of the sphere packing bound can be achieved, namely, $$\sum_{i=0}^1\binom{2^{n-1}-1}{i}(2-1)^i = 2^{n-1} = 2^{n-d_{K_2}}. $$
Moreover, from the definition, $\mathcal{C}_{D(f)}^{\perp}$ has a codeword of weight three if and only if there are three distinct elements $d_1,d_2,d_3\in\mathrm{Im}(f)$ such that $d_1+d_2+d_3=0$, i.e., there exist three distinct elements $x_1,x_2,x_3\in\gf_{2^n}^{*}$ such that $f(x_i)\neq f(x_j)$ for $1\le i <j\le 3$ and $f(x_1)+f(x_2)+f(x_3)=0$. 
\end{proof}
}

 We need to recall some useful results on the Walsh transforms of quadratic functions.

\subsection{Quadratic functions and Walsh transforms}

Let $Q(x)$ be a quadratic function from $\gf_{2^n}$ to itself, i.e., it has algebraic degree $2$. Let $\varphi(x)=\tr_n(Q(x))$ and define its
associated bilinear mapping as
$$
B_{\varphi}(x,y)=\varphi(x+y)+\varphi(x)+\varphi(y).
$$ The set 
$$
V_{\varphi} = \{y \in \gf_{2^n}\,:\, \varphi(x+y)+\varphi(x)+\varphi(y)=0 \text{ for any } x \in \gf_{2^n} \}
$$
is an $\gf_{2}$-vector space and is known as the kernel of the bilinear mapping $B_{\varphi}(x,y)$. 
The rank of $\varphi(x)$ is defined by 
$$
\text{Rank}(\varphi(x)) = n - \dim_{\gf_2}(V_{\varphi}).
$$
Observe that 
\begin{equation}\label{Eq-Quad0}
\begin{split}
\left(\sum\limits_{x\in\gf_{2^n}}(-1)^{\tr_n(Q(x))}\right)^2 &=
\sum\limits_{x\in \gf_{2^n}}(-1)^{\tr_n(Q(x))} \sum\limits_{y\in \gf_{2^n}}(-1)^{\tr_n(Q(y))}
\\&=\sum\limits_{x,y\in \gf_{2^n}}(-1)^{\tr_n(Q(x+y)+Q(y))}
\\&=\sum\limits_{y\in \gf_{2^n}}(-1)^{\tr_n(Q(y))}\sum\limits_{x\in \gf_{2^n}}(-1)^{\tr_n(Q(x+y)+Q(x)+Q(y))}
\\&= 2^n \sum\limits_{y\in V_{\varphi}}(-1)^{\tr_n(Q(y))}. 
\end{split}
\end{equation} 
By the definition of the kernel $V_{\varphi}$, it is readily seen that the function $\varphi(y)=\tr_n(Q(y))$ is linear over $V_{\varphi}$. 
Then one has 
\begin{equation}\label{Eq-Quad}
\sum\limits_{x\in\gf_{2^n}}(-1)^{\tr_n(Q(x))} = \begin{cases}
\pm 2^{\frac{n+d}{2}}, & \text{if } \varphi(x)=0 \text{ for all } x \in V_{\varphi}, \\
0, & \text{otherwise,} 
\end{cases}
\end{equation} where $d$ is the dimension of $V_{\varphi}$ over $\gf_2$.

For a quadratic function $f$ from $\gf_{2^n}$ to itself, the bilinear mapping of  $\tr_n(bf(x)+ax)$ is the same as that of $\tr_n(bf(x))$ for any nonzero $b$ in $\gf_{2^n}$.
Therefore, the Walsh transform of $f$ at $(a,b)$ can be given, similar to \eqref{Eq-Quad},  as below:
\begin{equation}\label{Eq-Quad-2}
W_f(a,b) =\sum\limits_{x\in\gf_{2^n}}(-1)^{\tr_n(ax+bf(x))} = \begin{cases}
\pm 2^{\frac{n+d_{b}}{2}}, & \text{if } \tr_n(bf(x)+ax)=0 \text{ for all } x \in V_{\varphi_b}, \\
0, & \text{otherwise.} 
\end{cases}
\end{equation}
where $d_b$ is the dimension of $V_{\varphi_b}$ over $\gf_{2}$ and $V_{\varphi_b}$ is the kernel of the bilinear mapping of $\varphi_b(x)=\tr_n(bf(x))$.

In Sections 3 and 4, we will discuss the properties of linear codes defined in \eqref{Eq-GenCons1} and \eqref{Eq-GenCons2} from generalized quadratic functions.
The Walsh transform of quadratic functions $f(x)$ in \eqref{Eq-Quad-2} will be heavily used in the discussion.

\subsection{Factorization of low-degree polynomials}

The following lemma describes the factorization of a cubic polynomial over $\gf_{2^m}$. If $f$ factors over $\gf_{2^m}$ as a product of three linear factors we write $f=(1,1,1)$, if $f$ factors as a product of a linear factor and an irreducible quadratic factor we write $f=(1,2)$ and finally if $f$ is irreducible over $\gf_{2^m}$ we write $f=(3)$.

\begin{Lemma}
	\label{cubic}
	\cite{williams1975note}
	Let $f(x)=x^3+ax+b\in\gf_{2^m}[x]$ and $b\neq0$. Let $t_1,t_2$ denote the solutions of $t^2+bt+a^3=0$. Then the factorizations of $f(x)$ over $\gf_{2^m}$ are characterized as follows:
	\begin{enumerate}[(1)]
		\item $f=(1,1,1)$ if and only if $\tr_m\left(\frac{a^3}{b^2}\right)=\tr_m(1)$, $t_1,t_2$ cubes in $\gf_{2^m}$ ($m$ even), $\gf_{2^{2m}}$ (m odd);
		\item $f=(1,2)$ if and only if $\tr_m\left(\frac{a^3}{b^2}\right)\neq\tr_m(1)$;
		\item $f=(3)$ if and only if $\tr_m\left(\frac{a^3}{b^2}\right)=\tr_m(1)$, $t_1,t_2$ not cubes in $\gf_{2^m}$ ($m$ even), $\gf_{2^{2m}}$ (m odd).
	\end{enumerate}
\end{Lemma}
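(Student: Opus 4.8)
The plan is to recover the factorization type of $f$ from the action of the Frobenius automorphism $\phi:x\mapsto x^{2^m}$ on the roots of $f$, reading off that action from two invariants: a characteristic-$2$ ``Berlekamp discriminant'' (which detects the parity of the Frobenius permutation) and the Lagrange resolvents (which detect complete splitting). Since $b\neq0$, I first check that $f$ is separable: in characteristic $2$ we have $f'(x)=x^2+a$, whose only root $\sqrt{a}$ satisfies $f(\sqrt a)=b\neq0$, so $\gcd(f,f')=1$. Hence $f$ has three distinct roots $x_1,x_2,x_3\in\overline{\gf_{2^m}}$ with $x_1+x_2+x_3=0$, $x_1x_2+x_1x_3+x_2x_3=a$, $x_1x_2x_3=b$, and the types $(1,1,1)$, $(1,2)$, $(3)$ correspond respectively to $\phi$ acting as the identity, a transposition, or a $3$-cycle.

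First I would separate the type $(1,2)$ from the other two: a transposition is the unique odd permutation among these, so it suffices to detect the parity of $\phi$. For this I introduce $B=\sum_{i<j}\frac{x_ix_j}{(x_i+x_j)^2}$; it is symmetric in the roots, hence lies in $\gf_{2^m}$, and by Berlekamp's criterion $\tr_m(B)=0$ if and only if $\phi$ is an even permutation (equivalently the number of irreducible factors is $\equiv3\pmod2$, i.e. $1$ or $3$). Using $x_i+x_j=x_k$ (valid since $x_1+x_2+x_3=0$) one finds $B=\big((x_1x_2)^3+(x_1x_3)^3+(x_2x_3)^3\big)/b^2$; the pair-products $x_ix_j$ are the roots of $Y^3+aY^2+b^2$, so Newton's identities give the numerator $a^3+b^2$ and therefore $B=a^3/b^2+1$. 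Consequently $\tr_m(B)=\tr_m(a^3/b^2)+\tr_m(1)$, and the parity criterion becomes: $\phi$ is even $\iff\tr_m(a^3/b^2)=\tr_m(1)$. This yields item (2), and shows that the remaining types $(1,1,1)$ and $(3)$ both live in the regime $\tr_m(a^3/b^2)=\tr_m(1)$.

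Next I would distinguish $(1,1,1)$ from $(3)$ inside that regime by means of the Lagrange resolvents. Let $\omega$ be a primitive cube root of unity; since $3\mid2^{2m}-1$ always while $3\mid2^m-1$ exactly when $m$ is even, we have $\omega\in\gf_{2^m}$ for $m$ even and $\omega\in\gf_{2^{2m}}\setminus\gf_{2^m}$ for $m$ odd. Put $r_1=x_1+\omega x_2+\omega^2x_3$ and $r_2=x_1+\omega^2x_2+\omega x_3$. A direct expansion using $1+\omega+\omega^2=0$ and $x_1^2+x_2^2+x_3^2=(x_1+x_2+x_3)^2=0$ gives $r_1r_2=a$, while $r_1+r_2=x_2+x_3=x_1$ together with the identity $r_1^3+r_2^3=(r_1+r_2)^3+r_1r_2(r_1+r_2)=x_1^3+ax_1=b$ gives $r_1^3+r_2^3=b$; hence $r_1^3r_2^3=a^3$, so $r_1^3,r_2^3$ are exactly the two roots of $t^2+bt+a^3$, i.e. $\{t_1,t_2\}=\{r_1^3,r_2^3\}$. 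In this regime $t_1,t_2$ lie in $K:=\gf_{2^m}$ for $m$ even and in $\gf_{2^{2m}}$ for $m$ odd, matching the statement; and since the three cube roots of $t_1$ are $r_1,\omega r_1,\omega^2r_1$ with $\omega\in K$, ``$t_1$ is a cube in $K$'' is equivalent to $r_1\in K$.

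Finally I would close the equivalence. If $f=(1,1,1)$ then $x_i\in\gf_{2^m}\subseteq K$, so $r_1\in K$ and $t_1$ is a cube in $K$. Conversely, if $t_1$ is a cube in $K$ then $r_1\in K$, and via the characteristic-$2$ Cardano formulas $x_1=r_1+r_2$, $x_2=\omega^2r_1+\omega r_2$, $x_3=\omega r_1+\omega^2r_2$ (with $r_2=a/r_1$ when $a\neq0$, and the easy sub-case $a=0$, $f=x^3+b$, treated directly) all roots lie in $K$; each root then has degree $1$ or $2$ over $\gf_{2^m}$, which excludes the irreducible type $(3)$, and since we are already in the regime $\tr_m(a^3/b^2)=\tr_m(1)$ the type $(1,2)$ is excluded as well, forcing $f=(1,1,1)$. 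The contrapositive gives item (3). I expect the main obstacle to be the parity step, namely verifying Berlekamp's criterion and carrying out the resolvent computation so that the $\tr_m(1)$ twist emerges correctly, together with the descent for $m$ odd, where $t_1,t_2\notin\gf_{2^m}$ and one must argue over $\gf_{2^{2m}}$ rather than $\gf_{2^m}$.
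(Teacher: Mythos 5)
Your proposal is correct, but there is no in-paper argument to measure it against: the paper imports Lemma~\ref{cubic} verbatim from Williams \cite{williams1975note} without proof. Judged on its own merits, your argument checks out at every step: separability from $f(\sqrt a)=b\neq 0$; the evaluation $B=\bigl((x_1x_2)^3+(x_1x_3)^3+(x_2x_3)^3\bigr)/b^2=a^3/b^2+1$ via the pair-product cubic $Y^3+aY^2+b^2$ and Newton's identities; the resolvent identities $r_1r_2=a$, $r_1+r_2=x_1$, $r_1^3+r_2^3=b$, which identify $\{t_1,t_2\}=\{r_1^3,r_2^3\}$ as the roots of $t^2+bt+a^3$; the equivalence between ``$t_1$ is a cube in $K$'' and $r_1\in K$ (valid precisely because $\omega\in K$ in both parity cases, which is why the statement switches from $\gf_{2^m}$ to $\gf_{2^{2m}}$ as $m$ goes from even to odd); and the inversion $x_1=r_1+r_2$, $x_2=\omega^2 r_1+\omega r_2$, $x_3=\omega r_1+\omega^2 r_2$ closing the converse, with the degenerate case $a=0$ correctly set aside (there $t_1=0$ is trivially a cube and the condition collapses onto $t_2=b$, consistent with the lemma). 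You also implicitly need that, for $a\neq 0$, $t_1$ is a cube if and only if $t_2$ is, which follows from $t_1t_2=a^3$ since the cubes form a subgroup of $K^*$; it would be worth one sentence to say so, as the lemma's condition mentions both $t_1$ and $t_2$.

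Two ingredients are invoked rather than proved: Berlekamp's characteristic-two parity criterion ($\tr_m(B)=0$ if and only if the Frobenius is even, equivalently the number of irreducible factors is $\equiv n \pmod 2$), and the Artin--Schreier trace criterion placing $t_1,t_2$ in $\gf_{2^m}$ exactly when $\tr_m\left(a^3/b^2\right)=0$ (which is what makes your ``regime'' claim $t_1,t_2\in K$ match the statement for both parities of $m$). Both are classical, and quoting them is fair given that the paper itself quotes the entire lemma. As for the route: the resolvent/Cardano half of your proof is essentially the same mechanism Williams uses, as one can see from the companion Lemma~\ref{cubic_solution} (a root is $\epsilon+a/\epsilon$ with $\epsilon^3=t$, $t$ a root of the same quadratic $t^2+bt+a^3$); your packaging of the parity step through the Berlekamp discriminant is the distinctive choice, and its payoff is that item (2) falls out of a one-line trace computation, after which items (1) and (3) reduce cleanly to the rationality of $r_1$ over the correct field $K$.
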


\begin{Lemma}
	\label{cubic_solution}
	\cite{williams1975note}
	Let $f(x)=x^3+ax+b\in\gf_{2^m}[x]$ and $b\neq0$. Let $t$ be one solution  of $t^2+bt+a^3=0$ and $\epsilon$ be one solution of $x^3=t$. Then $r=\epsilon+\frac{a}{\epsilon}$ is a solution of $f(x)=0$.
\end{Lemma}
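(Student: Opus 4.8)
The plan is a direct substitution: I will plug $r=\epsilon+a\epsilon^{-1}$ into $f$ and show the result vanishes, using the defining relations $\epsilon^3=t$ and $t^2+bt+a^3=0$ together with characteristic-two arithmetic. Before starting I would record that $\epsilon\neq 0$, so that $a\epsilon^{-1}$ and $r$ are well defined. Since the product of the two roots of $t^2+bt+a^3=0$ equals $a^3$, if $a\neq 0$ both roots are nonzero, while if $a=0$ the roots are $0$ and $b$ and one simply selects the nonzero root $t=b$ (recall $b\neq 0$); in either case one can take $t\neq 0$, whence $\epsilon=t^{1/3}\neq 0$ and the formula makes sense in whatever extension of $\gf_{2^m}$ contains $\epsilon$.

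The computation proceeds by expanding the cube in characteristic two, using $(\alpha+\beta)^3=\alpha^3+\alpha^2\beta+\alpha\beta^2+\beta^3$. With $\alpha=\epsilon$ and $\beta=a\epsilon^{-1}$ this gives $r^3=\epsilon^3+a\epsilon+a^2\epsilon^{-1}+a^3\epsilon^{-3}$. Since $ar=a\epsilon+a^2\epsilon^{-1}$, the terms $a\epsilon$ and $a^2\epsilon^{-1}$ each occur twice and cancel over $\gf_2$, leaving $r^3+ar=\epsilon^3+a^3\epsilon^{-3}$. Substituting $\epsilon^3=t$ yields $f(r)=r^3+ar+b=t+a^3t^{-1}+b$. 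Multiplying through by the nonzero quantity $t$ gives $t\,f(r)=t^2+bt+a^3$, which vanishes because $t$ satisfies $t^2+bt+a^3=0$; dividing back by $t$ then gives $f(r)=0$, as claimed.

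There is essentially no conceptual obstacle, since the statement is an identity checked by substitution; the only place demanding care is the characteristic-two bookkeeping. One must expand the cube with the binomial coefficient $3$ reduced to $1$ modulo $2$ (so the two middle terms carry coefficient one) and confirm that the unwanted terms cancel in pairs rather than accumulate. I would also flag \emph{explicitly} that the concluding division by $t$ is legitimate precisely because a nonzero root $t$ was chosen at the outset, which is the one hypothesis hiding behind the phrase ``let $t$ be one solution''.
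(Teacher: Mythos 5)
Your proof is correct. Note that the paper itself offers no proof of this lemma --- it is quoted verbatim from Williams' 1975 paper \cite{williams1975note} as a known auxiliary fact --- so there is no in-paper argument to compare against; your direct substitution, using $(\alpha+\beta)^3=\alpha^3+\alpha^2\beta+\alpha\beta^2+\beta^3$ in characteristic two and then clearing the denominator via $t\,f(r)=t^2+bt+a^3=0$, is a complete and self-contained verification. Your preliminary remark about choosing a nonzero root $t$ (automatic when $a\neq 0$ since the roots multiply to $a^3$, and achieved by taking $t=b$ when $a=0$) is a genuine point of care that the lemma's statement glosses over, and it is exactly what is needed to make $\epsilon$ and $a/\epsilon$ well defined.
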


The following lemma characterizes the factorization of a quartic polynomial over $\gf_{2^n}.$
\begin{Lemma}
	\cite{leonard1972quartics}
	\label{quartic}
	Let $f(x)=x^4+a_2x^2+a_1x+a_0$ with $a_i\in\gf_{2^n}$ and $a_0a_1\neq0$. Let $f_1(y)=y^3+a_2y+a_1$ and $r_1,r_2,r_3$ denote roots of $f_1(y)=0$ when they exist in $\gf_{2^n}$. Set $w_i=a_0\frac{r_i^2}{a_1^2}$. Then the factorization of $f(x)$ over $\gf_{2^n}$ is characterized as follows:
	\begin{enumerate}[(1)]
		\item $f=(1,1,1,1)$ if and only if $f_1=(1,1,1)$ and $\tr_n(w_1)=\tr_n(w_2)=\tr_n(w_3)=0$;
		\item $f=(2,2)$ if and only if $f_1=(1,1,1)$ and $\tr_n(w_1)=0$, $\tr_n(w_2)=\tr_n(w_3)=1$;
		\item $f=(1,3)$ if and only if $f_1=(3)$;
		\item $f=(1,1,2)$ if and only if $f_1=(1,2)$ and $\tr_n(w_1)=0$;
		\item $f=(4)$ if and only if $f_1=(1,2)$ and $\tr_n(w_1)=1$.
	\end{enumerate} 
\end{Lemma}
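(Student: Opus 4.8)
The plan is to reduce the factorization type of the depressed quartic $f$ to two pieces of data: the factorization type of its resolvent cubic $f_1$, and trace conditions that detect when the associated quadratic factors are $\gf_{2^n}$-rational. First I would record that $f$ is separable: in characteristic $2$ one has $f'(x)=a_1\neq 0$, so $\gcd(f,f')=1$ and $f$ has four distinct roots $\alpha_1,\alpha_2,\alpha_3,\alpha_4$ in its splitting field; since $f$ has no $x^3$ term, $\alpha_1+\alpha_2+\alpha_3+\alpha_4=0$. Similarly $f_1'(y)=y^2+a_2$ vanishes only at $\sqrt{a_2}$, where $f_1(\sqrt{a_2})=a_1\neq 0$, so $f_1$ is separable as well and has three distinct roots $r_1,r_2,r_3$.

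Next I would set up the two-quadratic ansatz. Any factorization of $f$ into two monic quadratics over an extension field has the form $(x^2+bx+c)(x^2+b'x+c')$; comparing the $x^3$-coefficient forces $b+b'=0$, i.e. $b'=b$, and the remaining coefficients give $c+c'=a_2+b^2=a_1/b$ together with $cc'=a_0$. The two expressions for $c+c'$ yield $b^3+a_2b+a_1=0$, so the admissible linear coefficient $b$ is exactly a root $r_i$ of $f_1(y)=y^3+a_2y+a_1$ (here $a_1\neq 0$ forces $b\neq 0$, consistent with $f_1(0)=a_1\neq 0$). For a fixed root $r_i$ the pair $(c,c')$ consists of the two roots of $z^2+(a_1/r_i)z+a_0$, which lie in $\gf_{2^n}$ iff $r_i\in\gf_{2^n}$ and the standard splitting criterion $\tr_n\!\left(a_0/(a_1/r_i)^2\right)=\tr_n(w_i)=0$ holds. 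Hence $f$ admits a factorization into two quadratics over $\gf_{2^n}$ precisely when some $r_i\in\gf_{2^n}$ satisfies $\tr_n(w_i)=0$.

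The second ingredient is the Galois correspondence. Since $\gf_{2^n}$ is finite, its Galois action is generated by the Frobenius $\phi:x\mapsto x^{2^n}$, and the factorization type of $f$ (resp. $f_1$) equals the cycle type of $\phi$ on $\{\alpha_i\}$ (resp. on the pair-sums $\beta_1=\alpha_1+\alpha_2$, $\beta_2=\alpha_1+\alpha_3$, $\beta_3=\alpha_1+\alpha_4$). I would first verify, using $\sum\alpha_i=0$ and the elementary symmetric functions, that $\beta_1,\beta_2,\beta_3$ are exactly $r_1,r_2,r_3$. Then I would tabulate the induced action: the identity and a double transposition fix all $\beta_j$, a single transposition fixes one $\beta_j$ and swaps the other two, a $4$-cycle likewise induces a transposition, and a $3$-cycle induces a $3$-cycle. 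This gives the coarse dictionary $f=(1,3)\Leftrightarrow f_1=(3)$, $\{(1,1,2),(4)\}\Leftrightarrow f_1=(1,2)$, and $\{(1,1,1,1),(2,2)\}\Leftrightarrow f_1=(1,1,1)$; in particular item (3) follows immediately.

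Finally I would separate the two ambiguous families by the trace conditions. When $f_1=(1,2)$ there is a unique $\gf_{2^n}$-rational root $r_1$, and by the ansatz $f$ factors into two quadratics over $\gf_{2^n}$ iff $\tr_n(w_1)=0$; since $(1,1,2)$ admits such a factorization while the irreducible $(4)$ does not, this yields items (4) and (5). When $f_1=(1,1,1)$ all three roots are $\gf_{2^n}$-rational, so the number of trace-zero $w_i$ equals the number of $\gf_{2^n}$-rational groupings of the four roots into two quadratics: if $f=(1,1,1,1)$ every pairing descends (all $\alpha_i\in\gf_{2^n}$), forcing $\tr_n(w_1)=\tr_n(w_2)=\tr_n(w_3)=0$, whereas if $f=(2,2)$ only the conjugate pairing descends and the two cross-pairings fail because their constant terms $\alpha_i\alpha_j$ leave $\gf_{2^n}$, forcing exactly one trace to vanish; relabeling gives items (1) and (2). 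The main obstacle is exactly this last bookkeeping: one must argue carefully that for $(2,2)$ the cross-pairings genuinely fail to be $\gf_{2^n}$-rational, so that precisely one, rather than all three, of the $\tr_n(w_i)$ vanish, which is where separability and the distinctness of the $\beta_j$ are essential.
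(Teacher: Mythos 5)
Your plan is sound, and there is nothing in the paper to compare it against: the paper states this lemma as a quoted result from \cite{leonard1972quartics} and gives no proof of its own, so your argument stands on its own merits (and is, in essence, the classical resolvent-cubic argument from that reference). All the individual steps check out: $f'=a_1\neq 0$ and $f_1(\sqrt{a_2})=a_1\neq 0$ give separability; the ansatz $f=(x^2+bx+c)(x^2+bx+c')$ correctly forces $b^3+a_2b+a_1=0$, $c+c'=a_1/b$, $cc'=a_0$, and the Artin--Schreier criterion for $z^2+(a_1/r_i)z+a_0$ produces exactly $\tr_n(w_i)=\tr_n\bigl(a_0r_i^2/a_1^2\bigr)$; the pair-sums $\beta_j$ are indeed the roots of $f_1$ (one checks $e_1(\beta)=0$, $e_2(\beta)=a_2$ using $\sum\alpha_i=0$, and $e_3(\beta)=f'(\alpha_1)=a_1$), and your cycle-type table for the induced $S_4\to S_3$ action is correct, with kernel the Klein four-group, which yields the coarse dictionary and item (3) immediately.

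The one step you flag as the main obstacle --- that in the $(2,2)$ case the two cross-pairings genuinely fail --- is true and closes in one line, so it is not a real gap: if $\phi$ acts as $(\alpha_1\alpha_2)(\alpha_3\alpha_4)$ and the cross-constant $\alpha_1\alpha_3$ were rational, then $\phi$-invariance gives $\alpha_1\alpha_3=\alpha_2\alpha_4$, while $\alpha_1+\alpha_3=\alpha_2+\alpha_4$ holds automatically from $\sum\alpha_i=0$; hence $\{\alpha_1,\alpha_3\}$ and $\{\alpha_2,\alpha_4\}$ are the root sets of the same quadratic $z^2+\beta_2 z+\alpha_1\alpha_3$, forcing $\{\alpha_1,\alpha_3\}=\{\alpha_2,\alpha_4\}$ and contradicting separability. (Note also that the two roots of $z^2+(a_1/r_i)z+a_0$ are always distinct since $a_1\neq 0$, so ``not both rational'' is equivalent to $\tr_n(w_i)=1$, which is what your trace bookkeeping needs.) With that line inserted, the five cases $(1,1,1,1)$, $(2,2)$, $(1,3)$, $(1,1,2)$, $(4)$ map to mutually exclusive pairs of resolvent type and trace pattern, and all five biconditionals of the lemma follow.
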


\section{Binary linear codes from known two-to-one trinomials and quadrinomials}

In this section, we will propose several binary codes with few weights, which are constructed from known two-to-one functions .
We first recall some two-to-one functions  recently obtained in \cite{li2019further}.

\begin{Lemma}
	\cite{li2019further}
	\label{2-1-1}
	Let $f(x)=x^{\frac{2^{n-1}+2^m-1}{3}}+x^{2^m}+\omega x \in\gf_{2^n}[x]$, where $n=2m$, $m$ is odd and $\omega\in\gf_{2^2}\backslash\gf_2$. Then $f(x)$ is two-to-one over $\gf_{2^n}$.
\end{Lemma}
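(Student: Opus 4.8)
The plan is to use the paper's own ``generalized quadratic'' viewpoint: replace $f$ by a genuinely quadratic function through a monomial substitution, and then prove two-to-one-ness of the resulting quadratic map. Since $x\mapsto x^e$ is a bijection of $\gf_{2^n}$ whenever $\gcd(e,2^n-1)=1$, the function $f$ is two-to-one if and only if $g(x):=f(x^e)$ is two-to-one for any such $e$. First I would record that $m$ odd gives $2^m\equiv -1\pmod 3$, so $3\mid 2^m+1\mid 2^n-1$; this makes $e_0:=\frac{2^{n-1}+2^m-1}{3}$ an integer and forces $\overline{\omega}:=\omega^{2^m}=\omega^2=\omega+1$ for $\omega\in\gf_{2^2}\setminus\gf_2$, where $\overline{z}:=z^{2^m}$ denotes conjugation over $\gf_{2^m}$. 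Then I would exhibit an $e$ coprime to $2^n-1$ for which $g$ has algebraic degree $2$. A natural candidate is $e=2^{m+2}+2$; since $m$ is odd one checks $3\nmid 2^m-1$, which yields $\gcd(e,2^n-1)=1$, and a routine reduction of exponents modulo $2^n-1$ shows $e_0e\equiv 1$, $2^me\equiv 2^{m+1}+2^2$ and $e\equiv 2^{m+2}+2$, each of $2$-adic weight at most two. Hence $g(x)=x+x^{2^{m+1}+2^2}+\omega\,x^{2^{m+2}+2}$ is quadratic, and in the conjugate notation it reads compactly as $g(x)=x+A(x)+\omega\,\overline{A(x)}$ with $A(x)=x^{2^{m+1}+4}$, a form well suited to working over $\gf_{2^m}$.

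Next I would reduce two-to-one-ness of the quadratic $g$ to a root-count. Let $B(x,y)=g(x+y)+g(x)+g(y)$ be the associated symmetric bilinear form; the linear part of $g$ cancels, so $B$ is determined by $A$ alone. For a fixed $y$, substituting $x=y+t$ gives $g(y+t)+g(y)=g(t)+B(y,t)$, so the fibre of $g$ through $y$ is $\{\,y+t : \phi_y(t)=0\,\}$, where $\phi_y(t):=g(t)+B(y,t)$ is itself quadratic in $t$ with $\phi_y(0)=0$. Thus $g$ is two-to-one exactly when, for every $y\in\gf_{2^n}$, the equation $\phi_y(t)=0$ has precisely two solutions, namely $t=0$ and a unique nonzero $t_y$; equivalently, collisions of $g$ must define a fixed-point-free involution on $\gf_{2^n}$. (Note the radical of $B$ is trivial here, so there is no universal ``period'' and the partner $t_y$ genuinely depends on $y$.)

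Finally I would establish this count. Expanding $\phi_y(t)=0$ in the conjugate form gives $t+\big(A(t)+B_A(y,t)\big)+\omega\,\overline{\big(A(t)+B_A(y,t)\big)}=0$ with $B_A(y,t)=y^{2^{m+1}}t^{4}+y^{4}t^{2^{m+1}}$; separating this single $\gf_{2^n}$-identity into its $\gf_{2^m}$-``real'' and ``imaginary'' components (using $\overline{\omega}=\omega+1$) turns it into a pair of equations over $\gf_{2^m}$, which I would combine and clear to a cubic or quartic in one variable. Its number of roots is then read off from the factorization criteria in Lemmas \ref{cubic} and \ref{quartic}, and I would show that for every $y$ exactly one admissible nonzero root survives. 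The main obstacle is precisely this last step: controlling the root count uniformly in $y$ — equivalently, controlling the rank of the linear map $t\mapsto B(y,t)$ and ruling out fibres of size larger than two. It is here that the hypotheses $n=2m$, $m$ odd, and $\omega\in\gf_{2^2}\setminus\gf_2$ are essential, since they are exactly what pin down $\overline{\omega}=\omega+1$ and make the discriminant-type trace conditions of the factorization lemmas come out in favour of a single surviving root.
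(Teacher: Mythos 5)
This lemma is not proved in the paper at all: it is imported verbatim from \cite{li2019further}, so there is no internal proof to compare against. Your preparatory steps do coincide with machinery the paper uses elsewhere, namely in the proof of Theorem \ref{Th_n=2m}: the same exponent $e=2^{m+2}+2$, the same Euclidean identity showing $e_0e\equiv 1 \pmod{2^n-1}$, and the same reduced quadratic $f\left(x^{e}\right)=x+x^{2^{m+1}+4}+\omega x^{2^{m+2}+2}$. But in the paper that substitution serves only to compute Walsh transforms of a function already known to be two-to-one; it is not, by itself, a route to the two-to-one property.

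The genuine gap is that your proposal stops exactly where the proof has to begin. Reformulating two-to-one-ness as ``for every $y\in\gf_{2^n}$ the quadratic $\phi_y(t)=g(t)+B(y,t)$ has precisely the two roots $t=0$ and one nonzero $t_y$'' is correct but is only a restatement of the claim; the entire content of the lemma is the uniform root count, which you explicitly defer as ``the main obstacle.'' Moreover, the claimed reduction to ``a cubic or quartic in one variable'' handled by Lemmas \ref{cubic} and \ref{quartic} is asserted, not demonstrated: writing $t=t_1+\omega t_2$ with $t_1,t_2\in\gf_{2^m}$ and splitting along the basis $\{1,\omega\}$ turns $\phi_y(t)=0$ into a pair of coupled bivariate equations over $\gf_{2^m}$ (the quadratic part contributes degree-six terms such as $t_1^6+t_2^6$ and $t_1^4t_2^2+t_1^2t_2^4+t_2^6$, plus $y$-dependent bilinear terms), and no elimination down to a univariate cubic or quartic is exhibited, nor any reason given why the trace-type criteria of those lemmas would come out favourably for every one of the $2^n$ values of $y$ simultaneously. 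The side remark that the radical of $B$ is trivial is also unsubstantiated. As it stands, the proposal is a correct and sensible reformulation of the problem in the paper's own language, but not a proof.
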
 

\begin{Lemma}
	\label{2-1-2}
	\cite{li2019further}
	Let $n=2m+1$. Then the following quadrinomials are all two-to-one over $\gf_{2^n}$:
	\begin{enumerate}[(1)]
		\item $f(x)=x^{2^{m+1}+2}+x^{2^{m+1}}+x^2+x$;
		\item $f(x)=x^{2^{m+1}+2}+x^{2^{m+1}+1}+x^2+x$;
		\item $f(x)=x^{2^{m+2}+4}+x^{2^{m+1}+2}+x^2+x$;
		\item $f(x)=x^{2^n-2^{m+1}+2}+x^{2^{m+1}}+x^2+x$.
	\end{enumerate}
\end{Lemma}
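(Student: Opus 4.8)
The plan is to treat the four quadrinomials in two groups according to their algebraic degree. A direct inspection of the exponents shows that polynomials (1)--(3) all have algebraic degree $2$: in each case every exponent is either a power of $2$ (a linearized term) or has binary weight $2$; e.g.\ in (1) the exponent $2^{m+1}+2$ has weight $2$ while $2^{m+1},2,1$ are linearized. Hence (1)--(3) are quadratic functions and the machinery of Section~2.2 applies verbatim. By contrast, the leading exponent of (4), namely $2^n-2^{m+1}+2$, has binary weight $m+1$, so (4) is genuinely non-quadratic once $m\ge 2$ and must be handled separately.

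For the quadratic cases I would fix $s\in\gf_{2^n}^*$ and study the collision equation $f(x)+f(x+s)=0$. Since $f$ is quadratic, $f(x)+f(x+s)=f(s)+B_f(x,s)$, where $B_f(x,s)=f(x+s)+f(x)+f(s)$ is $\gf_2$-linear in $x$; so the collision equation is affine in $x$. Taking (1) as the model, one computes $B_f(x,s)=s^2x^{2^{m+1}}+s^{2^{m+1}}x^2$, and after the substitution $u=x^2$, $u=s^2w$ the equation collapses to the Artin--Schreier form $w^{2^m}+w=g(s)$ with $g(s)=1+s^{-2}+s^{-2^{m+1}}+s^{-(2^{m+1}+1)}$. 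Because $\gcd(m,2m+1)=1$, the map $w\mapsto w^{2^m}+w$ has kernel exactly $\gf_2$, so this equation has either $0$ or $2$ solutions, the solvable case being characterised by $\tr_n(g(s))=0$; the two $w$-roots, differing by $1$, recombine into the single unordered pair $\{x,x+s\}$. This is precisely the $d_b=1$ instance of \eqref{Eq-Quad-2}, and the analogous bilinear-form computation should go through for (2) and (3) after recording their respective forms.

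The step I expect to be the main obstacle in the quadratic case is upgrading this local statement---each difference $s$ supports at most one collision pair---to the global claim that $f$ is exactly two-to-one. One must show that the collision pairs indexed by the solvable $s$ form a perfect matching of $\gf_{2^n}$: the admissible $s$ yield pairwise difference-distinct pairs, and it remains to exclude a fibre of size $\ge 3$, equivalently to show no element lies in two such pairs. I would attack this by combining the Walsh constraint of \eqref{Eq-Quad-2}---the rank of $\tr_n(bf(x))$ equals $n-1$ for every $b\neq 0$, forcing $W_f(b)\in\{0,\pm 2^{(n+1)/2}\}$ and hence $\sum_{c}|f^{-1}(c)|^2=2^n+2M$ with $M=|\{b\neq 0:W_f(b)\neq 0\}|$---with the difference-distinctness; pinning down $M=2^{n-1}$ and deducing that every fibre then has size exactly $2$ is the crux of the argument.

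Finally, for the non-quadratic polynomial (4) the plan is to abandon the bilinear reduction and instead solve $f(x)=f(y)$ directly. Writing $s=x+y$, $p=xy$ and using $x^{2^n-2^{m+1}+2}=x^3x^{-2^{m+1}}$ on $\gf_{2^n}^*$, I would clear denominators and symmetrise to obtain a low-degree equation in the elementary symmetric functions; I expect this to reduce to a cubic or quartic whose root count can be read off from Lemma~\ref{cubic} and Lemma~\ref{quartic}. The two-to-one property then follows by verifying, through the trace criteria in those lemmas, that each admissible $(s,p)$ contributes exactly one valid pair. The hard part here is carrying out the symmetrisation cleanly and checking the trace conditions of the factorisation lemmas; as an alternative I would try to exhibit a monomial $x\mapsto x^e$ with $\gcd(e,2^n-1)=1$ transporting (4) to one of (1)--(3), since the two-to-one property is invariant under composition with a permutation.
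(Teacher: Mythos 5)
A preliminary remark: the paper never proves Lemma~\ref{2-1-2} at all --- it is imported from \cite{li2019further} --- so your proposal can only be measured against the method the paper itself uses when it \emph{does} establish two-to-one-ness, namely the Section~4 theorems, whose proofs fix $a\in\gf_{2^n}$ and show that the fibre equation $f(x+a)+f(a)=0$ has exactly two solutions in $x$. Measured against that, your plan for the quadratic cases (1)--(3) has a genuine gap, located exactly where you place the ``crux'', and the tools you propose cannot close it. Your local analysis is correct: for (1), the equation $f(x)+f(x+s)=0$ reads $s^2x^{2^{m+1}}+s^{2^{m+1}}x^2=f(s)$, whose solution set is either empty or a coset of $\{0,s\}$, so each difference $s$ carries at most one collision pair. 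But this statement, together with the moment identity $\sum_{c}|f^{-1}(c)|^2=2^n+2M$ and even with the exact value $M=2^{n-1}$, does \emph{not} force $f$ to be two-to-one. A fibre profile consisting of three fibres of size $1$, one fibre of size $3$, and $2^{n-1}-3$ fibres of size $2$ gives $\sum_c|f^{-1}(c)|^2=3+9+4\left(2^{n-1}-3\right)=2^{n+1}$, i.e.\ exactly $M=2^{n-1}$, and it is compatible with your local condition provided the $3+(2^{n-1}-3)=2^{n-1}$ collision differences are pairwise distinct: each of the three differences inside the size-$3$ fibre then supports precisely the one pair lying in that fibre. Since nothing in the first two Walsh moments or in the per-difference counts separates this profile from the all-twos profile, ``pinning down $M=2^{n-1}$'' is a dead end, not merely an unfinished computation. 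To conclude two-to-one-ness one must count fibres, not differences: fix $a$ and show that $f(x+a)+f(a)=0$, equivalently $f(x)+B_f(x,a)=0$, has exactly two roots for every $a$. That equation retains the quadratic part of $f$ (it is not affine), and handling it is where the real work lies --- in the paper's own Section~4 two-to-one proofs this is done with the help of Lemma~\ref{cubic} and Lemma~\ref{quartic} --- work which your switch to the affine difference equation avoids at the cost of this unbridgeable globalization gap. A secondary inaccuracy: the Walsh premise $W_f(b)\in\{0,\pm 2^{(n+1)/2}\}$ for all $b\neq 0$, true for (1), fails for (2) and (3) at $b=1$, where the form $\tr_n(bf(x))$ degenerates and $W_f(1)=2^n$ (cf.\ the proofs of Theorems~\ref{Th_n=2m+1_3} and \ref{Th_n=2m+1_4}); so the computation does not transfer ``verbatim'' even on its own terms.

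For (4), discard your primary symmetrisation plan; your fallback idea is the right one and can be made exact. Since $\gcd\left(2^m+1,2^n-1\right)=1$ for $n=2m+1$, and $\left(2^n-2^{m+1}+2\right)\left(2^m+1\right)\equiv 2^m+2 \pmod{2^n-1}$ (the very congruence used in the proof of Theorem~\ref{Th_n=2m+1_2}), one computes $f\left(x^{2^m+1}\right)=x^{2^m+2}+x^{2^{m+1}+1}+x^{2^{m+1}+2}+x^{2^m+1}=\left(x^2+x\right)^{2^m+1}$. The right-hand side is the two-to-one map $x\mapsto x^2+x$ followed by a power permutation, hence two-to-one, and pre-composing $f$ with the permutation $x\mapsto x^{2^m+1}$ does not change fibre sizes; so (4) follows in two lines. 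Note that the transport lands not on one of (1)--(3), as you guessed, but in the Section~5 family $\left(x^{2^t}+x\right)^e$, which the paper declares two-to-one for exactly this reason.
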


\begin{Lemma}
	\label{2-1-3}
	\cite{li2019further}
	Let $n=3m$. Then the following quadrinomials are two-to-one over $\gf_{2^n}$:
	\begin{enumerate}[(1)]
		\item $f(x)=x^{2^{2m}+1}+x^{2^{m+1}}+x^{2^m+1}+x$ with $m\not\equiv 1\pmod 3$;
		\item $f(x)=x^{2^{2m}+2^m}+x^{2^{2m}+1}+x^{2^m+1}+x$.
	\end{enumerate}
\end{Lemma}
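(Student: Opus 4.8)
The plan is to use that both quadrinomials have algebraic degree $2$ (every exponent has binary weight at most $2$), so that the quadratic machinery of Section~2.2 applies and collisions are governed by a bilinear form. By definition $f$ is two-to-one exactly when every fiber has size $0$ or $2$; since $f(0)=0$ and $|\gf_{2^n}|$ is even, this is equivalent to proving that for each $x\in\gf_{2^n}$ there is a \emph{unique} nonzero $s$ with $f(x+s)=f(x)$. Writing $B(x,s)=f(x+s)+f(x)+f(s)$ for the associated symmetric bilinear form, the collision condition $f(x+s)=f(x)$ is, in characteristic $2$, equivalent to $B(x,s)=f(s)$. Thus the whole problem reduces to counting, for each $x$, the nonzero solutions $s$ of $B(x,s)=f(s)$, which is a single quadratic equation in $s$.

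First I would compute $B$ for the polynomial in (1). A direct expansion of the two degree-two terms $x^{2^{2m}+1}$ and $x^{2^m+1}$ collapses the cross terms into the relative trace $T:=\tr_{n/m}$, giving $B(x,s)=s\,T(x)+x\,T(s)$, and a parallel computation yields the useful identity $T(f(x))=T(x)^2+T(x)$. The collision equation then reads $s\,T(x)+x\,T(s)=f(s)$. Applying $T$ to both sides and using $T(\alpha\beta)=\beta\,T(\alpha)$ for $\beta\in\gf_{2^m}$ shows that the left-hand side always lies in $\ker T$, whereas $T(f(s))=T(s)(T(s)+1)$ vanishes only when $T(s)\in\{0,1\}$. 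This already forbids collisions in every direction $s$ with $T(s)\notin\gf_2$ and reduces the analysis to the two affine hyperplanes $T(s)=0$ and $T(s)=1$. On each of these, with $x$ fixed, $B(x,s)=f(s)$ is a quadratic equation in $s$; I would count its roots by isolating the $\gf_{2^m}$-linear part from the genuinely quadratic part and applying the rank formula in \eqref{Eq-Quad} inside the degree-$3$ tower $\gf_{2^m}\subset\gf_{2^n}$.

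For the polynomial in (2) I would follow the same route, but here $B(x,s)$ does not contract as neatly; expressing it through the three conjugates $s,\,s^{2^m},\,s^{2^{2m}}$ and eliminating $x$ turns the collision equation into a low-degree equation over the subfield, whose root count is precisely what Lemmas~\ref{cubic} and~\ref{quartic} are designed to control. In both parts the length and image-size bookkeeping ($|\mathrm{Im}(f)|=2^{n-1}$) should follow once the fiber structure is pinned down.

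The hard part will be the final, uniform solution count: I must show that for \emph{every} $x$ the quadratic-in-$s$ equation has exactly one nonzero root, i.e.\ that no fiber ever reaches size $4$. A single direction $s$ can contribute many colliding pairs without violating the two-to-one property, so a naive count is misleading; the delicate point is to rule out any $x$ possessing two distinct partner directions. This is exactly where the hypothesis $m\not\equiv 1\pmod 3$ in part~(1) should enter, through cube/root conditions attached to the factor $2^{2m}+2^m+1$ of $2^{3m}-1$ (equivalently, to the copy of $\gf_{2^3}$ inside $\gf_{2^{3m}}$): the congruence on $m$ is what forbids the spurious extra solutions and forces the fibers to have size exactly $2$.
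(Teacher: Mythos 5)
First, a point of comparison: the paper does not prove this lemma at all --- it is quoted verbatim from \cite{li2019further} and used as a black box, so there is no internal proof to measure your attempt against. Judged on its own, your write-up is a plan rather than a proof, and the plan has a genuine hole exactly where the mathematical content lies. Your opening reductions are correct and worth keeping: two-to-one is equivalent to the statement that for every $x$ there is a unique nonzero $s$ with $B(x,s)=f(s)$, where $B(x,s)=f(x+s)+f(x)+f(s)$; for part (1) the identities $B(x,s)=s\,T(x)+x\,T(s)$ and $T(f(x))=T(x)^2+T(x)$ (with $T=\tr_{n/m}$) are right, and applying $T$ does confine collision directions to the two cosets $T(s)=0$ and $T(s)=1$. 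But everything after that is deferred. The claim that you would ``count roots \ldots by applying the rank formula in \eqref{Eq-Quad}'' does not go through as stated: \eqref{Eq-Quad} evaluates character sums $\sum_x(-1)^{\tr_n(Q(x))}$ of $\gf_2$-valued quadratic forms, whereas $B(x,s)=f(s)$ is an equation valued in $\gf_{2^n}$. To count its solutions by characters you would have to average $(-1)^{\tr_n(\lambda(B(x,s)+f(s)))}$ over all $\lambda\in\gf_{2^n}$ and control a two-parameter family of quadratic forms uniformly in $x$ --- which is precisely the ``hard part'' your last paragraph admits is missing. Since a fiber of size $4$ is not excluded by any soft argument (your own text concedes this), the lemma is simply not proved.

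Two further concrete defects. For part (2) you offer no computation at all; invoking Lemmas \ref{cubic} and \ref{quartic} by name is not an argument, and the paper's own proofs of the analogous new two-to-one results (Section 4) show how much work this step actually requires: pass to the conjugates $y=x^{2^m}$, $z=x^{2^{2m}}$, sum the three conjugate equations to get a cubic in $t=\tr_{n/m}(x)$, solve it via Lemmas \ref{cubic} and \ref{cubic_solution}, reduce to a quartic in $x$, apply Lemma \ref{quartic}, and then separately rule out that both roots of the quartic satisfy the side constraint. Finally, your proposed mechanism for the hypothesis $m\not\equiv 1\pmod 3$ is wrong as stated: since $3\mid n$, the subfield $\gf_{2^3}$ sits inside $\gf_{2^{3m}}$ for \emph{every} $m$, so the congruence cannot be ``equivalent'' to the presence of that subfield. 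Where such a condition genuinely enters (compare the proof of Theorem \ref{Th_n=3m} in this paper, which needs $m\equiv 0\pmod 3$) is through cube conditions in Lemma \ref{cubic}: whether specific elements, e.g.\ cube roots of unity times explicit quantities, are cubes in $\gf_{2^m}$ ($m$ even) or $\gf_{2^{2m}}$ ($m$ odd), which is what decides between the factorization types $(1,1,1)$ and $(3)$. Until you carry out the solution count on the two cosets for part (1), produce the elimination argument for part (2), and pin down the exact role of the congruence, this remains an outline with the decisive steps absent.
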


Below we shall investigate the parameters of the constructed linear codes $\mathcal{C}_f$ and $\mathcal{C}_{D(f)}$.
According to different forms of $n$ in Lemmas \ref{2-1-1} - \ref{2-1-3}, we divide them into three subsections.

\subsection{The case $n=2m$}

The following binary linear code is derived from the two-to-one polynomial in Lemma \ref{2-1-1}.

\begin{table}[!t]
	\centering
	\caption{The weight distribution of the codes $\mathcal{C}_{f}$ in Theorem \ref{Th_n=2m}}	\label{Th_n=2m_table1}
	\begin{tabular}{cc}
		\hline
		Weight & Multiplicity \\
		\hline
		$0$ & $1$\\
		$2^{n-1}-2^{m}$ & $2^{4m-3}+2^{3m-2}-2^{2m-3}-2^{m-2}$  \\
		$2^{n-1}$ & $3\cdot2^{4m-2} +2^{2m-2} - 1 $ \\
		$2^{n-1}+2^{m}$ & $2^{4m-3} + 2^{m-2}- 2^{3m-2} - 2^{2m-3}$\\
		\hline
	\end{tabular}
\end{table}

\begin{table}[!t]
	\centering
	\caption{The weight distribution of the codes $\mathcal{C}_{D(f)}$ in Theorem \ref{Th_n=2m}}	\label{Th_n=2m_table2}
	\begin{tabular}{cc}
		\hline
		Weight & Multiplicity \\
		\hline
		$0$ & $1$\\
		$2^{n-2}-2^{m-1}$ & $2^{n-3}+2^{{m-2}}$  \\
		$2^{n-2}$ & $3\cdot 2^{n-2}-1$ \\
		$2^{n-2}+2^{m-1}$ & $2^{n-3}-2^{{m-2}}$\\
		\hline
	\end{tabular}
\end{table}

\begin{Th}
	\label{Th_n=2m}
	Let $f(x)=x^{\frac{2^{n-1}+2^m-1}{3}}+x^{2^m}+\omega x \in\gf_{2^n}[x]$ with $n=2m$, where $m>1$ is odd and $\omega\in\gf_{2^2}\backslash\gf_2$. 
	Define two linear codes $\mathcal{C}_{f}$ and $\mathcal{C}_{D(f)}$ as in \eqref{Eq-GenCons1} and \eqref{Eq-GenCons2}, respectively. 
	Then, 
	\begin{enumerate}[(1)]
		\item  $\mathcal{C}_{f}$ is a $\left[2^{n}-1, 2n\right]$ binary linear code with weight distribution in Table \ref{Th_n=2m_table1}.
		\item  $\mathcal{C}_{D(f)}$ is a $\left[2^{n-1}-1, n\right]$ binary linear code with weight distribution in Table \ref{Th_n=2m_table2}.
	\end{enumerate}
\end{Th}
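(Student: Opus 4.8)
The plan is to pass from $f$ to its associated quadratic function and then reduce the whole statement to a single root-counting problem. Since $f$ is generalized quadratic, I would first pin down the exponent $e$ with $\gcd(e,2^n-1)=1$ for which $g(y):=f(y^e)$ is quadratic; as the two linear monomials $x^{2^m}$ and $\omega x$ become $y^{2^m e}$ and $\omega y^{e}$, the $2$-weight of $e$ must equal $2$ (weight $1$ would make $y\mapsto y^e$ a Frobenius automorphism, leaving $f$ non-quadratic). Because $y\mapsto y^{e}$ is a bijection of $\gf_{2^n}$, the substitution $x=y^{e}$ in \eqref{Wfab} gives
\begin{equation*}
W_f(a,b)=\sum_{y\in\gf_{2^n}}(-1)^{\tr_n\!\left(a y^{e}+b\,g(y)\right)} .
\end{equation*}
Here $\mathrm{wt}(e)=2$ makes $a y^{e}$ itself a quadratic form, so $Q_{a,b}(y):=a y^{e}+b\,g(y)$ is quadratic for every $(a,b)$ and \eqref{Eq-Quad} applies directly: $W_f(a,b)\in\{0,\pm 2^{(n+d_{a,b})/2}\}$, where $d_{a,b}=\dim_{\gf_2}V_{a,b}$ and $V_{a,b}$ is the kernel of the bilinear form of $\tr_n(Q_{a,b})$. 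Since $2^{m+1}=2^{(n+2)/2}$, everything reduces to showing that $d_{a,b}=2$ whenever $W_f(a,b)\neq 0$; in particular the rank $d_{a,b}=0$ (which would contribute the forbidden value $\pm 2^{m}$) must never occur for $b\neq 0$.

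The decisive and hardest step is this rank computation. Writing the bilinear form as $B_{a,b}(y,z)=\tr_n(y\,L_{a,b}(z))$, with the linearized polynomial $L_{a,b}$ obtained by transporting each $y^{2^k}$ onto $y$ via $\tr_n(y^{2^k}u)=\tr_n(y\,u^{2^{-k}})$, one has $V_{a,b}=\{z:L_{a,b}(z)=0\}$. For this family $L_{a,b}$ collapses to two terms, $L_{a,b}(z)=C_1z^{2^{s}}+C_2z^{2^{t}}$ with $C_1,C_2$ explicit in $a,b,\omega$; substituting $w=z^{2^{t}}$ turns $L_{a,b}(z)=0$ into $w\bigl(C_1 w^{2^{s-t}-1}+C_2\bigr)=0$, where $\gcd(2^{s-t}-1,2^n-1)=3$ (reflecting $\gcd(3,2^n-1)=3$). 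Hence in the non-degenerate case $|V_{a,b}|\in\{1,4\}$, and $d_{a,b}=2$ holds precisely when $C_2/C_1$ is a cube in $\gf_{2^n}^{*}$. The key lemma I would prove is therefore: for all $a$ and all $b\neq0$, either $C_1=C_2=0$ (so $\tr_n(Q_{a,b})$ is affine and $W_f=0$), or $C_2/C_1$ is a nonzero cube. For $a=0$ this is clean, since a short computation produces a relation between $C_1$ and $C_2$ under the Frobenius $z\mapsto z^{2^m}$ (reflecting that $g$ is built from $\omega\in\gf_{2^2}$ and the subfield $\gf_{2^m}$) that forces $C_2/C_1$ into the group of cubes; the genuine difficulty is $a\neq0$, where the perturbation $ay^{e}$ breaks this identity and one must instead show directly that $C_1$ and $C_2$ lie in the same cubic-residue class, i.e.\ $C_1^{(2^n-1)/3}=C_2^{(2^n-1)/3}$. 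I expect to settle this either by the cubic-factorization criterion of Lemma \ref{cubic} applied to $w^{3}+C_2/C_1$, or by exhibiting an explicit cube $\lambda$ with $C_1=\lambda C_2$; this is the main obstacle of the proof.

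Once $d_{a,b}=2$ is established in the non-degenerate case, $W_f(a,b)\in\{0,\pm 2^{m+1}\}$ for all $a$ and all $b\neq0$, while for $b=0$ one has $W_f(a,0)=2^{n}$ iff $a=0$. As $2^{m+1}<2^{n}$ for $m>1$, the only pair with $W_f=2^{n}$ is $(0,0)$, so $K_1=\{(0,0)\}$ and $K_2=\{0\}$ in \eqref{dimensionK1} and \eqref{dimension}; thus $d_{K_1}=d_{K_2}=0$ and the dimensions are $2n$ and $n$. For $\mathcal{C}_f$ I then insert $v_1,v_2,v_3=2^{m+1},0,-2^{m+1}$ and $d_{K_1}=0$ into the power-moment system \eqref{Eq-WtDist1}, solve the resulting $3\times3$ linear system for the multiplicities $X_1,X_2,X_3$, and translate through $\mathrm{wt}=2^{n-1}-\tfrac12W_f$ from \eqref{WtDist-Relation1} to obtain Table \ref{Th_n=2m_table1}. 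For $\mathcal{C}_{D(f)}$ the weights are $\mathrm{w}_i=2^{n-2}-v_i/4$ by \eqref{wt_new}, and the first three Pless power moments \eqref{Eq-WtDist2} give a $3\times3$ system whose solution is Table \ref{Th_n=2m_table2}. As an independent check on the counts I would verify $\sum_{b\in\gf_{2^n}}W_f(0,b)^{2}=2^{2n+1}$, which is immediate from $g$ being two-to-one.
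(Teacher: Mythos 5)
Your overall strategy coincides with the paper's: substitute $x=y^{e}$ with an exponent $e$ of $2$-weight two (the paper takes $e=2^{m+2}+2$, so that $f_1(y)=f(y^{e})=y+y^{2^{m+1}+4}+\omega y^{2^{m+2}+2}$ and $ay^{e}$ are both quadratic), analyse the kernel of the associated bilinear form via \eqref{Eq-Quad}, and finish with the moment systems \eqref{Eq-WtDist1} and \eqref{Eq-WtDist2}. The problem is that the step you yourself call ``the main obstacle of the proof'' --- the lemma that $C_2/C_1$ is always a nonzero cube, so that $d_{a,b}=2$ whenever the form is non-degenerate --- is never proved; it is only announced together with two candidate strategies (``I expect to settle this either by \dots or by \dots''). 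That lemma \emph{is} the substance of the theorem, so the proposal is incomplete exactly where it matters. Moreover, the difficulty is misdiagnosed. Carrying out the computation for $Q(x)=ax^{2^{m+2}+2}+bf_1(x)$ gives $B_{\varphi_{a,b}}(x,y)=\tr_n\left(L_{a,b}(y)x^{2^{m+2}}\right)$ with $L_{a,b}(y)=\Delta^{2}y^{8}+\Delta^{2^{m}}y^{2}$, where $\Delta=b^{2^{m}}\omega^{2}+b+a^{2^{m}}$. Both coefficients are powers of the \emph{same} element $\Delta$, so $C_2/C_1=\Delta^{2^{m}-2}=\bigl(\Delta^{(2^{m-1}-1)/3}\bigr)^{6}$ is automatically a cube, because $m$ odd gives $3\mid 2^{m-1}-1$; the kernel is explicitly $\left\{0,\,y_0,\,\omega y_0,\,\omega^{2}y_0\right\}$ with $y_0=\Delta^{(2^{m-1}-1)/3}$. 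Since $a$ enters only through $\Delta$, there is no separate, harder case $a\neq 0$: the dichotomy your plan is built around (clean for $a=0$, obstacle for $a\neq0$, possibly needing Lemma~\ref{cubic}) does not exist once $L_{a,b}$ is written down.

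There is a second genuine gap in the degenerate case $C_1=C_2=0$, i.e.\ $\Delta=0$, which does occur for $(a,b)\neq(0,0)$: already for $a=0$ it happens for all $b\in\omega\gf_{2^m}^{*}$, i.e.\ for $2^{m}-1$ nonzero values of $b$. Your parenthetical claim that in this case ``$\tr_n(Q_{a,b})$ is affine and $W_f=0$'' is not automatic: when the bilinear form vanishes identically, $\varphi_{a,b}$ is an $\gf_2$-linear Boolean function, and $W_f(a,b)$ equals $2^{n}$ --- not $0$ --- precisely when that linear function is identically zero. If that happened for some $(a,b)\neq(0,0)$, both the dimension claims ($d_{K_1}=d_{K_2}=0$, hence dimensions $2n$ and $n$) and the three-valued spectrum would fail, so this must be ruled out, not asserted. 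The verification is short but necessary: $\Delta=0$ forces $b\omega+a=b^{2^{m}}$, the two quadratic trace terms $\tr_n\left(bx^{2^{m+1}+4}\right)$ and $\tr_n\left(b^{2^{m}}x^{2^{m+2}+2}\right)$ then cancel identically (after reduction to the form $\tr_n\left(cx^{2^{k}+1}\right)$, one is the $2^{m+1}$-th Frobenius image of the other inside the trace), leaving $\varphi_{a,b}(x)=\tr_n(bx)\not\equiv 0$ since $b\neq0$, whence $W_f(a,b)=0$. With that lemma and this degenerate-case check supplied, the rest of your outline (value set $\left\{0,\pm2^{m+1}\right\}$ for $(a,b)\neq(0,0)$, $K_1=\{(0,0)\}$, $K_2=\{0\}$, and the two $3\times3$ moment systems yielding Tables \ref{Th_n=2m_table1} and \ref{Th_n=2m_table2}) matches the paper and is sound.
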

\begin{proof}
We first compute the value of $W_f(a,b)$, i.e., the Walsh transform of $f$, and $W_f(b)$ defined as  in \eqref{wf}, for any $a,b\in\gf_{2^n}$.	
It is obvious that $W_f(a,b)=2^n$ when $a=b=0$.  Below we consider the cases where $(a,b)\neq(0,0)$.

By the Euclidean algorithm we have
	$$(2^{2m}-1)\times (2^{m+1}+5) = (2^{2m-1}+2^m-1)\times (2^{m+2}+2) - 3,$$
	which implies $\gcd\left( \frac{2^{n-1}+2^m-1}{3}, 2^{2m}-1 \right) = 1$. 
	Define
	$$
	f_1(x)=f\left(x^{2^{m+2}+2}\right) =  x+x^{2^{m+1}+4}+\omega x^{2^{m+2}+2}
	$$ 
	and 
	$$Q(x)=ax^{2^{m+2}+2}+bf_1(x) = bx+ bx^{2^{m+1}+4}+ (b\omega+a) x^{2^{m+2}+2}.$$
	Then the Walsh transform
	\begin{eqnarray*}
		W_f(a,b) = \sum_{x\in \gf_{2^n}}(-1)^{\tr_n\left(ax^{2^{m+2}+2} + bf\left(x^{2^{m+2}+2}\right)\right)} 
	 =  \sum_{x\in \gf_{2^n}}(-1)^{\tr_n\left(Q(x)\right)}.
	\end{eqnarray*}
Note that the bilinear form of $\varphi_{a,b}(x)=\tr_n(Q(x))$ is given by
\begin{equation*}
\begin{split}
B_{\varphi_{a,b}}(x,y) & = \varphi_{a,b}(x+y)+\varphi_{a,b}(x)+\varphi_{a,b}(y) 
\\ &= \tr_n\left( by^4x^{2^{m+1}}+ by^{2^{m+1}}x^4 + (b\omega+a) y^2x^{2^{m+2}} + (b\omega+a) y^{2^{m+2}}x^2 \right)
\\ & = \tr_n\left( L_{a,b}(y) x^{2^{m+2}} \right)
\end{split} 
\end{equation*}	with 
$$L_{a,b}(y) = \Delta^2y^8 + \Delta^{2^m}y^2, ~~~ \text{and}~~~ \Delta = b^{2^{m}}\omega^2 + b + a^{2^{m}}.$$
Let $\ker (L_{a,b}) = \left\{ y \in\gf_{2^n}: L_{a,b}(y) =0  \right\}.$ 
From (\ref{Eq-Quad}), we have 
$$ W_f(a,b)  = 2^n\sum_{y\in \ker(L_{a,b})} (-1)^{\varphi_{a,b}(y)}
 =  \begin{cases}
\pm 2^{\frac{n+2}{2}}, & \text{if } \varphi_{a,b}(x)=0 \text{ for all } x \in \ker(L_{a,b}), \\
0, & \text{otherwise.} 
\end{cases}$$

Now we discuss the values of $\varphi_{a,b}(x)$ on the kernel $\ker (L_{a,b})$. 
When $\Delta=0$, we have $L_{a,b}(y)=0$ for any $y\in \gf_{2^n}$. 
When $\Delta\neq 0$,   by computation we have 
$$\ker (L_{a,b}) = \left\{ 0, ~y_0, ~y_0\omega, ~y_0 \omega^2  \right\},~~\text{where} ~~ y_0 = \Delta^{\frac{2^{m-1}-1}{3}}.$$ 
Moreover, 
	\begin{eqnarray*}
		\varphi_{a,b}(y_0) &=& \tr_n\left( by_0 + \frac{b}{b^{2^m}\omega^2+b+a^{2^m}} + \frac{\omega b + a}{b\omega+b^{2^m}+ a}  \right) \\
		&=& \tr_n\left( by_0 + \frac{b+b^{2^m}\omega^2 +a^{2^m}}{b^{2^m}\omega^2+b+a^{2^m}} \right) = \tr_n\left(by_0\right).
	\end{eqnarray*}
	Similarly, we have 
	$$\varphi_{a,b}(\omega y_0)=\tr_n\left(\omega by_0\right) \text{ and } \varphi_{a,b}(\omega^2 y_0)=\tr_n\left(\omega^2 by_0\right)=\varphi_{a,b}(y_0)+\varphi_{a,b}(\omega y_0).$$ 
	
In the following, we assume $a=0$ and show that there exist some $b$'s $\in \gf_{2^n}$ such that $\varphi_{0,b}(x)=0 \text{ for all } x \in \ker(L_{a,b})$ and $\varphi_{0,b}(x)= 1 \text{ for some } x \in \ker(L_{a,b})$, which implies
$$W_f(b) \in\left\{ 0, \pm  2^{\frac{n+2}{2}} \right\}.$$

 It is well known that for any elements $b\in\gf_{2^n}$, there exist unique $b_1,b_2\in\gf_{2^m}$ such that $b=b_1+b_2 \omega $ since $m$ is odd.   Plugging $b=b_1+\omega b_2$ into the expression of $y_0$, we get 
	$$y_0=\left( b^{2^m}\omega^2 + b \right)^{\frac{2^{m-1}-1}{3}} = \left( (b_1+b_2\omega^2)\omega^2 + b_1+b_2 \omega  \right)^{\frac{2^{m-1}-1}{3}} = (b_1\omega)^{\frac{2^{m-1}-1}{3}}.$$
	Let $g(b)=by_0 = b\left( b^{2^m}\omega^2 + b \right)^{\frac{2^{m-1}-1}{3}}$. Then $$g(b)=(b_1+b_2\omega) (b_1\omega)^{\frac{2^{m-1}-1}{3}} = \omega^{\frac{2^{m-1}-1}{3}}\left( b_1^{\frac{2^{m-1}+2}{3}} + b_1^{\frac{2^{m-1}-1}{3}}b_2\omega \right).$$
	If there exist two elements $\hat{b}=\hat{b}_1+\hat{b}_2 \omega,\widetilde{b}= \widetilde{b}_1+\widetilde{b}_2\omega$ with $\hat{b}_1,\hat{b}_2,\widetilde{b}_1,\widetilde{b}_2\in\gf_{2^m}$ such that $g(\hat{b})=g(\widetilde{b})$, then we have $$ \hat{b}_1^{\frac{2^{m-1}+2}{3}} = \widetilde{b}_1^{\frac{2^{m-1}+2}{3}} ~~\text{and}~~ \hat{b}_1^{\frac{2^{m-1}-1}{3}}\hat{b}_2 = \widetilde{b}_1^{\frac{2^{m-1}-1}{3}}\widetilde{b}_2.  $$ 
	Since $\gcd\left(\frac{2^{m-1}+2}{3}, 2^m-1  \right) = 1$, we have $\hat{b}_1=\widetilde{b}_1$. 
	Then if $\hat{b}_1=\widetilde{b}_1\neq0$, $\hat{b}_2=\widetilde{b}_2$; if $\hat{b}_1=\widetilde{b}_1=0$, $\hat{b}_2,\widetilde{b}_2\in\gf_{2^m}$. Thus 
	for $b=b_1+\omega b_2$, if $b_1=0$, $g(b)=0$; if $b_1\neq0$, $g(b)$ is bijective. Let $\mathrm{Im}(g)=\left\{ g(b): b \in \gf_{2^n} \right\}$. Then 
	$$|\mathrm{Im}(g)| = 2^n-2^m+1.$$ 
	According to the property of the trace function, there exist $(2^n-2^{n-2})$ elements $x\in\gf_{2^n}$ such that $\tr_{n/2}(x) \neq 0$. Since $2^n-2^m+1>2^n-2^{n-2}$, there must exist some $b\in\gf_{2^n}$ such that $\tr_{n/2}(g(b))=0$  and then $$\tr_n\left(by_0\right) = \tr_2\left(\tr_{n/2}(g(b))\right) = 0$$ and $$\tr_n\left(\omega by_0\right)= \tr_2\left( \tr_{n/2}(\omega g(b)) \right) = \tr_2\left(\omega\tr_{n/2}(g(b))\right)=0.$$ Clearly, there also exist some $b\in\gf_{2^n}$ such that $\tr_{n/2}(g(b))=1$ and then $\tr_n\left(\omega by_0\right) = \tr_2(\omega)=1$.  Thus 
	$W_f(b) \in\{ 0, \pm 2^{\frac{n+2}{2}} \}$ and obviously, $W_f(a,b)\in\{ 0, \pm 2^{\frac{n+2}{2}} \}$ for $(a,b)\neq(0,0)$. 
	
	With the analysis of possible values of $W_f(a,b)$ and $W_f(b)$, we're now ready to determine the parameters of $\mathcal{C}_f$ and $\mathcal{C}_{D(f)}$ in the following. 
	
	(1) For the linear code $\mathcal{C}_f$, since $W_f(a,b)=0$ if and only if $a=b=0$,  it follows from (\ref{dimensionK1}) that the dimension is $2n$. Moreover, for any $a,b\in\gf_{2^n}$, $W_f(a,b)\in\left\{ 0, 2^n, \pm 2^{\frac{n+2}{2}}  \right\}.$  
	Let 	$${v}_1 = -2^{\frac{n+2}{2}},~~ {v}_2=0,~~ {v}_3 = 2^{\frac{n+2}{2}}.$$
	According to (\ref{Eq-WtDist1}), the occurrences of $W_f(a,b)=v_i$'s, $i=1,2,3$ in the Walsh spectrum of $f$ are 
	$$ \begin{cases}X_1= 2^{4m-3} + 2^{m-2}- 2^{3m-2} - 2^{2m-3} \\ 
X_2 = 3\cdot 2^{4m-2} +2^{2m-2} - 1  \\ 
X_3 = 2^{4m-3}+2^{3m-2}-2^{2m-3}-2^{m-2}. \end{cases}
	$$
	Then the desired weight distribution of $\mathcal{C}_f$ in Table \ref{Th_n=2m_table1} follows directly from (\ref{WtDist-Relation1}).

(2) For the linear code $\mathcal{C}_{D(f)}$, since $W_f(b)=2^n$ if and only if $b=0$, it follows from  (\ref{dimension}) that the dimension of $\mathcal{C}_{D(f)}$ is $n$. Note that for any $b\in\gf_{2^n}$, $W_f(b)\in\left\{ 0, 2^n, \pm 2^{\frac{n+2}{2}}  \right\}$. By (\ref{wt_new}), the weights of the codewords $\mathbf{c}_b$ in $\mathcal{C}_{D(f)}$ satisfy
	$$\mathrm{wt}(\mathbf{c}_b)\in \left\{ 2^{n-2}, 0, 2^{n-2}-2^{\frac{n-2}{2}},2^{n-2}+2^{\frac{n-2}{2}}  \right\}.$$
	Denote
	$$\mathrm{w}_1 = 2^{n-2}-2^{\frac{n-2}{2}},~~ \mathrm{w}_2=2^{n-2},~~ \mathrm{w}_3 = 2^{n-2}+2^{\frac{n-2}{2}}.$$
The desired weight distribution of $\mathcal{C}_{D(f)}$ in Table \ref{Th_n=2m_table2} can be easily obtained by solving (\ref{Eq-WtDist2}) accordingly. 
\end{proof}

{\begin{example}
		When $m=3$, the code $\mathcal{C}_f$ in Theorem \ref{Th_n=2m} is a $\left[63, 12, 24\right]$ binary linear code with the weight enumerator 
		$$ 1+ 630 z^{24} + 3087 z^{32} +  378 z^{36}. $$
		Referring to the code table \cite{Grassl:codetables}, the linear code has the best known parameter.
\end{example}}

\subsection{The case $n=2m+1$}

{From the four classes of two-to-one functions  in Lemma \ref{2-1-2}, this subsection presents five classes of $3$-weight linear codes, two classes of constant-weight linear codes and one class of at most $5$-weight linear codes.}

\begin{table}[!t]
	\centering
	\caption{The weight distribution of the codes $\mathcal{C}_{D(f)}$ in Theorem \ref{Th_n=2m+1_1}} \label{Th_n=2m+1_table1}
	\begin{tabular}{cc}
		\hline
		Weight & Multiplicity \\
		\hline
		$0$ & $1$\\
		$2^{n-1}-2^{m}$ &  $2^{4m}+2^{3m}-2^{2m-1}-2^{m-1}$ \\
		$2^{n-1}$ & $  2^{4m+1} +2^{2m} - 1  $ \\
		$2^{n-1}+2^{m}$ & $2^{4m}+2^{m-1}-2^{3m}-2^{2m-1}$\\
		\hline
	\end{tabular}
\end{table}

\begin{table}[!t]
	\centering
	\caption{The weight distribution of the codes $\mathcal{C}_{D(f)}$ in Theorem \ref{Th_n=2m+1_1}} \label{Th_n=2m+1_table2}
	\begin{tabular}{cc}
		\hline
		Weight & Multiplicity \\
		\hline
		$0$ & $1$\\
		$2^{n-2}-2^{m-1}$ & $2^{n-2}+2^{m-1}$  \\
		$2^{n-2}$ & $ 2^{n-1}-1$ \\
		$2^{n-2}+2^{m-1}$ & $2^{n-2}-2^{m-1}$\\
		\hline
	\end{tabular}
\end{table}

\begin{Th}
	\label{Th_n=2m+1_1}
	Let $n=2m+1$ and $f(x) = x^{2^{m+1}+2}+x^{2^{m+1}}+x^2+x$. 	Define two linear codes $\mathcal{C}_{f}$ and $\mathcal{C}_{D(f)}$ as in \eqref{Eq-GenCons1} and \eqref{Eq-GenCons2}, respectively. 
	Then, 
	\begin{enumerate}[(1)]
		\item  $\mathcal{C}_{f}$ is a $\left[2^{n}-1, 2n\right]$ binary linear code with weight distribution in Table \ref{Th_n=2m+1_table1}.
		\item  $\mathcal{C}_{D(f)}$ is a $\left[ 2^{n-1}-1,n\right]$ binary linear code with weight distribution in Table \ref{Th_n=2m+1_table2}.
	\end{enumerate}
\end{Th}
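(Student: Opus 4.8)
The plan is to follow exactly the template established in the proof of Theorem~\ref{Th_n=2m}, since the structure for a quadratic two-to-one function is already laid out. The function $f(x) = x^{2^{m+1}+2}+x^{2^{m+1}}+x^2+x$ is quadratic (each exponent has binary weight at most $2$, and the genuinely quadratic terms $x^{2^{m+1}+2}$ govern the bilinear form), so the Walsh transform values are controlled by \eqref{Eq-Quad-2}: each nonzero value is $\pm 2^{(n+d_b)/2}$ where $d_b$ is the dimension of the kernel $V_{\varphi_b}$. First I would compute the associated bilinear form $B_{\varphi_{a,b}}(x,y) = \tr_n\left(L_{a,b}(y)\,x^{2^{m+1}}\right)$ for some linearized polynomial $L_{a,b}$, by expanding $\tr_n(ax+bf(x))$ and collecting the cross terms, pushing everything onto a single power of $x$ via the trace-adjoint (Frobenius) identities $\tr_n(uv^{2^k}) = \tr_n(u^{2^{-k}}v)$. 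The target is to read off $\ker(L_{a,b})$ and show that for $(a,b)\neq(0,0)$ its dimension is constantly $d_b = 1$, so that $W_f(a,b)\in\{0,\pm 2^{(n+1)/2}\} = \{0,\pm 2^{m+1}\}$, matching the weights $2^{n-1}\pm 2^m$ in Table~\ref{Th_n=2m+1_table1}.

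Once the three possible Walsh values $v_1 = -2^{m+1}$, $v_2 = 0$, $v_3 = 2^{m+1}$ are established for all $(a,b)\neq(0,0)$, the weight distributions follow mechanically. For $\mathcal{C}_f$ I would argue $W_f(a,b)=2^n$ forces $a=b=0$, giving $d_{K_1}=0$ and dimension $2n$, then solve the linear system \eqref{Eq-WtDist1} for the occurrences $X_1,X_2,X_3$; substituting into \eqref{WtDist-Relation1} yields Table~\ref{Th_n=2m+1_table1}. For $\mathcal{C}_{D(f)}$ I would specialize to $a=0$, confirm $W_f(b)=2^n$ iff $b=0$ (so $d_{K_2}=0$, dimension $n$), verify that all three values $0,\pm 2^{m+1}$ actually occur among $b\in\gf_{2^n}^*$ so the code is genuinely three-weight, and then read off the weights $\mathrm{w}_i = 2^{n-2}-v_i/4 = 2^{n-2}\mp 2^{m-1}$ and solve the Pless system \eqref{Eq-WtDist2} to obtain Table~\ref{Th_n=2m+1_table2}.

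The main obstacle, as in the $n=2m$ case, is the two surjectivity/value-occurrence arguments rather than the moment bookkeeping. For $\mathcal{C}_f$ I must show $W_f(a,b)$ never equals $2^n$ for $(a,b)\neq(0,0)$, i.e.\ that $\ker(L_{a,b})$ cannot be all of $\gf_{2^n}$ (equivalently $L_{a,b}\not\equiv 0$); this reduces to checking that the coefficients of $L_{a,b}$ cannot simultaneously vanish unless $a=b=0$, which hinges on the invertibility of the linear map $(a,b)\mapsto(\text{coefficients of }L_{a,b})$ over $\gf_{2^n}$. For $\mathcal{C}_{D(f)}$ the delicate part is proving both that $d_b=1$ consistently (so the code is exactly three-weight, not fewer) and that the sign values $+2^{m+1}$ and $-2^{m+1}$ each genuinely occur; I would handle this by evaluating $\varphi_{0,b}$ on a generator of the one-dimensional kernel $\ker(L_{0,b})$, analogous to the computation $\varphi_{a,b}(y_0)=\tr_n(by_0)$ in Theorem~\ref{Th_n=2m}, and then exhibiting $b$ attaining each trace value via a counting argument comparing $|\mathrm{Im}|$ against the number of elements with prescribed trace. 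I expect the bilinear-form computation and the kernel analysis to be the only genuinely technical steps; everything downstream is determined by the power-moment identities.
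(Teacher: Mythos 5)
Your overall strategy is the same as the paper's: compute the bilinear form of $\varphi_{a,b}(x)=\tr_n(ax+bf(x))$, identify the kernel of the associated linearized polynomial, apply \eqref{Eq-Quad-2} to conclude $W_f(a,b)\in\left\{0,\pm 2^{\frac{n+1}{2}}\right\}$ for $(a,b)\neq(0,0)$, and then let \eqref{Eq-WtDist1} and \eqref{Eq-WtDist2} do the bookkeeping. The downstream moment computations you describe are exactly the paper's; in fact the ``verify that both signs genuinely occur'' step you plan is unnecessary, since both moment systems are Vandermonde-type with distinct values and hence determine the multiplicities uniquely, whatever they are.

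However, one step of your plan is wrong as stated. The bilinear form of $\tr_n(ax+bf(x))$ does not depend on $a$ at all --- linear terms never produce cross terms --- so your $L_{a,b}$ is really $L_b(y)=by^2+b^{2^m}y$ (only the term $x^{2^{m+1}+2}$ of $f$ is genuinely quadratic, while $x^{2^{m+1}}+x^2+x$ is linearized). Consequently: (i) the kernel is \emph{not} of dimension $1$ for all $(a,b)\neq(0,0)$ --- when $b=0$ and $a\neq 0$ the form is linear and the kernel is all of $\gf_{2^n}$; and (ii) your proposed proof that $W_f(a,b)\neq 2^n$ for $(a,b)\neq(0,0)$, via ``invertibility of the map $(a,b)\mapsto \text{coefficients of } L_{a,b}$,'' cannot work, because that map ignores $a$ entirely (its fiber over $0$ contains all pairs $(a,0)$). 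The repair is small and is what the paper does implicitly: for $b\neq 0$, $\ker(L_b)=\left\{0,\,b^{2^m-1}\right\}$ has dimension $1$, so \eqref{Eq-Quad-2} gives $W_f(a,b)\in\left\{0,\pm2^{\frac{n+1}{2}}\right\}$ and in particular $W_f(a,b)\neq 2^n$; for $b=0$ and $a\neq0$ one evaluates directly $W_f(a,0)=\sum_{x\in\gf_{2^n}}(-1)^{\tr_n(ax)}=0$. Equivalently, $W_f(a,b)=2^n$ forces $\varphi_{a,b}\equiv 0$, hence its bilinear form vanishes, hence $b=0$, and then $\tr_n(ax)\equiv 0$ forces $a=0$. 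With this fix your argument goes through and coincides with the paper's proof (the paper additionally evaluates the form at $x=b^{2^m-1}$, obtaining $\tr_n\left(b^{-1}+b+ab^{2^m-1}\right)$, but that evaluation is not needed for the stated conclusion).
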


\begin{proof} 
In a similar manner as in Theorem \ref{Th_n=2m}, we will first investigate the value of $W_f(a,b)$ and then discuss the parameters of $\mathcal{C}_{f}$ and $\mathcal{C}_{D(f)}$.

It is clear that $W_f(a,b)=2^n$ when $(a,b) = (0,0)$. For $(a,b)\neq(0,0)$, let $\varphi_b(x) = \tr_n(bf(x))$.  Since $f$ is quadratic, according to (\ref{Eq-Quad-2}), we need to compute the dimension of the kernel of the bilinear form of $\varphi_b(x)$. Note that the bilinear form of $\varphi_b(x)$ is given by 
\begin{eqnarray*}
B_{\varphi_b}(x,y)& =& \varphi_b(x+y) + \varphi_b (x) +\varphi_b (y) \\
&=& \tr_n\left( b y^2x^{2^{m+1}} + by^{2^{m+1}}x^2 \right) = \tr_n\left(L_b(y)x^{2^{m+1}}\right),
\end{eqnarray*}
where $L_b(y)=by^2+b^{2^m}y$. Clearly, $\ker(L_b) = \left\{0, b^{2^m-1} \right\}$. According to (\ref{Eq-Quad-2}), we have
$$ W_f(a,b)  = \begin{cases}
\pm 2^{\frac{n+1}{2}}, & \text{if } \tr_n(bf(x)+ax)=0 \text{ for all } x \in \left\{0, b^{2^m-1} \right\}, \\
0, & \text{otherwise.}  
\end{cases}$$
Moreover, for $x= b^{2^m-1}$, 
\begin{eqnarray*}
\tr_n(bf(x)+ax) &=& \tr_n\left( b^{-1}  + \left(b+b^{2^m}+b^{2^{m+1}}\right)b^{1-2^{m+1}} +ab^{2^m-1} \right)\\
&=& \tr_n\left(b^{-1}+b+b^{1-2^m}+b^{2-2^{m+1}}+ab^{2^m-1}\right)\\
&=&\tr_n\left(b^{-1}+b+ab^{2^m-1}\right).
\end{eqnarray*}
Obviously, if $(a,b) = (0,b) \neq(0,0)$, $\tr_n\left(b^{-1}+b+ab^{2^m-1}\right)\in\{0,1\}$ and thus for any $(a,b)\neq(0,0)$, $$W_f(a,b), W_f(b) \in \left\{ 0, \pm 2^{\frac{n+1}{2}} \right\}. $$

Next, we consider the parameters of $\mathcal{C}_f$ and $\mathcal{C}_{D(f)}$, respectively. 

(1) For the linear code $\mathcal{C}_f,$ since $W_f(a,b)=2^n$ if and only if $(a,b)=(0,0)$, the dimension is $2n$ from (\ref{dimensionK1}). Moreover, for any $a,b\in\gf_{2^n}$, $W_f(a,b)\in\left\{ 0, 2^n, \pm 2^{\frac{n+1}{2}} \right\}$. Let 
$$v_1 = -2^{\frac{n+1}{2}},~~v_2 = 0,~~ v_3 = 2^{\frac{n+1}{2}}.$$
Then by computing (\ref{Eq-WtDist1}), we can get that the occurrences of $W_f(a,b)=v_i$'s, $i=1,2,3$ in the Walsh spectrum of $f$ are 
	$$ \begin{cases}X_1= 2^{4m}+2^{m-1}-2^{3m}-2^{2m-1} \\ 
X_2 =  2^{4m+1} +2^{2m} - 1  \\ 
X_3 = 2^{4m}+2^{3m}-2^{2m-1}-2^{m-1}. \end{cases}
$$
Finally, by (\ref{WtDist-Relation1}), the desired weight distribution of $\mathcal{C}_f$ can be obtained. 

(2) For the linear code $\mathcal{C}_{D(f)}$, $W_f(b)= 2^n$ if and only if $b=0$, which means that the dimension of $\mathcal{C}_{D(f)}$ is $n$ according to (\ref{dimension}). Since for any $b\in\gf_{2^n}$, $W_f(b)\in\left\{ 0, 2^n, \pm 2^{\frac{n+1}{2}}  \right\}$, by (\ref{wt_new}), the weights of the codewords $\mathbf{c}_b$ in $\mathcal{C}_{D(f)}$ satisfy
$$\mathrm{wt}(\mathbf{c}_b)\in \left\{ 2^{n-2}, 0, 2^{n-2}-2^{\frac{n-3}{2}},2^{n-2}+2^{\frac{n-3}{2}}  \right\}.$$

In the following, we determine the weight distribution of $\mathcal{C}_{D(f)}$.  Define 
$$\mathrm{w}_1 = 2^{n-2}-2^{\frac{n-3}{2}},~~ \mathrm{w}_2=2^{n-2},~~ \mathrm{w}_3 = 2^{n-2}+2^{\frac{n-3}{2}}.$$
Then solving (\ref{Eq-WtDist2})  gives the desired weight distribution. 
\end{proof}

{\begin{example}
		When $m=3$, the code $\mathcal{C}_f$ in Theorem \ref{Th_n=2m+1_1} is a $\left[127, 14, 56\right]$ binary linear code with the weight enumerator 
	$$ 1+ 4572 z^{56} + 8255 z^{64} +  3556 z^{72}. $$
	Referring to the code table \cite{Grassl:codetables}, the linear code is optimal.
\end{example}}

\begin{Th}
	\label{Th_n=2m+1_2}
	Let $n=2m+1$ and $f(x) = x^{2^n-2^{m+1}+2}+x^{2^{m+1}}+x^2+x$. 	Define two linear codes $\mathcal{C}_{f}$ and $\mathcal{C}_{D(f)}$ as in \eqref{Eq-GenCons1} and \eqref{Eq-GenCons2}, respectively. 
	Then, 
	\begin{enumerate}[(1)]
		\item { $\mathcal{C}_{f}$ is a $\left[2^{n}-1, 2n\right]$ binary linear code with at most five weights.}
		\item  $\mathcal{C}_{D(f)}$ is a $\left[ 2^{n-1}-1,n\right]$ binary linear code with weight distribution in Table \ref{Th_n=2m+1_table2}.
	\end{enumerate}
\end{Th}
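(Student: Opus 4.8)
The plan is to realize $f$ as a generalized quadratic function and to pull every Walsh sum back through a monomial permutation. Set $e=2^m+1$. From $2^m\equiv-1\pmod{2^m+1}$ one gets $2^n-1=2(2^m)^2-1\equiv1\pmod{2^m+1}$, so $\gcd(2^m+1,2^n-1)=1$ and $x\mapsto x^{2^m+1}$ permutes $\gf_{2^n}$ while fixing $0$. Reducing the four exponents of $f$ modulo $2^n-1$ after multiplication by $2^m+1$ (using $2^n\equiv1$) gives $(2^n-2^{m+1}+2)(2^m+1)\equiv2^m+2$, $2^{m+1}(2^m+1)\equiv2^{m+1}+1$, $2(2^m+1)=2^{m+1}+2$, and $2^m+1$, whence
\[
Q(x):=f\!\left(x^{2^m+1}\right)=x^{2^m+2}+x^{2^{m+1}+1}+x^{2^{m+1}+2}+x^{2^m+1}
\]
is quadratic, since each exponent has binary weight $2$. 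Because the substitution is a bijection, $W_f(b)=\sum_x(-1)^{\tr_n(bf(x))}=\sum_x(-1)^{\tr_n(bQ(x))}$ and, crucially, $W_f(a,b)=\sum_x(-1)^{\tr_n(ax^{2^m+1}+bQ(x))}$, in which $ax^{2^m+1}$ is itself quadratic.

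For part (2) I would follow the pattern of Theorem \ref{Th_n=2m+1_1}. Writing the bilinear form of $\varphi_b(x)=\tr_n(bQ(x))$ as $\tr_n(x\,M_b(y))$ and collecting the cross terms of the four monomials, $M_b$ is a linearized polynomial supported only on the exponents $y^{2^{m-1}},y^{2^m},y^{2^{m+1}},y^{2^{m+2}}$; substituting $z=y^{2^{m-1}}$ turns $M_b(y)=0$ into a linearized equation of $2$-degree $3$ in $z$, so the radical has dimension at most $3$. The task is to show this dimension equals $1$ for every $b\neq0$, which by \eqref{Eq-Quad} forces $W_f(b)\in\{0,\pm2^{(n+1)/2}\}$, with $W_f(b)=2^n$ only at $b=0$ so that $\dim\mathcal{C}_{D(f)}=n$ by \eqref{dimension}. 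Verifying that all three values occur makes $\mathcal{C}_{D(f)}$ a genuine three-weight code with weights $2^{n-2}$ and $2^{n-2}\pm2^{(n-3)/2}$ via \eqref{wt_new}; since these weights and the parameters $\ell=2^{n-1}-1$, $\dim=n$ coincide with those in Theorem \ref{Th_n=2m+1_1}, the Pless system \eqref{Eq-WtDist2} has the same unique solution and returns exactly Table \ref{Th_n=2m+1_table2}.

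For part (1), the sum $R_{a,b}(x):=ax^{2^m+1}+bQ(x)$ is a quadratic form, so $W_f(a,b)\in\{0,\pm2^{(n+d_{a,b})/2}\}$ with $d_{a,b}$ the dimension of its radical; as the radical is the kernel of an alternating form and $n=2m+1$ is odd, $n-d_{a,b}$ is even and $d_{a,b}$ is odd. Collecting the cross terms of $R_{a,b}$ yields a linearized polynomial again supported only on $y^{2^{m-1}},\dots,y^{2^{m+2}}$ (the extra term $ax^{2^m+1}$ merely perturbs the coefficients of $y^{2^m}$ and $y^{2^{m+1}}$), so the substitution $z=y^{2^{m-1}}$ reduces the radical equation to a $2$-degree-$3$ linearized polynomial in $z$. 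Hence $d_{a,b}\le3$, forcing $d_{a,b}\in\{1,3\}$ and $W_f(a,b)\in\{0,\pm2^{(n+1)/2},\pm2^{(n+3)/2}\}$; by \eqref{WtDist-Relation1} these five values give the five nonzero weights $2^{n-1}$ and $2^{n-1}\pm2^{(n-1)/2}$, $2^{n-1}\pm2^{(n+1)/2}$, so $\mathcal{C}_f$ has at most five weights. Checking $W_f(a,b)=2^n\iff(a,b)=(0,0)$ then gives $\dim\mathcal{C}_f=2n$ through \eqref{dimensionK1}.

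The main obstacle is the radical bookkeeping. The mechanical part, assembling $M_b$ (and its analogue for $R_{a,b}$) from the cross terms and recognizing the four-term support, must be done carefully, since a sign or conjugation slip changes the coefficients of $z^2$ and $z^4$ and hence the root count. The genuinely delicate point is part (2): whereas the $z$-substitution only supplies the crude bound $d\le3$, I must prove $d_b=1$ exactly for all $b\neq0$, or else that the value $\pm2^{(n+3)/2}$ never arises because $\varphi_b$ is nonzero on the radical whenever $d_b=3$, for any genuine occurrence of $\pm2^{(n+3)/2}$ would introduce a fourth nonzero weight and break Table \ref{Th_n=2m+1_table2}. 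This is precisely where a finer analysis of the number of roots of the reduced $z$-equation in $\gf_{2^n}$ is needed, and where the cubic and quartic factorization criteria of Lemmas \ref{cubic} and \ref{quartic} would be brought to bear.
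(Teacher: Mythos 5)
Your setup is sound and matches the paper's: the substitution $x\mapsto x^{2^m+1}$, the resulting quadratic form, and the part (1) argument ($d_{a,b}\le 3$ from a $2$-degree-$3$ linearized polynomial, $d_{a,b}$ odd, hence $W_f(a,b)\in\{0,2^n,\pm2^{(n+1)/2},\pm2^{(n+3)/2}\}$ and at most five weights) are essentially the paper's proof of (1). But part (2) — the only part of the theorem with an actual weight distribution to establish — is not proved in your proposal; you state the dichotomy and defer it to ``a finer analysis.'' Worse, the route you put first, namely proving $d_b=1$ for every $b\neq 0$, is the wrong horse to back: the paper never establishes this, and its whole argument is structured around the possibility that $d_b=3$ does occur; in that case one must show instead that $\varphi_b$ is nonzero somewhere on the radical so that $W_f(b)=0$. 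Your fallback sentence names this second route, but naming it is not executing it, and the tools you point to (the cubic and quartic factorization criteria of Lemmas \ref{cubic} and \ref{quartic}) are not what resolves it.

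What actually closes the gap in the paper is a different substitution from your $z=y^{2^{m-1}}$, which only renormalizes exponents and yields nothing beyond the bound $d_b\le 3$. Writing $L_b$ (the radical polynomial for $a=0$) in the variable $z=y^2+y$ gives
\begin{equation*}
L_b=\bigl(b^2z^2+b^{2^{m+1}}z\bigr)^2+\bigl(b^2z^2+b^{2^{m+1}}z\bigr),
\end{equation*}
so $L_b=0$ forces $b^2z^2+b^{2^{m+1}}z\in\{0,1\}$, giving four $z$-values $z_0=0$, $z_1=b^{2^{m+1}-2}$, $z_2,z_3$. If $d_b=3$, each $z_i$ must lift to two $y$'s, i.e.\ $\tr_n(z_i)=0$ for all $i$; in particular $\tr_n\bigl(b^{2^{m+1}-2}\bigr)=0$, so $b^{2^{m+1}-2}=y_0^2+y_0$ for some $y_0$, equivalently $b=(y_0^2+y_0)^{-(2^m+1)}$. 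This $y_0$ lies in $\ker(L_b)$, and a direct computation gives $\varphi_b(y_0)=\tr_n(1)=1$ because $n$ is odd. Hence $W_f(b)=0$ whenever $d_b=3$, which is exactly the exclusion of $\pm2^{(n+3)/2}$ from the values of $W_f(b)$ that Table \ref{Th_n=2m+1_table2} requires; without this (or an equivalent) argument, your part (2) has a genuine hole. Note also that the fact that $\mathcal{C}_f$ in part (1) is only claimed to have ``at most'' five weights — and the paper's later remark that it appears numerically to be $3$-weight but unprovably so by these techniques — is a strong hint that the clean statement ``$d_{a,b}=1$ always'' is not available.
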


\begin{proof}
	 Since $\gcd\left( 2^n-2^{m+1}+2, 2^n-1 \right) = 1$ and $$\left( 2^n-2^{m+1}+2 \right)\times \left(2^m+1\right) \equiv 2^m+2 \pmod {2^n-1},$$ we have 
	\begin{eqnarray*}
		W_f(a,b)&=&\sum_{x\in \gf_{2^n}} (-1)^{\tr_n\left(a x + b\left(x^{2^n-2^{m+1}+2}+x^{2^{m+1}}+x^2+x \right)\right)} \\
		&=& \sum_{x\in \gf_{2^n}}(-1)^{\tr_n\left( a x^{2^m+1} + b \left(  x^{2^m+2} + x^{2^{m+1}+1} + x^{2^{m+1}+2}+ x^{2^m+1}\right)  \right) }\\
		&=&\sum_{x\in \gf_{2^n}}(-1)^{\tr_n\left( (a^2+ b+b^2) x^{2^{m+1}+2} + bx^{2^{m+1}+1} + bx^{2^m+2}   \right)}.
	\end{eqnarray*}
Define $$Q(x) = (a^2+ b+b^2) x^{2^{m+1}+2} + bx^{2^{m+1}+1} + bx^{2^m+2}.$$ It is clear that when $(a,b) = (0,0)$, $W_f(a,b)=0$. For $(a,b)\neq(0,0)$, the bilinear form of $\varphi_{a,b}(x)=\tr_n(Q(x))$ is given by 
\begin{eqnarray*}
B_{\varphi_{a,b}}(x,y) &=& \varphi_{a,b}(x+y)+\varphi_{a,b}(x)+\varphi_{a,b}(y) \\
&=& \tr_n\left( (a^2+b+b^2) \left( y^2x^{2^{m+1}} +y^{2^{m+1}} x^2 \right) + b\left(yx^{2^{m+1}}+y^{2^{m+1}}x\right) + b\left( y^2x^{2^m} + y^{2^m}x^2 \right)  \right) \\
&=& \tr_n\left( L_{a,b}(y)x^{2^{m+2}} \right),
\end{eqnarray*}
where 
$$L_{a,b}(y) = b^4y^8+\left(b^{2^{m+2}}+b^4+b^2+a^4\right)y^4+\left( b^{2^{m+2}} +b^{2^{m+1}}+b^2 + a^{2^{m+2}}\right)y^2+ b^{2^{m+1}}y.$$
Let $\ker(L_{a,b}) = \left\{ y\in\gf_{2^n}: L_{a,b}(y)=0 \right\}$. According to (\ref{Eq-Quad}), we have 
$$ W_f(a,b)  = 2^n\sum_{y\in \ker(L_{a,b})} (-1)^{\varphi_{a,b}(y)}
=  \begin{cases}
\pm 2^{\frac{n+d_{a,b}}{2}}, & \text{if } \varphi_{a,b}(x)=0 \text{ for all } x \in \ker(L_{a,b}), \\
0, & \text{otherwise,} 
\end{cases}$$
where $d_{a,b}$ is the dimension of $\ker(L_{a,b})$. From the expression of $L_{a,b}$, it is obvious that $d_{a,b}\le3$. Moreover, since $n+d$ must be even and $n$ is odd, $d_{a,b}\in\{1,3\}$. 
Hence the Walsh transform
$$
W_f(a,b) \in \left\{2^n, \,0, \,\pm 2^{\frac{n+1}{2}}, \,\pm 2^{\frac{n+3}{2}} \right\}.
$$

Next, we will show that when $a=0$ the Walsh transform $W_f(0,b)=0$ when the dimension of $L_{a,b}(y)=0$ equals 3. 
Namely, there exists some $y_0\in\ker(L_{a,b})$ such that $\varphi_{a,b}(y_0)=1$. In this case, we have
$$
L_{b}(y)=L_{0,b}(y)= b^4y^8+\left(b^{2^{m+2}}+b^4+b^2\right)y^4+\left( b^{2^{m+2}} +b^{2^{m+1}}+b^2 \right)y^2+ b^{2^{m+1}}y.
$$ 
Denote
$\varphi_b(y)=\varphi_{0,b}(y)=\tr_n\left( b \left(y^{2^{m+1}+2} + y^{2^{m+1}+1} + y^{2^m+2} +y^{2^m+1} \right)\right)$. Let $z=y^2+y$. Then we have 
	$$L_b= b^4z^4 + b^2z^2 + b^{2^{m+2}}z^2 + b^{2^{m+1}}z = (b^2z^2+b^{2^{m+1}}z)^2+b^2z^2+b^{2^{m+1}}z.$$
	If $d_{0,b}=3$, i.e., the number of solutions of $L_b=0$ equals $8$, then the equation 
	\begin{equation}
		\label{bz}
		(b^2z^2+b^{2^{m+1}}z)^2+b^2z^2+b^{2^{m+1}}z=0
	\end{equation} has $4$ solutions in $\gf_{2^n}$ since $y$ and $y+1$ correspond to one same $z=y^2+y$. Clearly, from (\ref{bz}), we have $b^2z^2+b^{2^{m+1}}z=0$ or $b^2z^2+b^{2^{m+1}}z=1$. From $b^2z^2+b^{2^{m+1}}z=0$, we get two solutions $z_0=0$ and $z_1= b^{2^{m+1}-2}$ in $\gf_{2^n}$. Similarly, we also obtain two solutions $z=z_2, z_3$ from $b^2z^2+b^{2^{m+1}}z=1$. Thus if $d_b=3$, $y^2+y=z_i$ for $i=0,1,2,3$ exactly has two solutions in $\gf_{2^n}$. Namely, $\tr_n(z_i)=0$ for $i=0,1,2,3$. Particularly, $\tr_n(z_1)=\tr_n(b^{2^{m+1}-2})=0$. Therefore, there exists some element $y_0\in\gf_{2^n}$ such that $b^{2^{m+1}-2} = y_0^2+y_0$, i.e., $b= \frac{1}{(y_0^2+y_0)^{2^m+1}}$. In fact, such $y_0$ belongs to $\ker(L_b)$ and is what we need. Indeed, 
	\begin{eqnarray*}
		\varphi_b(y_0) &=& \tr_n\left( b \left(y_0^{2^{m+1}+2} + y_0^{2^{m+1}+1} + y_0^{2^m+2} +y_0^{2^m+1}\right) \right) \\
		&=& \tr_n\left( \frac{y_0^{2^{m+1}+2} + y_0^{2^{m+1}+1} + y_0^{2^m+2} +y_0^{2^m+1} }{(y_0^2+y_0)^{2^m+1}}\right)\\
		&=& \tr_n(1) =1. 
	\end{eqnarray*}
	Hence, $W_f(b)=0$ when the dimension of $\ker(L_b)$ is $3$.  

Next, we consider the parameters of $\mathcal{C}_f$ and $\mathcal{C}_{D(f)}$, respectively. 

(1) For the linear code $\mathcal{C}_f,$ since $W_f(a,b)=2^n$ if and only if $(a,b)=(0,0)$, the dimension is $2n$ from (\ref{dimensionK1}). Moreover, 
the possible Hamming weights of codewords in $\mathcal{C}_f$ are given by
$$
{\rm wt}(\mathbf{c}_{a,b}) \in \left\{0, 2^{n-1}, 2^{n-1}\pm 2^{\frac{n-1}{2}}, 2^{n-1}\pm 2^{\frac{n+1}{2}} \right\}.
$$

(2) For the linear code $\mathcal{C}_{D(f)}$, $W_f(b)= 2^n$ if and only if $b=0$, which means that the dimension of $\mathcal{C}_{D(f)}$ is $n$ according to (\ref{dimension}). 
Since for any $b\in\gf_{2^n}$, $W_f(b)\in\left\{ 2^n, 0, \pm 2^{\frac{n+1}{2}}  \right\}$, by (\ref{wt_new}), the weights of the codewords $\mathbf{c}_b$ in $\mathcal{C}_{D(f)}$ satisfy
$$\mathrm{wt}(\mathbf{c}_b)\in \left\{ 2^{n-2}, 0, 2^{n-2}-2^{\frac{n-3}{2}},2^{n-2}+2^{\frac{n-3}{2}}  \right\}.$$
Similar to the proof in Theorem \ref{Th_n=2m+1_1} (2), the desired weight distribution of $\mathcal{C}_{D(f)}$ can be obtained by solving \eqref{Eq-WtDist2} accordingly.	
\end{proof}

\begin{table}[!t]
	\centering
	\caption{The weight distribution of the codes $\mathcal{C}_{f}$ in Theorem \ref{Th_n=2m+1_3}} \label{Th_n=2m+1_table3}
	\begin{tabular}{cc}
		\hline
		Weight & Multiplicity \\
		\hline
		$0$ & $1$\\
		$2^{n-1}-2^{m}$ &  $2^{4m-1}+2^{3m-1}-2^{2m-1}-2^{m-1}$ \\
		$2^{n-1}$ & $  2^{4m} +2^{2m} - 1  $ \\
		$2^{n-1}+2^{m}$ & $2^{4m-1}+2^{m-1}-2^{3m-1}-2^{2m-1}$\\
		\hline
	\end{tabular}
\end{table}

\begin{Th}
	\label{Th_n=2m+1_3}
	Let $n=2m+1$ and $f(x) = x^{2^{m+1}+2}+x^{2^{m+1}+1}+x^2+x$. 	Define two linear codes $\mathcal{C}_{f}$ and $\mathcal{C}_{D(f)}$ as in \eqref{Eq-GenCons1} and \eqref{Eq-GenCons2}, respectively. 
	Then, 
	\begin{enumerate}[(1)]
		\item  $\mathcal{C}_{f}$ is a $\left[2^{n}-1, 2n-1\right]$ binary linear code with weight distribution in Table \ref{Th_n=2m+1_table3}.
		\item  $\mathcal{C}_{D(f)}$ is a $\left[ 2^{n-1}-1,n-1\right]$ binary linear code with  weight enumerator $1+(2^{n-1}-1)z^{2^{n-2}}$.
	\end{enumerate}
\end{Th}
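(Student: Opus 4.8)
The plan is to follow the same Walsh-transform strategy used in Theorems \ref{Th_n=2m} and \ref{Th_n=2m+1_1}, but to pay special attention to the fact that the stated dimensions drop by one (to $2n-1$ and $n-1$ respectively), which signals that the kernels $K_1$ and $K_2$ from \eqref{dimensionK1} and \eqref{dimension} are now nontrivial: I expect $d_{K_1}=1$ and $d_{K_2}=1$. First I would compute $W_f(a,b)$ for $(a,b)\neq(0,0)$. Since $f(x)=x^{2^{m+1}+2}+x^{2^{m+1}+1}+x^2+x$ is quadratic, I set $\varphi_b(x)=\tr_n(bf(x))$ and compute its associated bilinear form $B_{\varphi_b}(x,y)$, which after collecting terms can be written as $\tr_n\!\left(L_b(y)\,x^{2^{m+1}}\right)$ for a linearized polynomial $L_b$. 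Evaluating $\ker(L_b)$ gives the dimension $d_b$ of the kernel, and by \eqref{Eq-Quad-2} each nonzero Walsh value lies in $\{0,\pm 2^{(n+d_b)/2}\}$; since $n$ is odd one expects $d_b=1$ generically, giving $W_f(a,b)\in\{0,\pm 2^{(n+1)/2}\}$.

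The crucial new step, and the likely main obstacle, is identifying exactly when $W_f(b)=2^n$, i.e. pinning down the kernel $K_2$. The presence of the term $x^{2^{m+1}+1}$ (together with $x$) makes $f$ behave, for suitable $b$, like a function whose values lie in a proper subspace, so that $\tr_n(bf(x))\equiv 0$ for all $x$ for some nonzero $b$. I would solve $\sum_{x}(-1)^{\tr_n(bf(x))}=2^n$ directly: this forces $\tr_n(bf(x))=0$ for every $x\in\gf_{2^n}$, which I would translate into a polynomial identity in $b$ and show it has precisely two solutions $b\in\{0,b^\ast\}$ forming a one-dimensional $\gf_2$-space, yielding $d_{K_2}=1$ and hence $\dim\mathcal{C}_{D(f)}=n-1$. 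The analogous computation for $\mathcal{C}_f$ on the pair $(a,b)$ should give $d_{K_1}=1$ and $\dim\mathcal{C}_f=2n-1$; here I must verify that the single extra relation in $K_1$ is the one inherited from $K_2$ (i.e. a pair $(0,b^\ast)$ or $(a^\ast,b^\ast)$) and that no further collapse occurs.

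Once the dimensions are fixed, the weight distribution of $\mathcal{C}_f$ in Table \ref{Th_n=2m+1_table3} follows from the three-value Walsh spectrum: I set $v_1=-2^{(n+1)/2},\,v_2=0,\,v_3=2^{(n+1)/2}$ and solve the power-moment system \eqref{Eq-WtDist1}, but with $2^{d_{K_1}}=2$ accounting for the reduced dimension, then convert to weights via \eqref{WtDist-Relation1}. For $\mathcal{C}_{D(f)}$ the claim is sharper: the code is asserted to be a \emph{constant}-weight (one-weight) code with enumerator $1+(2^{n-1}-1)z^{2^{n-2}}$. This means that for every $b\notin K_2$ one must have $W_f(b)=0$, so that only the single weight $\mathrm{w}_2=2^{n-2}$ occurs. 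The heart of the matter is therefore to show that $W_f(b)\neq\pm 2^{(n+1)/2}$ for all $b$, equivalently that whenever $d_b=1$ and $b\notin K_2$ the linear functional $\varphi_b$ is nontrivial on the one-dimensional kernel $\ker(L_b)$, forcing the value $0$ in \eqref{Eq-Quad-2}.

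I expect this last point to be the genuine obstacle. Concretely, writing $\ker(L_b)=\{0,y_0\}$ with $y_0=b^{2^m-1}$ (or whatever explicit form $L_b$ dictates), I would compute $\varphi_b(y_0)=\tr_n(bf(y_0))$ in closed form and argue that $\varphi_b(y_0)=1$ for every $b\notin K_2$, using a trace simplification analogous to the one in Theorem \ref{Th_n=2m+1_2} where a ratio collapses to $\tr_n(1)=1$. This would show $W_f(b)=0$ off $K_2$, giving the constant weight $2^{n-2}$ with multiplicity $2^{n-1}-1$, and completing part (2). I would finally cross-check the two tables for internal consistency via $\sum_i A_i=2^{\dim}-1$ and the first power moment before declaring the result.
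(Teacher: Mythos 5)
Your proposal is correct and takes essentially the same route as the paper's proof: the paper identifies $K_1=\{(0,0),(0,1)\}$ and $K_2=\{0,1\}$ (your $b^\ast$ is $1$, since $\tr_n(f(x))\equiv 0$), computes $\ker(L_b)=\{0,y_0\}$ with $y_0=\left(b^{2^{m+1}}+b\right)^{2^m-1}$ for $b\neq 0,1$, shows $\tr_n(bf(y_0))=1$ so that $W_f(b)=0$ off $K_2$ (yielding the one-weight code in part (2)), and solves the power-moment system with $2^{d_{K_1}}=2$ to obtain Table \ref{Th_n=2m+1_table3}, exactly as you outline. The only detail differing from your guess is the explicit form of $y_0$ (it is $\left(b^{2^{m+1}}+b\right)^{2^m-1}$, not $b^{2^m-1}$, because $L_b(y)=\left(b^{2^{m+1}}+b\right)y^2+\left(b^{2^m}+b\right)y$), a point you explicitly hedged.
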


\begin{proof}
	Firstly, we shall compute the value $W_f(a,b)$. It is clear that when $(a,b)=(0,0)$, $W_f(a,b)=2^n$. For $(a,b)\neq(0,0)$, let $\varphi_b(x)=\tr_n(bf(x))$. Since $f$ is quadratic, according to (\ref{Eq-Quad-2}), we need to determine the dimension of kernel of the bilinear form of $\varphi_b(x)$. Note that the bilinear form of $\varphi_b(x)$ is given by 
	\begin{eqnarray*}
	B_{\varphi_b}(x,y) &=& \varphi_b(x+y)  + \varphi_b(x) + \varphi_b (y) \\
	&=& \tr_n\left( b\left( y^2x^{2^{m+1}} +  y^{2^{m+1}}x^2 \right) + b\left( yx^{2^{m+1}} + y^{2^{m+1}}x \right)  \right) = \tr_n\left(L_b(y)x^{2^{m+1}}\right),
	\end{eqnarray*}
where $$L_b(y)=\left( b^{2^{m+1}} + b \right)y^2 + \left( b^{2^m} + b \right) y.$$   Since $\gcd(m,n)=\gcd(m+1,n)=1$, $b^{2^{m+1}}+b=0$ and $b^{2^m}+b=0$ both have two solutions $b=0,1$ in $\gf_{2^n}$. Thus $L_b(y)=0$ if and only if $b = 0,1$. Moreover, according to the expression of $W_f(a,b)$, we have 
$$W_f(a,1) = \sum_{x\in \gf_{2^n}}(-1)^{\tr_n\left( ax + x^{2^{m+1}+2} + x^{2^{m+1}+2} \right)} = \sum_{x\in\gf_{2^n}}(-1)^{\tr_n(ax)} $$
equals $2^n$ if $a=0$ and $0$ otherwise. 

When $b\neq 0,1$, then it is clear that $\ker(L_b) = \left\{ 0, \frac{b^{2^m}+b}{b^{2^{m+1}}+b} \right\}$. It follows from (\ref{Eq-Quad-2}) that 
$$ W_f(a,b)  = \begin{cases}
\pm 2^{\frac{n+1}{2}}, & \text{if } \tr_n(bf(x)+ax)=0 \text{ for all } x \in \left\{0, \frac{b^{2^m}+b}{b^{2^{m+1}}+b} \right\}, \\
0, & \text{otherwise.}  
\end{cases}$$
Let  $y_0 = \frac{b^{2^m}+b}{b^{2^{m+1}}+b}= \left( b^{2^{m+1}} +b \right)^{2^m-1}$. Note that $$y_0^{2^{m+1}+2}=\frac{1}{b^{2^{m+1}}+b}, ~~ y_0^{2^{m+1}+1} = \frac{1}{b^{2^{m}}+b} $$ and 
\begin{eqnarray*}
	&&  \tr_n\left( by_0^{2^{m+1}+2} + by_0^{2^{m+1}+1} \right) \\
	&=& \tr_n\left( \frac{b}{b^{2^{m+1}}+b} + \frac{b}{b^{2^{m}}+b}  \right)\\
	&=& \tr_n\left( \frac{b^{2^m}+b}{b^{2^{m}}+b} \right) = 1.
\end{eqnarray*}
Moreover, denote $b^{2^{m+1}}+b= \Delta$. Then $y_0=\Delta^{2^m-1}$, $b^{2^{m+1}}+b^2=\Delta^{2^{m+1}}$ and $b^2+b=\Delta+\Delta^{2^{m+1}}$. Thus 
$$\tr_n\left( \left( b+b^2 \right)y_0^2\right) = \tr_n\left(\frac{\left( \Delta+\Delta^{2^{m+1}}\right)\Delta^{2^{m+1}}}{\Delta^2}\right) = \tr_n\left(\Delta^{2^{m+1}-1}+\Delta^{2^{m+2}-2}\right)=0.$$
Therefore, 
$$ \tr_n(bf(y_0)) = \tr_n\left( by_0^{2^{m+1}+2} + by_0^{2^{m+1}+1} + \left( b+b^2 \right)y_0^2 \right) = 1+0=1$$
and 
$$\tr_n(bf(y_0)+ay_0) = 1+ \tr_n(ay_0).$$
Clearly, $\tr_n(bf(y_0)+ay_0) \in \{0,1\}$ and thus for any $b\neq 0,1$, $$W_f(a,b)\in\left\{ 0, \pm 2^{\frac{n+1}{2}} \right\}.$$
Moreover, since $\tr_n(bf(y_0))=1$ here, $W_f(b)=0$ for $b\in\gf_{2^n}\backslash\{1\}$.

(1) For the linear code $\mathcal{C}_f$, since $W_f(a,b) = 2^n$ if and only if $(a,b)=(0,1), (0,0)$, the dimension is $2n-1$ from (\ref{dimensionK1}). Moreover, for any $a,b\in\gf_{2^n}$, $W_f(a,b)\in\left\{ 0,2^n, \pm 2^{\frac{n+1}{2}} \right\}$. Let 
$$v_1 = -2^{\frac{n+1}{2}},~~v_2 = 0,~~ v_3 = 2^{\frac{n+1}{2}}.$$
Then by computing (\ref{Eq-WtDist1}), we can get that the occurrences of $W_f(a,b)=v_i$'s, $i=1,2,3$ in the Walsh spectrum of $f$ are 
$$ \begin{cases}X_1= 2^{4m}+2^{m}-2^{3m}-2^{2m} \\ 
X_2 =  2^{4m+1} +2^{2m+1} - 2  \\ 
X_3 = 2^{4m}+2^{3m}-2^{2m}-2^{m}. \end{cases}
$$
Finally, by (\ref{WtDist-Relation1}), the desired weight distribution of $\mathcal{C}_f$ can be obtained. 

(2) For the linear code $\mathcal{C}_{D(f)}$, since there are two $b$'s ($0$ and $1$) such that $W_f(b)=2^n$ and $(2^n-2)$ $b$'s such that $W_f(b)=0$, the dimension of $\mathcal{C}_{D(f)}$ equals $n-1$ by (\ref{dimension}). Moreover, by (\ref{wt_new}), the Hamming weights of the codewords $\mathbf{c}_b$ in $\mathcal{C}_{D(f)}$ satisfy $$\mathrm{wt}(\mathbf{c}_b)\in \{0, 2^{n-2}\}.$$ The desired weight enumerator thus follows.
\end{proof}

\begin{Th}
	\label{Th_n=2m+1_4}
	Let $n=2m+1$ and $f(x) = x^{2^{m+2}+4}+x^{2^{m+1}+2}+x^2+x$. 	Define two linear codes $\mathcal{C}_{f}$ and $\mathcal{C}_{D(f)}$ as in \eqref{Eq-GenCons1} and \eqref{Eq-GenCons2}, respectively. 
	Then, 
	\begin{enumerate}[(1)]
		\item  $\mathcal{C}_{f}$ is a $\left[2^{n}-1, 2n-1\right]$ binary linear code with weight distribution in Table \ref{Th_n=2m+1_table3}.
		\item  $\mathcal{C}_{D(f)}$ is a $\left[ 2^{n-1}-1,n-1\right]$ binary linear code with  weight enumerator $1+(2^{n-1}-1)z^{2^{n-2}}$.
	\end{enumerate}
\end{Th}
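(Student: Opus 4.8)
The target parameters coincide exactly with those of Theorem~\ref{Th_n=2m+1_3}, so the plan is to mirror that proof: since $f$ is quadratic I would compute the Walsh transform $W_f(a,b)$ through the kernel-dimension formula \eqref{Eq-Quad-2} and then read off both codes' parameters from the resulting Walsh spectrum, just as in Theorems~\ref{Th_n=2m+1_1} and~\ref{Th_n=2m+1_3}. The decisive first move is to simplify $\varphi_b(x)=\tr_n(bf(x))$ using $\tr_n(z)=\tr_n(z^2)$. One has $\tr_n\!\big(bx^{2^{m+2}+4}\big)=\tr_n\!\big(b^{2^{n-1}}x^{2^{m+1}+2}\big)$ and $\tr_n(bx^2)=\tr_n\!\big(b^{2^{n-1}}x\big)$, so the two ``large'' monomials merge with the existing $x^{2^{m+1}+2}$ term and the two linear monomials merge, and I expect the clean form $\varphi_b(x)=\tr_n\!\big(b'(x^{2^{m+1}+2}+x)\big)$ with $b'=b^{2^{n-1}}+b$. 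In particular $\tr_n(f(x))\equiv 0$, which is the analogue of the cancellation exploited for $b=1$ in Theorem~\ref{Th_n=2m+1_3}.

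Next I would analyse the bilinear form of $\varphi_b$. From the single quadratic monomial $\tr_n(b'x^{2^{m+1}+2})$, normalising to $x^{2^{m+1}}$ yields the linearized polynomial $L_b(y)=b'y^2+b'^{2^m}y$, whose kernel is $\{0,\,b'^{2^m-1}\}$. Since $b'=b^{2^{n-1}}+b$ is the Artin--Schreier map, $b'=0$ holds precisely when $b\in\{0,1\}$; for these two values $\varphi_b\equiv 0$, so I would treat them separately and obtain $W_f(a,0)=W_f(a,1)=2^n$ iff $a=0$ and $0$ otherwise. For every $b\notin\{0,1\}$ the kernel has dimension $1$, so by \eqref{Eq-Quad-2} we get $W_f(a,b)\in\{0,\pm 2^{(n+1)/2}\}$.

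The heart of the argument, and the step I expect to be most delicate, is evaluating $\varphi_b$ on the nonzero kernel element $y_0=b'^{2^m-1}$. Reducing the exponent $(2^m-1)(2^{m+1}+2)=2^n-2$ gives $y_0^{2^{m+1}+2}=b'^{-1}$, hence $\tr_n(b'y_0^{2^{m+1}+2})=\tr_n(1)=1$ because $n$ is odd; meanwhile $\tr_n(b'y_0)=\tr_n(b'^{2^m})=\tr_n(b')=\tr_n(b^{2^{n-1}}+b)=0$. Therefore $\varphi_b(y_0)=1$ for all $b\notin\{0,1\}$, which forces $W_f(b)=W_f(0,b)=0$ for every such $b$. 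Consequently $\mathcal{C}_{D(f)}$ has $d_{K_2}=1$ (only $b\in\{0,1\}$ give $W_f(b)=2^n$), dimension $n-1$, and by \eqref{wt_new} a single nonzero weight $2^{n-2}$, giving the stated enumerator $1+(2^{n-1}-1)z^{2^{n-2}}$.

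Finally, for $\mathcal{C}_f$ the pairs with $W_f(a,b)=2^n$ are exactly $(0,0)$ and $(0,1)$, so $d_{K_1}=1$ and the dimension is $2n-1$; feeding the three values $\{0,2^n,\pm 2^{(n+1)/2}\}$ into the power-moment system \eqref{Eq-WtDist1} and then applying \eqref{WtDist-Relation1} reproduces the frequencies of Table~\ref{Th_n=2m+1_table3}, exactly as in Theorem~\ref{Th_n=2m+1_3}. The only real obstacle is computational rather than conceptual: one must carry out the exponent reductions modulo $2^n-1$ (and modulo $n$ in the normalisation of $L_b$) correctly, since an off-by-one there would alter either the kernel dimension or the value $\varphi_b(y_0)$, and with it the whole weight distribution.
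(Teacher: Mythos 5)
Your proof is correct and follows essentially the route the paper intends: the paper omits this proof entirely, stating only that it is ``very similar'' to that of Theorem~\ref{Th_n=2m+1_3}, and your argument is a faithful execution of exactly that strategy (kernel of the bilinear form via \eqref{Eq-Quad-2}, evaluation of $\varphi_b$ at the nonzero kernel element to force $W_f(b)=0$ for $b\notin\{0,1\}$, then the power moments \eqref{Eq-WtDist1} and the relation \eqref{WtDist-Relation1}). Your preliminary reduction $\tr_n(bf(x))=\tr_n\bigl(b'(x^{2^{m+1}+2}+x)\bigr)$ with $b'=b^{2^{n-1}}+b$, exploiting $x^{2^{m+2}+4}=\bigl(x^{2^{m+1}+2}\bigr)^2$ and $\tr_n(z)=\tr_n(z^2)$, is a clean and correct shortcut that makes the kernel $\{0,\,b'^{2^m-1}\}$, the key evaluation $\varphi_b(y_0)=1$, and all the resulting parameters and weight distributions come out exactly as claimed.
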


\begin{proof}
	The proof is very similar with that of Theorem \ref{Th_n=2m+1_3} and we omit it here.
\end{proof}

{\begin{example}
		When $m=3$, the codes $\mathcal{C}_f$ in Theorems \ref{Th_n=2m+1_3} and \ref{Th_n=2m+1_4} are two $\left[127, 13, 56\right]$ binary linear codes with the same weight enumerator 
		$$ 1+ 2268 z^{56} + 4159 z^{64} +  1764 z^{72}. $$
		Referring to the code table \cite{Grassl:codetables}, the linear codes $\mathcal{C}_f$ have the best known parameter.
		When $m=3$, the codes $\mathcal{C}_{D(f)}$ in Theorems \ref{Th_n=2m+1_3} and \ref{Th_n=2m+1_4} are two $\left[63,6,32\right]$ binary linear codes with the same weight enumerator $$1+63z^{32}.$$
		Referring to the code table \cite{Grassl:codetables}, the linear codes $\mathcal{C}_{D(f)}$ are optimal.
\end{example}}

\subsection{The case $n=3m$}

In this subsection, we consider binary linear codes from the first two-to-one polynomial in Lemma \ref{2-1-3}.
The second one will be generalized in Section 4 and the corresponding linear code will be discussed later.

\begin{Th}\label{Th_n=3m}
	Let $n=3m$ with $m\equiv0\pmod3$ and $f(x)=x^{2^{2m}+1}+x^{2^{m+1}}+x^{2^m+1}+x$. Define two linear codes $\mathcal{C}_{f}$ and $\mathcal{C}_{D(f)}$ as in \eqref{Eq-GenCons1} and \eqref{Eq-GenCons2}, respectively. 
	Then, 
	\begin{enumerate}[(1)]
		\item  $\mathcal{C}_{f}$ is a $\left[2^{n}-1, 5m\right]$ binary linear code with weight distribution in Table \ref{Th_n=3m_table1}.
		\item  $\mathcal{C}_{D(f)}$ is a $\left[ 2^{n-1}-1,n-1\right]$ binary linear code with  weight enumerator $1+(2^{n-1}-1)z^{2^{n-2}}$.
	\end{enumerate}
\end{Th}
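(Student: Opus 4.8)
The plan is to use that $f$ is quadratic, so that by \eqref{Eq-Quad-2} every Walsh value is controlled by the kernel $V_{\varphi_b}$ of the bilinear form of $\varphi_b(x)=\tr_n(bf(x))$ together with the restriction $\varphi_b|_{V_{\varphi_b}}$. First I would compute this bilinear form and read off its associated linearized polynomial. A short calculation (collecting everything onto the coefficient of $x$) gives
$$L_b(y)=(b+b^{2^{2m}})\,y^{2^{2m}}+(b+b^{2^m})\,y^{2^m},\qquad V_{\varphi_b}=\{y:L_b(y)=0\}.$$
Since both coefficients vanish simultaneously exactly when $b\in\gf_{2^m}$, the analysis splits into the cases $b\in\gf_{2^m}$ and $b\notin\gf_{2^m}$.

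For $b\in\gf_{2^m}$ the map $\varphi_b$ is $\gf_2$-linear. Writing $\tr_n=\tr_m\circ\tr_{n/m}$ and $T=\tr_{n/m}(x)$, I would verify $\tr_{n/m}(f(x))=T+T^2$, so that $\varphi_b(x)=\tr_n\!\big((b+b^{2^{m-1}})x\big)$; this is identically zero iff $b+b^{2^{m-1}}=0$ iff $b\in\gf_2=\{0,1\}$. For $b\notin\gf_{2^m}$ both coefficients of $L_b$ are nonzero, so writing $A=b+b^{2^{2m}}$ and $B=b+b^{2^m}$ the kernel has size $2^m$ or $1$ according as $B/A$ is or is not a $(2^m-1)$-th power in $\gf_{2^n}^*$. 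Using $n=3m$, i.e. that $\sigma:x\mapsto x^{2^m}$ has order $3$, I would prove the identity $N_{\gf_{2^n}/\gf_{2^m}}(A)=N_{\gf_{2^n}/\gf_{2^m}}(B)$; hence $B/A$ always has norm $1$, is therefore a $(2^m-1)$-th power, and $\dim_{\gf_2}V_{\varphi_b}=m$ for all such $b$. Consequently $W_f(a,b)\in\{0,\pm2^{2m},2^n\}$. Counting the pairs with $W_f(a,b)=2^n$ gives $K_1=\{(b+b^{2^{m-1}},b):b\in\gf_{2^m}\}$, so $d_{K_1}=m$ and $\dim\mathcal{C}_f=2n-m=5m$; the three Walsh values $0,\pm2^{2m}$ then yield a three-weight code whose multiplicities follow by solving \eqref{Eq-WtDist1}, and the weights come from \eqref{WtDist-Relation1}, giving Table~\ref{Th_n=3m_table1}. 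The same bookkeeping shows $K_2=\{0,1\}$, so $\mathcal{C}_{D(f)}$ has dimension $n-1$.

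The decisive point for $\mathcal{C}_{D(f)}$ is that every nonzero codeword has weight $2^{n-2}$, i.e. $W_f(0,b)=0$ for all $b\notin\{0,1\}$, and by \eqref{wt_new} this is exactly where the hypothesis $m\equiv0\pmod3$ enters. For $b\in\gf_{2^m}\setminus\{0,1\}$ the nonzero linear form above already forces $W_f(0,b)=0$. For $b\notin\gf_{2^m}$ I would restrict $\varphi_b$ to $V_{\varphi_b}$: from $L_b(y)=0$ one gets $y^{2^m}=(A/C)y$ and $y^{2^{2m}}=(B/C)y$ with $C=A+B$, and the simplification $A^2+AC+BC=B^2$ collapses $f$ on $V_{\varphi_b}$ to $f(y)=y^{2^{2m+1}}+y$, whence $\varphi_b|_{V_{\varphi_b}}(y)=\tr_n\!\big((b+b^{2^{m-1}})y\big)$. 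Evaluating the relative trace of this form along a generator $y$ of $V_{\varphi_b}$ (using $\sigma(y)=(A/C)y$, $\sigma^2(y)=(B/C)y$) shows it is a nonzero multiple of $\Phi:=b\sqrt{A}+b^{2^m}\sqrt{B}+b^{2^{2m}}\sqrt{C}$, so $\varphi_b|_{V_{\varphi_b}}\not\equiv0$ iff $\Phi\neq0$. Using $\sqrt{C}=\sqrt{A}+\sqrt{B}$ (additivity of square roots in characteristic $2$) I would rewrite $\Phi=u^3+u^2v+v^3$ with $u=\sqrt{A}$, $v=\sqrt{B}$, so that $\Phi=0$ iff $\rho=v/u$ satisfies $\rho^3+\rho+1=0$, i.e. iff $B/A=\rho^2$ is a root of the $\gf_2$-irreducible cubic $X^3+X+1$.

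The main obstacle, and the place the hypothesis is used, is ruling this out. Every root $\rho_0$ of $X^3+X+1$ lies in $\gf_{8}^*$ and has order $7$, so its relative norm is $N_{\gf_{2^n}/\gf_{2^m}}(\rho_0)=\rho_0^{\,1+2^m+2^{2m}}$; since $2^m\equiv1\pmod7$ precisely when $3\mid m$, for $m\equiv0\pmod3$ the exponent is $\equiv3\pmod7$ and $N(\rho_0)=\rho_0^{3}\neq1$. As $B/A$ has norm $1$, it cannot be such a root, so $\Phi\neq0$ and $W_f(0,b)=0$. (For $m\equiv2\pmod3$ this norm would equal $1$ and the argument fails, explaining the restriction.) Hence all nonzero codewords of $\mathcal{C}_{D(f)}$ have weight $2^{n-2}$, giving the stated enumerator. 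I expect the norm identity $N(A)=N(B)$ and the cube reduction $\Phi=u^3+u^2v+v^3$ to be the two technical hearts of the proof; the remaining steps are routine manipulation with the Pless power moments.
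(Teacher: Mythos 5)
Your proposal is correct, and although its skeleton (the quadratic-form formula \eqref{Eq-Quad-2}, the split into $b\in\gf_{2^m}$ versus $b\notin\gf_{2^m}$, the kernel of dimension $m$, and the power-moment bookkeeping via \eqref{Eq-WtDist1} and \eqref{WtDist-Relation1}) necessarily coincides with the paper's, your proof of the decisive claim --- that $W_f(0,b)=0$ for every $b\notin\gf_{2^m}$ --- follows a genuinely different route. The paper expands the obstruction $\tr_{n/m}(U_b)$ in the elementary symmetric functions $\alpha,\beta,\gamma$ of the conjugates of $b$, shows its vanishing forces $\alpha^6+\alpha^4\beta+\beta^3+\gamma^2=0$, and then invokes the cubic factorization criterion (Lemma \ref{cubic}), with $3\mid m$ entering through the assertion that $\omega$ and $\alpha\beta+\gamma$ are cubes, to conclude that the minimal polynomial of $b$ would split over $\gf_{2^m}$, contradicting irreducibility. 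You instead work with the conjugate differences $A=b+b^{2^{2m}}$ and $B=b+b^{2^m}$: your two identities check out ($\Phi^2=A^3+A^2B+B^3$, and indeed both sides equal $b^3+b^{3\cdot2^m}+b^{3\cdot2^{2m}}+b^{2^{m+1}+1}+b^{2^{2m}+2}+b^{2^{2m+1}+2^m}$, which is exactly the paper's $\tr_{n/m}(U_b)$; and $N_{\gf_{2^n}/\gf_{2^m}}(A)=N_{\gf_{2^n}/\gf_{2^m}}(B)$, since both norms equal the product of the three pairwise differences of the conjugates of $b$), so vanishing of the obstruction would make $B/A$ a root of $X^3+X+1$, i.e.\ an order-$7$ element of $\gf_8$, whose relative norm is $\rho_0^{1+2^m+2^{2m}}=\rho_0^{3}\neq1$ when $3\mid m$, contradicting $N(B/A)=1$. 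Your route buys three things: it avoids the factorization lemmas for cubics and quartics entirely; the same norm identity simultaneously gives a slicker proof that $\ker L_b$ has dimension exactly $m$ (the paper instead exhibits the kernel explicitly as $(b^{2^m}+b^{2^{2m}})\gf_{2^m}$, which is your $C\gf_{2^m}$); and it makes transparent, via $2^m \bmod 7$, precisely why the hypothesis $m\equiv0\pmod 3$ is needed and where the argument breaks for $m\equiv 2\pmod 3$. What the paper's choice buys is uniformity: its symmetric-function-plus-Lemma-\ref{cubic} toolkit is reused elsewhere in the paper (notably in Section 4), whereas your norm argument is special to the cubic extension structure $n=3m$ --- which is, of course, exactly the situation at hand.
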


\begin{table}[!t]
	\centering
	\caption{The weight distribution of the codes $\mathcal{C}_{f}$ in Theorem \ref{Th_n=3m}} \label{Th_n=3m_table1}
	\begin{tabular}{cc}
		\hline
		Weight & Multiplicity \\
		\hline
		$0$ & $1$\\
		$2^{n-1}-2^{2m-1}$ &  $2^{4m-1}+2^{3m-1}-2^{2m-1}-2^{m-1}$ \\
		$2^{n-1}$ & $  2^{5m} +2^{2m} - 2^{4m} - 1  $ \\
		$2^{n-1}+2^{2m-1}$ & $2^{4m-1}+2^{m-1}-2^{3m-1}-2^{2m-1}$\\
		\hline
	\end{tabular}
\end{table}

\begin{proof}
	Firstly, we shall compute the value $W_f(a,b)$. Clearly, when $(a,b)=(0,0)$, $W_f(a,b)=2^n$. For $(a,b)\neq(0,0)$, let $\varphi_b(x)=\tr_n(bf(x))$. If $(a,b)=(a,1)$, then $$W_f(a,b) = \sum_{x\in \gf_{2^n}}(-1)^{\tr_n\left( ax + x^{2^{2m}+1}+x^{2^{m+1}}+x^{2^m+1}+x  \right)} = \sum_{x\in \gf_{2^n}}(-1)^{\tr_n(ax)}, $$
	which equals $2^n$ if $a=0$ and $0$ otherwise. Moreover, if $b\in\gf_{2^m}\backslash\gf_{2}$, then 
	$$W_f(a,b)=\sum_{x\in \gf_{2^n}}(-1)^{\tr_n\left(\left(a^{2^{m+1}}+b+b^2\right)x^{2^{m+1}}\right)},$$
	which equals $2^n$ if $a^{2^{m+1}}+b+b^2=0$ and $0$ otherwise. Next, we assume $b\in\gf_{2^n}\backslash\gf_{2^m}$.
	
	 Since $\tr_n(f)$ is a quadratic function, according to (\ref{Eq-Quad-2}), we need to determine the dimension of kernel of the bilinear form of $\varphi_b(x)$. Note that the bilinear form of $\varphi_b(x)$ is given by 
	\begin{eqnarray*}
	B_{\varphi_b}(x,y) &=& \varphi_b(x+y) + \varphi_b(x) + \varphi_b(y) \\
	&=& \tr_n\left( \left( b+b^{2^m} \right)\left(yx^{2^m} + y^{2^m}x \right) \right)   = \tr_n\left( L_b(y)x^{2^{m}}\right),
	\end{eqnarray*}
	where $L_b(y)= \left(b+b^{2^m}\right)y + \left( b^{2^m} + b^{2^{2m}} \right) y^{2^{2m}}$. Let $\ker(L_b) = \left\{ y: y \in \gf_{2^n} ~\text{and}~ L_b(y) = 0 \right\}$. Since $b\in\gf_{2^n}\backslash\gf_{2^m}$, we have 
	$$\ker(L_b) = \left\{ \left(b^{2^m}+b^{2^{2m}}\right) \eta: \eta\in\gf_{2^m}  \right\}.$$
 It follows from (\ref{Eq-Quad-2}) that 
$$ W_f(a,b)  = \begin{cases}
\pm 2^{2m}, & \text{if } \tr_n(bf(x)+ax)=0 \text{ for all } x \in \ker(L_b), \\
0, & \text{otherwise.}  
\end{cases}$$
For any $x=\left(b^{2^m}+b^{2^{2m}}\right) \eta\in\ker(L_b)$ with $\eta\in\gf_{2^m}$, 
$$\tr_n(bf(x)+ax) = \tr_n\left( U_{a,b} \eta^2 \right) = \tr_m\left( \tr_{n/m}(U_{a,b}) \eta^2 \right).$$
where 
$$U_{a,b} =  a^2\left(b+b^{2^m}\right)^{2^{m+1}} +  \left( b+b^{2^m} \right)^{2^{2m}+2^m+1} + \left( b+b^{2^m} \right)^{2^{2m+1}}\left(b^{2^{m+1}}+b\right). $$
Clearly, $\tr_n(bf(x)+ax)=0$ if and only if $a^2= \left(b+b^{2^m}\right)^{2^{2m}-2^m+1} + \left(b+b^{2^m}\right)^{2^{2m+1}-2^{m+1}}\left(b^{2^{m+1}}+b\right)$. Thus for any $a\in\gf_{2^n}, b\in\gf_{2^n}\backslash\gf_{2^m}$, $$W_f(a,b)\in\left\{ 0,\pm 2^{2m} \right\}.$$
As for $W_f(b)$, we need the following claim which will be showed at the end of the proof. 

	{\bfseries Claim.} For any $b\in\gf_{2^n}\backslash\gf_{2^m}$, $\tr_{n/m}(U_b)\neq0$, where $U_b=U_{0,b}$.
	
		According to the above claim, it is clear that $W_f(b)=0$  for any $b\in\gf_{2^n}\backslash\gf_{2^m}$.
		
		Next, we consider the parameters of $\mathcal{C}_f$ and $\mathcal{C}_{D(f)}$, respectively. 
		
		(1) For the linear code $\mathcal{C}_f$, since $W_f(a,b)=2^n$ if and only if  $b\in\gf_{2^m}$ and $a^{2^{m+1}}+b+b^2=0$, by (\ref{dimensionK1}), the dimension of $K_1 = \left\{ a,b\in\gf_{2^n}: W_f(a,b)=2^n \right\}$ is $m$ and thus the dimension of $\mathcal{C}_f$ is $2n-m=5m$. Moreover, for any $a,b\in\gf_{2^n}$, $W_f(a,b)\in\left\{ 0, 2^n, \pm 2^{2m} \right\}$. Let 
		$$v_1=-2^{2m}, ~~~ v_2 = 0, ~~~ v_3 = 2^{2m}. $$
Then by computing (\ref{Eq-WtDist1}), we can obtain that the occurrences of $W_f(a,b)=v_i$'s, $i=1,2,3$ in the Walsh spectrum of $f$ are 	$$ \begin{cases}X_1= 2^{5m-1}+2^{2m-1}-2^{4m-1}-2^{3m-1} \\ 
X_2 = 2^{6m}+2^{3m}-2^{5m}-2^m  \\ 
X_3 = 2^{5m-1}+2^{4m-1}-2^{3m-1}-2^{2m-1}. \end{cases}
$$
Finally, by (\ref{WtDist-Relation1}), the desired weight distribution of $\mathcal{C}_f$ can be obtained. 
		
		(2) For the linear code $\mathcal{C}_{D(f)}$, since there are two $b$'s ($0$ and $1$) such that $W_f(b)=2^n$ and ($2^n-2$) $b$'s such that $W_f(b)=0$, by (\ref{dimension}), the dimension of $\mathcal{C}_{D(f)}$ equals $n-1$. Moreover, by  (\ref{wt_new}), we know that the weights of the codewords $\mathbf{c}_b$ in $\mathcal{C}_{D(f)}$ satisfy $$\mathrm{wt}(\mathbf{c}_b)\in \{0, 2^{n-2}\}.$$ Furthermore, the weight enumerator is clear.  
	
		Finally, we prove the claim, i.e., for any $b\in\gf_{2^n}\backslash\gf_{2^m}$, $\tr_{n/m}(U_b)\neq0$. By direct computation, we have 
	\begin{equation}
	\label{Ub}
	\tr_{n/m}(U_b) = b^3+b^{3\cdot 2^m} + b^{3\cdot 2^{2m}} + b^{2^{2m+1}+2^m}+ b^{2^{2m}+2} + b^{2^{m+1}+1}.
	\end{equation}
	For any $b\in\gf_{2^n}\backslash\gf_{2^m}$, define 
	\begin{equation*}
	\begin{cases} 
	b + b^{2^m} + b^{2^{2m}} = \alpha \\ 
	b^{2^m+1}+b^{2^{2m}+1}+b^{2^{2m}+2^m} = \beta  \\ 
	b^{2^{2m}+2^m+1} = \gamma \end{cases}
	\end{equation*}
	and $g(x)=x^3+\alpha x^2+\beta x+\gamma\in\gf_{2^m}[x]$.  Then it is clear that $g(x)=(x+b)\left(x+b^{2^m}\right)\left(x+b^{2^{2m}}\right)$ and $g(x)$ is irreducible on $\gf_{2^m}$. Let $ u = b^{2^{2m+1}+2^m}+ b^{2^{2m}+2} + b^{2^{m+1}+1}$ and $v= b^{2^{2m+1}+1}+ b^{2^{m}+2} + b^{2^{m+1}+2^{2m}}$. Then  we have 
	\begin{eqnarray*}
		u+v &=& \left(b+b^{2^m}\right)\left( b+b^{2^{2m}} \right)\left(b^{2^{m}}+b^{2^{2m}}\right)\\
		&=& (\alpha+b)\left(\alpha+b^{2^m}\right)\left(\alpha+b^{2^{2m}}\right) = g(\alpha) = \alpha\beta + \gamma.
	\end{eqnarray*}
	In addition, from the expanded form of $\left( b + b^{2^m} + b^{2^{2m}} \right)^3$, we know that 
	$$b^3+b^{3\cdot 2^m} + b^{3\cdot 2^{2m}} = \alpha^3 + u+v = \alpha^3 + \alpha\beta + \gamma.$$
	Moreover, since $b,b^{2^m}, b^{2^{2m}}$ are the roots of $g(x)$ in $\gf_{2^n}$, $\frac{1}{b},\frac{1}{b^{2^m}}, \frac{1}{b^{2^{2m}}}$ are the roots of $$g^{'}(x)=\frac{1}{\gamma}x^3g\left(\frac{1}{x}\right) = x^3+\frac{\beta}{\gamma}x^2+\frac{\alpha}{\gamma}x+\frac{1}{\gamma}.$$ in $\gf_{2^n}$. Similarly, we have 
	$$\frac{1}{b^3}+\frac{1}{b^{3\cdot 2^m}}+ \frac{1}{b^{3\cdot 2^{2m}}} =\frac{\beta^3}{\gamma^3} + \frac{\alpha\beta}{\gamma^2}+\frac{1}{\gamma} = \frac{\beta^3+\alpha\beta\gamma+\gamma^2}{\gamma^3}. $$
	Furthermore, 
	\begin{eqnarray*}
		uv &=& \left(  b^{2^{2m+1}+2^m}+ b^{2^{2m}+2} + b^{2^{m+1}+1} \right) \left(  b^{2^{2m+1}+1}+ b^{2^{m}+2} + b^{2^{m+1}+2^{2m}}\right) \\
		&=& \gamma^2+\gamma \left(b^3+b^{3\cdot 2^m} + b^{3\cdot 2^{2m}} \right) + b^{3\cdot\left(2^m+1\right)} +b^{3\cdot\left(2^{2m}+2^m\right)} + b^{3\cdot\left(2^{2m}+1\right)}\\
		&=& \gamma^2+\gamma \left(\alpha^3 + \alpha\beta + \gamma\right) + \gamma^3\left( \frac{1}{b^3}+\frac{1}{b^{3\cdot 2^m}}+ \frac{1}{b^{3\cdot 2^{2m}}} \right)\\
		&=& \alpha^3\gamma + \alpha\beta\gamma + \beta^3+\alpha\beta\gamma+\gamma^2\\
		&=&\alpha^3\gamma +  \beta^3+\gamma^2.
	\end{eqnarray*}
	Now we go back to the expression of $\tr_{n/m}(U_b)$, i.e., (\ref{Ub}). If $\tr_{n/m}(U_b)=0$, we have 
	$$u=b^3+b^{3\cdot 2^m} + b^{3\cdot 2^{2m}} = \alpha^3 + \alpha\beta + \gamma$$
	and then 
	$$v= \alpha\beta + \gamma+u = \alpha^3. $$
	Thus $uv=\alpha^6+\alpha^4\beta+ \alpha^3\gamma = \alpha^3\gamma +  \beta^3+\gamma^2$, namely,
	\begin{equation}
	\label{abg}
	\alpha^6+\alpha^4\beta + \beta^3+\gamma^2=0.
	\end{equation}
	Next, we show that under  (\ref{abg}), $g(x)=0$ has three solutions in $\gf_{2^m}$, which is in contradiction with the irreduciblity of $g(x)$. Firstly, using $x+\alpha$ to replace $x$ in $g(x)=0$ and simplifying it, we obtain 
	\begin{equation}
	\label{g=0}
	x^3+(\alpha^2+\beta)x+\alpha\beta + \gamma = 0.
	\end{equation} 
	Moreover, $$\tr_m\left( \frac{(\alpha^2+\beta)^3}{(\alpha\beta + \gamma)^2}  \right) = \tr_m\left(\frac{\alpha^6+\beta^3+\alpha^4\beta+\alpha^2\beta^2}{(\alpha\beta + \gamma)^2}\right) = \tr_m(1),$$
	where the last equality is derived from (\ref{abg}). Furthermore, it is easy to get that the equation $$t^2+(\alpha\beta+\gamma)t+(\alpha^2+\beta)^3=0$$ has a solution $t_1=(\alpha\beta+\gamma)\omega$,  where $\omega^3=1$. Since $m\equiv0\pmod3$, $\omega$ is a cube in $\gf_{2^m}$ ($m$ even), $\gf_{2^{2m}}$ ($m$ odd). In addition, $\alpha\beta+\gamma = \sqrt{(\alpha^2+\beta)^3}$ is also a cube in $\gf_{2^m}$. Thus according to Lemma \ref{cubic}, (\ref{g=0}) has three solutions in $\gf_{2^m}$, which is a contradiction, and thus $\tr_{n/m}(U_b)\neq0$. 
\end{proof}

\section{Binary linear codes from new two-to-one polynomials}
In this section, we construct two new classes of two-to-one functions , of which the first one is a generalization of  (2) in Lemma \ref{2-1-3}. Then we also obtain some binary linear codes $\mathcal{C}_{f}$ and $\mathcal{C}_{D(f)}$ from these two new two-to-one functions .  

\subsection{Two new classes of two-to-one functions }

\begin{Th}
	Let $n=km$ with $k,m$ odd and $f(x) = \tr_{n/m}\left(x^{2^m+1}\right)+x$. Then $f(x)$ is two-to-one over $\gf_{2^n}$.
\end{Th}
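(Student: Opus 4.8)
The plan is to characterize all pairs $x \neq y$ with $f(x) = f(y)$ and to show that every such collision is of the form $y = x+1$. Once this is established, the map $x \mapsto x+1$ is a fixed-point-free involution that pairs up the preimages, so each nonempty fibre of $f$ has exactly two elements and $f$ is two-to-one.

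First I would set $u = x+y$ and expand $f(x)+f(y)$. Writing $y = x+u$ and using $(x+u)^{2^m+1} = x^{2^m+1} + u x^{2^m} + u^{2^m} x + u^{2^m+1}$, the collision equation $f(x)+f(y)=0$ becomes
\begin{equation*}
\tr_{n/m}\!\left(u x^{2^m} + u^{2^m} x + u^{2^m+1}\right) = u.
\end{equation*}
The crucial observation is that the left-hand side is a value of the relative trace $\tr_{n/m}$ and therefore lies in $\gf_{2^m}$; hence any nontrivial collision forces $u \in \gf_{2^m}^{*}$.

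Next, under the assumption $u \in \gf_{2^m}$ (so $u^{2^m}=u$), I would simplify the argument of the trace to $u\left(x^{2^m}+x+u\right)$ and pull the scalar $u$ out by $\gf_{2^m}$-linearity of $\tr_{n/m}$. Two standard identities then finish the reduction: $\tr_{n/m}\!\left(x^{2^m}\right) = \tr_{n/m}(x)$ by Frobenius-invariance of the relative trace, which kills the entire $x$-dependence, and $\tr_{n/m}(u) = k\,u = u$ for $u \in \gf_{2^m}$, where the final equality uses that $k$ is odd. The collision equation then collapses to $u^2 = u$, i.e.\ $u(u+1)=0$, forcing $u = 1$ since $u \neq 0$.

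Finally I would verify the converse, that $f(x) = f(x+1)$ holds for \emph{every} $x$: substituting $u=1$ into the same computation gives $\tr_{n/m}\!\left(x^{2^m}+x+1\right) = \tr_{n/m}(1) = k = 1$, whence $f(x)+f(x+1) = 1 + (x+(x+1)) = 0$. Thus $f(x)=f(y)$ with $x \neq y$ if and only if $y = x+1$, and the two-to-one property follows from the fixed-point-free involution $x\mapsto x+1$. I expect the only real subtlety to be the reduction step: recognizing that the trace value must lie in $\gf_{2^m}$ is exactly what pins $u$ down to $\gf_{2^m}$ and makes the equation solvable, after which the correct use of the parity of $k$ to evaluate $\tr_{n/m}$ on constants yields $u=1$.
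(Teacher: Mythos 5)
Your proof is correct and takes essentially the same approach as the paper: the paper likewise reduces to the difference equation (there written as $f(x+a)+f(a)=0$), observes that the relative trace value lies in $\gf_{2^m}$ so the difference must lie in $\gf_{2^m}$, and then uses Frobenius-invariance of $\tr_{n/m}$ together with the oddness of $k$ to collapse the equation to $u^2+u=0$, giving exactly the two solutions $u=0,1$. Your explicit converse check that $f(x)=f(x+1)$ for all $x$ is folded into the paper's equivalence automatically, but the underlying computation is identical.
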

\begin{proof}
	 According to the definition of two-to-one functions , it suffices to prove that for any $a\in\gf_{2^n}$, $|f^{-1}(a)|\in\{0,2\}$. Namely, for any $a\in\gf_{2^n}$, $f(x+a)+f(a)=0$ has exactly two solutions in $\gf_{2^n}$. By simplifying the equation, we have 
	\begin{equation}
	\label{2-1}
	\tr_{n/m}\left( x^{2^m+1}+ax^{2^m}+a^{2^m}x \right) = x.
	\end{equation}
	This implies $x\in\gf_{2^m}$, and then  (\ref{2-1}) becomes $x^2+x=0$, which has exactly two solutions $x=0,1$.
\end{proof}

\begin{Th}
		Let $n=3m$ with $m$ odd and $f(x)=x^{2^{2m+1}+1}+x^{2^{m+1}+1}+x^4+x^3$. Then $f(x)$ is two-to-one over $\gf_{2^n}$.
\end{Th}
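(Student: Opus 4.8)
The plan is to show directly that every fiber of $f$ has size exactly two, i.e.\ that for each $x\in\gf_{2^n}$ there is precisely one $y\neq x$ with $f(y)=f(x)$. The starting point is the simplification
$$
f(x)=x\,\tr_{n/m}(x)^2+x^4,
$$
obtained from $x\cdot\tr_{n/m}(x^2)=x^3+x^{2^{m+1}+1}+x^{2^{2m+1}+1}$ together with $\tr_{n/m}(x^2)=\tr_{n/m}(x)^2$. Writing $s=\tr_{n/m}(x)\in\gf_{2^m}$, this shows that on each coset $C_s=\{x:\tr_{n/m}(x)=s\}$ the map $f$ coincides with the $\gf_2$-linear map $\phi_s(y)=y^4+s^2y$, and that $\tr_{n/m}(f(x))=s^4+s^3$.

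First I would prove that $f$ is injective on every coset $C_s$. If $x,x'\in C_s$ satisfy $f(x)=f(x')$, then $u:=x+x'$ lies in $\ker\tr_{n/m}$ and satisfies $u^4=s^2u$, so $u=0$ or $u^3=s^2$. In the second case $u$ is the unique cube root of $s^2$, which lies in $\gf_{2^m}^{*}$ because $m$ odd gives $\gcd(3,2^m-1)=1$; but then $\tr_{n/m}(u)=3u=u\neq0$, contradicting $u\in\ker\tr_{n/m}$. Hence $u=0$ and $x=x'$. Consequently every fiber of $f$ meets each coset at most once, and it remains only to locate the unique partner coset of each point.

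Applying $\tr_{n/m}$ to $f(y)=f(x)$ forces the trace $s'=\tr_{n/m}(y)$ of any second preimage to satisfy $s'^4+s'^3=s^4+s^3$; putting $w=s+s'$ this factors as $w\bigl(w^3+w^2+sw+s^2\bigr)=0$, so the admissible partner traces $s'\neq s$ are governed by the nonzero roots in $\gf_{2^m}$ of the cubic $w^3+w^2+sw+s^2$. For each such root I would then solve the affine equation $\phi_{s'}(y)=y^4+s'^2y=f(x)$ and verify that exactly one solution actually lies in $C_{s'}$. I expect the root count to follow from the cubic factorization of Lemma \ref{cubic} over $\gf_{2^m}$ combined with cube-residue information, exactly in the spirit of the $n=3m$ analysis in Theorem \ref{Th_n=3m}.

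The hard part will be this last existence-and-consistency step. Knowing a partner trace $s'$ only tells us that the hyperplane $\mathrm{Im}(\phi_{s'})$ is a candidate; I would still have to guarantee that $f(x)\in\phi_{s'}(C_{s'})$ and that precisely one of the (at most two) $\phi_{s'}$-preimages has trace $s'$. Ruling out fibers of size $1$ (no usable root) and of size larger than $2$ (the cubic splitting completely) is exactly where the hypotheses $m$ odd and $n=3m$ must enter decisively — through $\gcd(3,2^n-1)=1$, the transitivity $\tr_n=\tr_{m}\circ\tr_{n/m}$ that governs solvability of equations such as $x^4+x=c$, and the cube conditions in Lemma \ref{cubic} — and I anticipate that the bulk of the casework lives there.
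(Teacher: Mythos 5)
Your preparatory steps are correct and in fact parallel the paper's own proof: the identity $f(x)=x\,\tr_{n/m}(x)^2+x^4$, the injectivity of $f$ on each coset $C_s$ (this is exactly the paper's case $t=0$, settled the same way --- the cube root of $s^2$ is unique in $\gf_{2^n}$ because $n=3m$ is odd, hence it lies in $\gf_{2^m}$ and has relative trace equal to itself, nonzero), and the partner-trace cubic $w^3+w^2+sw+s^2$, which is literally the cubic $t^3+t^2+st+s^2$ appearing in the paper. But everything after that point is a plan rather than a proof, and the two items you defer are the entire substance of the theorem. (i) You never show that the cubic has exactly one root in $\gf_{2^m}$; the paper proves this by substituting $w=t_1+1$ and invoking Lemma \ref{cubic} via $\tr_m\bigl((s+1)^3/(s^2+s)^2\bigr)=\tr_m(1/s+1/s^2)=0\neq\tr_m(1)$ --- precisely where the hypothesis that $m$ is odd enters --- together with a separate look at the degenerate values $s\in\{0,1\}$. (ii) The ``existence-and-consistency step'' (exactly one preimage of $f(x)$ in the partner coset $C_{s'}$), which you announce as the hard part and leave entirely open, is where the paper spends the whole second half of its proof: it applies Lemma \ref{quartic} to show the quartic $x^4+(s^2+\bar t^2)x+a\bar t^2=0$ has exactly two roots in $\gf_{2^n}$, and then a Vieta argument (the two roots sum to a nonzero $\alpha$ with $\alpha^3=s^2+\bar t^2$, so $\alpha\in\gf_{2^m}^{*}$ and the roots cannot both have relative trace $\bar t$) to rule out two consistent solutions. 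What you have actually proved is only that every fiber meets every coset at most once, which falls far short of two-to-one.

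It is worth pointing out that your own framework closes item (ii) far more cheaply than the quartic casework you anticipate (and than the paper's Lemma \ref{quartic} computation): since $f$ is injective on $C_{s'}$ and $f(C_{s'})$ is contained in $\{v\in\gf_{2^n}:\tr_{n/m}(v)={s'}^4+{s'}^3\}$, and both sets have cardinality $2^{2m}$, they are equal; hence every element of relative trace $s^4+s^3={s'}^4+{s'}^3$ --- in particular $f(x)$ --- has exactly one preimage in $C_{s'}$, with no equation-solving at all. With that observation, item (i) (equivalently, that $s\mapsto s^4+s^3$ is exactly two-to-one on $\gf_{2^m}$) is the only real work left, and the resulting proof would deliver ``exactly two preimages'' directly, arguably more cleanly than the paper's bookkeeping, which only bounds the number of solutions of $f(x+a)+f(a)=0$ by two. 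But that completion is not in your text: as submitted, the proposal has a genuine gap at the core of the argument.
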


\begin{proof}
	 It suffices to prove that for any $a\in\gf_{2^n}$, the equation $f(x+a)+f(a)=0$, 
	\begin{equation}
	\label{eq1-1}
	x^{2^{2m+1}+1} + ax^{2^{2m+1}} + x^{2^{m+1}+1} + ax^{2^{m+1}} + x^4+x^3+ax^2+ \left(a^{2^{2m+1}}+a^{2^{m+1}} + a^2 \right)x =0, 
	\end{equation}
	has exactly two solutions in $\gf_{2^n}$. In fact, since $x=0$ is clearly a solution of (\ref{eq1-1}), we shall only show that (\ref{eq1-1}) has at most two solutions in $\gf_{2^n}$.
	
	Let $y=x^{2^m}$, $z=y^{2^m}$, $b=a^{2^m}$ and $c=b^{2^m}$. Then (\ref{eq1-1}) becomes 
	\begin{equation}
	\label{eq1-2}
	xz^2+xy^2+x^3+x^4+a(x^2+y^2+z^2)+(a^2+b^2+c^2)x =0.
	\end{equation}
	Raising (\ref{eq1-2}) to the $2^m$-th power and the $2^{2m}$-th power, we get 
	\begin{equation}
	\label{eq1-3}
	yx^2+yz^2+y^3+y^4+b(x^2+y^2+z^2)+(a^2+b^2+c^2)y =0
	\end{equation}
	and 
	\begin{equation}
	\label{eq1-4}
	zy^2+zx^2+z^3+z^4+c(x^2+y^2+z^2)+(a^2+b^2+c^2)z =0,
	\end{equation}
	respectively. Let $t=x+y+z$ and $s=a+b+c$. Computing the summation of (\ref{eq1-2}), (\ref{eq1-3}) and (\ref{eq1-4}), we obtain 
	\begin{equation*}
	t^4+t^3+st^2+s^2t =0.
	\end{equation*}
	Thus $t=0$ or $t^3+t^2+st+s^2=0$.
	
	If $t=0$, plugging it into (\ref{eq1-2}), we have 
	$x^4+sx=0$ and thus $x=0$ or $x^3=s$. It is clear that $x=0$ is a solution of (\ref{eq1-1}). If $x^3=s=a+b+c\in\gf_{2^m}$, then $x=s^{\frac{1}{3}}\in\gf_{2^m}$ and $y=z=x\in\gf_{2^m}$. Thus $x=x+y+z=t=0$. 
	
	If $t^3+t^2+st+s^2=0$, using $(t_1+1)$ to replace $t$, we  get 
	\begin{equation}
	\label{eq1-5}
	t_1^3+(s+1)t_1+s^2+s=0.
	\end{equation}
	Since 
	\begin{eqnarray*}
		& & \tr_m\left(\frac{(s+1)^3}{(s^2+s)^2}\right) \\
		&=& \tr_m\left( \frac{1}{s} + \frac{1}{s^2} \right) =0 \neq \tr_m(1),
	\end{eqnarray*}
	(\ref{eq1-5})  has exactly one solution in $\gf_{2^m}$ according to Lemma \ref{cubic}. Moreover, we can get the expression of the unique solution by Lemma \ref{cubic_solution}. For the equation $u^2+(s^2+s)u+(s+1)^3=0$, we have $$\left(\frac{u}{s^2+s}\right)^2+\frac{u}{s^2+s}=\frac{1}{s}+\frac{1}{s^2}$$
	and thus $u=s+1$ is a solution and $\epsilon=(s+1)^{\frac{1}{3}}$ is a solution of $x^3=u$ since $\gcd(3,2^m-1)=1$. Furthermore, $r=\epsilon+\frac{a}{\epsilon} = (s+1)^{\frac{1}{3}}+ (s+1)^{\frac{2}{3}} $
	is a solution of (\ref{eq1-5}) and thus $$\bar{t}=r+1=\epsilon+ \epsilon^2 +1,$$ where $\epsilon=(s+1)^{\frac{1}{3}}$, is the unique solution of $t^3+t^2+st+s^2=0$. Namely, $x+y+z$ equals a constant. Plugging $x+y+z=\bar{t}$ into (\ref{eq1-2}), we get
	\begin{equation}
	\label{eq1-6}x^4+(s^2+\bar{t}^2)x+a\bar{t}^2 = 0. 
	\end{equation} 
	Next, using Lemma \ref{quartic}, we will prove that the above equation has two solutions in $\gf_{2^n}$. However, we will also show that the two solutions can not satisfy $x+y+z=\bar{t}$ at the same time and thus (\ref{eq1-2}) has at most one solution in this case. Together with the zero solution, (\ref{eq1-2}) has at most two solutions in $\gf_{2^n}$ and thus $f(x)$ is two-to-one.

	Recall that $\epsilon^3=s+1$ and $\bar{t}=\epsilon+ \epsilon^2 +1$. Since $s^2+\bar{t}^2 = \epsilon^6+\epsilon^4+\epsilon^2$, if $s^2+\bar{t}^2=0$, then $\epsilon=0$ clearly ($\epsilon^2+\epsilon+1\neq0$ due to $m$ odd). Moreover, $s=\epsilon^3+1=1$ and $t=1$. Thus if $s=1$ and $t=1$, (\ref{eq1-6}) becomes $x^4=a$, which has exactly one solution in $\gf_{2^n}$. In the following, we assume that $s^2+\bar{t}^2\neq0$.  Let $f_1(r)=r^3+(s^2+\bar{t}^2)$. Then it is clear that $f_1=(1,2)$, which means that $f_1$ can factor as a product of a linear factor and an irreducible quadratic factor. Moreover, 
	$$r_1 = (s^2+\bar{t}^2)^{\frac{1}{3}} = (s^2+\epsilon^2+\epsilon^4+1)^{\frac{1}{3}} = (\epsilon^2+\epsilon^4+\epsilon^6)^{\frac{1}{3}}
	$$ is the unique solution of $f_1(r)=0$. Set $w_1=a\bar{t}^2\frac{r_1^2
	}{(s^2+\bar{t}^2)^2}$.  In addition,
	\begin{eqnarray*}
		\tr_n(w_1) &=& \tr_n\left( \frac{a(\epsilon+\epsilon^2+1)^2}{(\epsilon^2+\epsilon^4+\epsilon^6)^{\frac{4}{3}}} \right) \\
		&=& \tr_m\left(\tr_{n/m}\left(\frac{a\bar{t}^2}{(\epsilon^2+\epsilon^4+\epsilon^6)^{\frac{4}{3}}}  \right)\right) \\
		&=&\tr_m\left( \frac{s\bar{t}^2}{(\epsilon^2+\epsilon^4+\epsilon^6)^{\frac{4}{3}}} \right)\\
		&=& \tr_m\left( \frac{(\epsilon^3+1)(\epsilon+\epsilon^2+1)^2}{(\epsilon^2+\epsilon^4+\epsilon^6)^{\frac{4}{3}}} \right)\\
		&=&\tr_m\left( \frac{(\epsilon^2+\epsilon+1)^{\frac{4}{3}}}{\epsilon^{\frac{8}{3}}} + \frac{(\epsilon^2+\epsilon+1)^{\frac{1}{3}}}{\epsilon^{\frac{2}{3}}} \right) =0.
	\end{eqnarray*}
	Thus according to Lemma \ref{quartic}, (\ref{eq1-6}) has exactly two solutions in $\gf_{2^n}$, denoted by $x_1, x_2$. Next, we show that the two solutions can not satisfy $x+y+z=\bar{t}$ at the same time. Clearly, there exist some $\alpha,\beta\in\gf_{2^n}$ such that (\ref{eq1-6}) becomes 
	$$\left(x^2+\alpha x+\beta\right) \left(x^2+\alpha x+\alpha^2+\beta \right) = 0$$
	and by comparing the coefficient of $x$, we know that $\alpha^3=(s^2+\bar{t}^2)\neq0$. In addition, by the Vieta theorem, $x_1+x_2=\alpha\neq0$. Thus the two solutions can not satisfy $x+y+z=\bar{t}$ at the same time. 
\end{proof}

\subsection{Binary linear codes from these new two-to-one functions 
}

\begin{table}[!t]
	\centering
	\caption{The weight distribution of the codes $\mathcal{C}_{f}$ in Theorem \ref{Th_n=km}} 	\label{Th_n=km_table1}
	\begin{tabular}{cc}
		\hline
		Weight & Multiplicity \\
		\hline
		$0$ & $1$\\
		$2^{n-1}-2^{\frac{n+m-1}{2}}$ & $2^{n-1}+2^{\frac{n+m}{2}-1} - 2^{n-m-1} - 2^{\frac{n-m}{2}-1}$  \\
		$2^{n-1}$ & $2^{n+m} + 2^{n-m} - 2^n -1$ \\
		$2^{n-1}+2^{\frac{n+m-1}{2}}$ & $2^{\frac{n-m}{2}-1} + 2^{n-1} - 2^{\frac{n+m}{2}-1} - 2^{n-m-1}$\\
		\hline
	\end{tabular}
\end{table}

\begin{table}[!t]
	\centering
	\caption{The weight distribution of the codes $\mathcal{C}_{D(f)}$ in Theorem \ref{Th_n=km}} 	\label{Th_n=km_table2}
	\begin{tabular}{cc}
		\hline
		Weight & Multiplicity \\
		\hline
		$0$ & $1$\\
		$2^{n-2}-2^{\frac{n+m-4}{2}}$ & $2^{n-m-1}+2^{\frac{n-m-2}{2}}$  \\
		$2^{n-2}$ & $2^n- 2^{n-m}-1$ \\
		$2^{n-2}+2^{\frac{n+m-4}{2}}$ & $2^{n-m-1}-2^{\frac{n-m-2}{2}}$\\
		\hline
	\end{tabular}
\end{table}

\begin{Th}\label{Th_n=km}
		Let $n=km$ with $k,m$ odd and $f(x) = \tr_{n/m}\left(x^{2^m+1}\right)+x$. Define two linear codes $\mathcal{C}_{f}$ and $\mathcal{C}_{D(f)}$ as in \eqref{Eq-GenCons1} and \eqref{Eq-GenCons2}, respectively. 
	Then, 
	\begin{enumerate}[(1)]
	\item  $\mathcal{C}_{f}$ is a $\left[2^{n}-1, n+m\right]$ binary linear code with weight distribution in Table \ref{Th_n=km_table1}.
\item  $\mathcal{C}_{D(f)}$ is a $\left[ 2^{n-1}-1,n\right]$ binary linear code with  weight distribution in Table \ref{Th_n=km_table2}
	\end{enumerate}
\end{Th}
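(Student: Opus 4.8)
The plan is to follow exactly the strategy of Theorem~\ref{Th_n=2m}: the map $f$ is a quadratic function (each monomial of $\tr_{n/m}(x^{2^m+1})=\sum_{i=0}^{k-1}x^{2^{(i+1)m}+2^{im}}$ has algebraic degree $2$, and the term $x$ is linear), so its Walsh spectrum is governed by \eqref{Eq-Quad-2}. Consequently the whole theorem reduces to, for each $(a,b)$, locating the kernel of the bilinear form of $\varphi_{a,b}(x)=\tr_n(ax+bf(x))$ and deciding on which kernels the linear part vanishes. A first simplification I would record is that $\tr_n\bigl(b\,\tr_{n/m}(x^{2^m+1})\bigr)=\tr_n\bigl(\beta\, x^{2^m+1}\bigr)$ with $\beta:=\tr_{n/m}(b)\in\gf_{2^m}$, since $\tr_{n/m}(b)$ can be pulled through the relative trace. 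Hence $\varphi_{a,b}(x)=\tr_n\bigl(\beta x^{2^m+1}+(a+b)x\bigr)$, and every quantity below will be expressed through $\beta$ and $\tr_{n/m}(a+b)$.

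Next I would compute the associated bilinear form. Moving all Frobenius powers onto $x$ by self-adjointness of the trace gives $B_{\varphi_b}(x,y)=\tr_n\bigl(L_b(y)\,x\bigr)$ with $L_b(y)=\beta\bigl(y^{2^m}+y^{2^{(k-1)m}}\bigr)$. When $\beta=0$, the function $\varphi_{a,b}$ is linear and a direct evaluation of the character sum yields $W_f(a,b)=2^n$ if $a=b$ and $W_f(a,b)=0$ otherwise; this is the case responsible for the dimension defect. When $\beta\neq0$, solving $L_b(y)=0$ reduces to $y^{2^{(k-1)m}}=y^{2^m}$, and cancelling one Frobenius gives the equivalent condition $y^{2^{(k-2)m}}=y$; since $k$ is odd, $\gcd((k-2)m,km)=m$, so the solution set is exactly the fixed field $\gf_{2^m}$ (which is readily checked to satisfy $L_b=0$). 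Thus $\ker(L_b)=\gf_{2^m}$ has dimension $m$, and \eqref{Eq-Quad-2} gives $W_f(a,b)\in\{0,\pm 2^{\frac{n+m}{2}}\}$, the exponent being an integer because $n$ and $m$ are both odd.

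It then remains to decide when $\varphi_{a,b}$ vanishes on the kernel $\gf_{2^m}$. For $y\in\gf_{2^m}$ one has $y^{2^m+1}=y^2$, and since $k$ is odd $f(y)=\tr_{n/m}(y^2)+y=y^2+y$; reducing the absolute trace to $\gf_{2^m}$ and using $\tr_m(\beta y^2)=\tr_m(\beta^{2^{m-1}}y)$ yields $\varphi_{a,b}(y)=\tr_m\bigl((\beta^{2^{m-1}}+\tr_{n/m}(a+b))\,y\bigr)$. Hence the nonzero values $\pm 2^{\frac{n+m}{2}}$ occur precisely when $\tr_{n/m}(a+b)=\beta^{2^{m-1}}$. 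For the second construction I would set $a=0$: the condition becomes $\beta=\beta^{2^{m-1}}$, and squaring gives $\beta^2=\beta^{2^m}=\beta$, whence $\beta\in\gf_2$; therefore $W_f(0,b)=\pm 2^{\frac{n+m}{2}}$ exactly for the $2^{n-m}$ elements $b$ with $\tr_{n/m}(b)=1$, while $W_f(0,b)=0$ for all other $b\neq0$ and $W_f(0,0)=2^n$.

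Finally I would assemble the parameters. From the $\beta=0$ analysis, $W_f(a,b)=2^n$ iff $a=b$ with $\tr_{n/m}(b)=0$, so the space $K_1$ in \eqref{dimensionK1} has dimension $n-m$ and $\dim\mathcal{C}_f=2n-(n-m)=n+m$; likewise $W_f(0,b)=2^n$ only for $b=0$, so $d_{K_2}=0$ in \eqref{dimension} and $\dim\mathcal{C}_{D(f)}=n$. Counting the pairs with $\beta\neq0$ and $\tr_{n/m}(a+b)=\beta^{2^{m-1}}$ (namely $2^{n-m}$ admissible $a$ for each of the $2^n-2^{n-m}$ admissible $b$) gives the total number of $(a,b)$ with $W_f(a,b)=\pm 2^{\frac{n+m}{2}}$, and feeding $v_1=-2^{\frac{n+m}{2}},\,v_2=0,\,v_3=2^{\frac{n+m}{2}}$ into the three power-moment equations \eqref{Eq-WtDist1} determines $X_1,X_2,X_3$; the weights then follow from \eqref{WtDist-Relation1}, yielding Table~\ref{Th_n=km_table1}. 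For $\mathcal{C}_{D(f)}$ the same three Walsh values, now carried by the $2^{n-m}$ elements of relative trace $1$, together with \eqref{wt_new} and the Pless moments \eqref{Eq-WtDist2}, give Table~\ref{Th_n=km_table2}. The main technical obstacle is the compact kernel computation, namely establishing $\ker(L_b)=\gf_{2^m}$ through the identity $\gcd((k-2)m,km)=m$ and the accompanying collapse of $f$ to $y^2+y$ on $\gf_{2^m}$; once these are in place the pair-counting and the power-moment bookkeeping are routine.
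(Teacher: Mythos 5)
Your proposal is correct and follows essentially the same route as the paper's own proof: the same reduction to the quadratic Walsh-transform formula \eqref{Eq-Quad-2}, the same case split on $\beta=\tr_{n/m}(b)$ (with the $\beta=0$ case producing the dimension defect $d_{K_1}=n-m$), the same identification $\ker(L_b)=\gf_{2^m}$ (your $L_b(y)=\beta\bigl(y^{2^m}+y^{2^{(k-1)m}}\bigr)$ paired with $x$ is just the paper's $\beta\bigl(y+y^{2^{2m}}\bigr)$ paired with $x^{2^m}$), and the same power-moment bookkeeping via \eqref{Eq-WtDist1} and \eqref{Eq-WtDist2}. Your vanishing criterion $\tr_{n/m}(a+b)=\beta^{2^{m-1}}$ is equivalent, upon squaring, to the paper's condition $\tr_{n/m}(a^2+b+b^2)=0$, and your extra direct count of the pairs with nonzero Walsh value is a harmless consistency check on top of the moment equations.
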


\begin{proof}
	Firstly, we shall determine the value $W_f(a,b)$. It is clear that when $(a,b)=(0,0), W_f(a,b)=2^n$.  
	Note that $\tr_n(b \tr_{n/m}(x^{2^m+1}))=\tr_n( \tr_{n/m}(b)x^{2^m+1})$.
	 Thus, if $\tr_{n/m}(b)=0$, then $$W_f(a,b)= \sum_{x\in \gf_{2^n}}(-1)^{\tr_n((a+b) x)},$$ which equals $2^n$ if $a=b$ and $0$ otherwise. 
	 
	 In the following, we consider the case that $\tr_{n/m}(b)\neq0$.	
	Let $\varphi_b(x)=\tr_n(bf(x))$. Since it is quadratic, according to (\ref{Eq-Quad-2}), we need to determine the dimension of kernel of the bilinear form of $\varphi_b(x)$. Note that the bilinear form of $\varphi_b(x)$ is given by 
	\begin{eqnarray*}
		B_{\varphi_b}(x,y) &=& \varphi_b(x+y) + \varphi_b(x) + \varphi_b(y) \\
		&=& \tr_n\left( \tr_{n/m}(b)\left( yx^{2^m} + y^{2^m}x \right)   \right)   = \tr_n\left( L_b(y)x^{2^m} \right),
	\end{eqnarray*}
	where $L_b(y)=  \tr_{n/m}(b)(y+y^{2^{2m}})$. Then $\ker(L_b)=\gf_{2^m}$.  It follows from (\ref{Eq-Quad-2}) that
	$$ W_f(a,b)  = \begin{cases}
	\pm 2^{\frac{n+m}{2}}, & \text{if } \tr_n(bf(x)+ax)=0 \text{ for all } x \in \gf_{2^m}, \\
	0, & \text{otherwise.}  
	\end{cases}$$
Moreover, for $x\in\gf_{2^m}$, 
$$\tr_n\left(bf(x)+ax\right) = \tr_n((a^2+b+b^2)x^2) = \tr_m\left( \tr_{n/m}(a^2+b+b^2) x^2 \right), $$
always equals $0$ if and only if $\tr_{n/m}(a^2+b+b^2)=0$. Thus for any $a,b$ satisfying $\tr_{n/m}(a^2+b^2+b)\neq0$, $W_f(a,b)\in \left\{0, \pm 2^{\frac{n+m}{2}}\right\}.$ Similarly, for any $b$	with $\tr_{n/m}(b) \not\in \{0, 1\}$, $W_f(b)\in \left\{0, \pm 2^{\frac{n+m}{2}}\right\}$. 

\smallskip

Next, we consider the parameters of $\mathcal{C}_f$ and $\mathcal{C}_{D(f)}$, respectively. 

(1) For the linear code $\mathcal{C}_f$, since $W_f(a,b)=2^n$ if and only if  $a=b$ with $\tr_{n/m}(b)=0$, by (\ref{dimensionK1}), the dimension of $K_1 = \left\{ a,b\in\gf_{2^n}: W_f(a,b)=2^n \right\}$ is $n-m$ and thus the dimension of $\mathcal{C}_f$ is $2n-(n-m)=n+m$. Moreover, for any $a,b\in\gf_{2^n}$, $W_f(a,b)\in\left\{ 0, 2^n, \pm 2^{\frac{n+m}{2}} \right\}$. Let 
$$v_1=-2^{\frac{n+m}{2}}, ~~~ v_2 = 0, ~~~ v_3 = 2^{\frac{n+m}{2}}. $$
Then by computing (\ref{Eq-WtDist1}), we can obtain that the occurrences of $W_f(a,b)=v_i$'s, $i=1,2,3$ in the Walsh spectrum of $f$ are 	$$ \begin{cases}
X_1= 2^{\frac{3n-3m}{2}-1} + 2^{2n-m-1} - 2^{\frac{3n-m}{2}-1} - 2^{2n-2m-1} \\ 
X_2 = 2^{2n} + 2^{2n-2m} - 2^{2n-m} -2^{n-m}  \\ 
X_3 = 2^{2n-m-1}+2^{\frac{3n-m}{2}-1} - 2^{2n-2m-1} - 2^{\frac{3n-3m}{2}-1}. \end{cases}
$$
Finally, by (\ref{WtDist-Relation1}), the desired weight distribution of $\mathcal{C}_f$ can be obtained.

(2) For the linear code $\mathcal{C}_{D(f)}$, since $W_f(b)=2^n$ if and only if $b=0$ and for $b\in\gf_{2^n}$, by (\ref{dimension}), the dimension of $\mathcal{C}_{D(f)}$ equals $n$. Moreover, we have
$$W_f(b)\in\left\{ 0,2^n,\pm 2^{\frac{n+m}{2}}  \right\}$$ and by (\ref{wt_new}), the weights of the codewords $\mathbf{c}_b$ in $\mathcal{C}_{D(f)}$ satisfy
$$\mathrm{wt}(\mathbf{c}_b)\in \left\{ 2^{n-2}, 0, 2^{n-2}-2^{\frac{n+m-4}{2}},2^{n-2}+2^{\frac{n+m-4}{2}}  \right\}.$$

 In the following, we determine the weight distribution of $\mathcal{C}_{D(f)}$.  Define 
 $$\mathrm{w}_1 = 2^{n-2}-2^{\frac{n+m-4}{2}},~~ \mathrm{w}_2=2^{n-2},~~ \mathrm{w}_3 = 2^{n-2}+2^{\frac{n+m-4}{2}}.$$
 Then solving (\ref{Eq-WtDist2})  gives the desired weight distribution. 
\end{proof}

\begin{table}[!t]
	\centering
	\caption{The weight distribution of the codes $\mathcal{C}_{D(f)}$ in Theorem \ref{Thn_n=3m}} 	\label{Thn_n=3m_table}
	\begin{tabular}{cc}
		\hline
		Weight & Multiplicity \\
		\hline
		$0$ & $1$\\
		$2^{n-2}-2^{\frac{n+2m-3}{2}}$ & $2^{m-2}+2^{\frac{m-3}{2}}$  \\
		$2^{n-2}$ & $2^n-2^{m-1}-1$ \\
		$2^{n-2}+2^{\frac{n+2m-3}{2}}$ & $2^{m-2}-2^{\frac{m-3}{2}}$\\
		\hline
	\end{tabular}
\end{table}

\begin{Th}\label{Thn_n=3m}
	Let $n=3m$ with $m$ odd and $f(x)=x^{2^{2m+1}+1}+x^{2^{m+1}+1}+x^4+x^3$.  Define two linear codes $\mathcal{C}_{f}$ and $\mathcal{C}_{D(f)}$ as in \eqref{Eq-GenCons1} and \eqref{Eq-GenCons2}, respectively. 
	Then, 
	\begin{enumerate}[(1)]
		\item  $\mathcal{C}_{f}$ is a $\left[2^{n}-1, 2n\right]$ binary linear code with $5$ weights. Moreover, the weights of the codewords $\mathbf{c}_b$ in $\mathcal{C}_{f}$ satisfy
		$$\mathrm{wt}(\mathbf{c}_b)\in\left\{ 2^{n-1}, 0, 2^{n-1}-2^{\frac{n+2m-1}{2}},2^{n-1}+2^{\frac{n+2m-1}{2}}, 2^{n-1} - 2^{\frac{n+m-2}{2}}, 2^{n-1}+ 2^{\frac{n+m-2}{2}} \right\}.$$
		\item  $\mathcal{C}_{D(f)}$ is a $\left[ 2^{n-1}-1,n\right]$ binary linear code with  weight distribution in Table \ref{Thn_n=3m_table}.
	\end{enumerate}
\end{Th}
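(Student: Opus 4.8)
The plan is to first expose the hidden structure of $f$ by working over the intermediate field $\gf_{2^m}$. Writing $T(x)=\tr_{n/m}(x)=x+x^{2^m}+x^{2^{2m}}$, one checks that $x^{2^{2m+1}+1}+x^{2^{m+1}+1}+x^{3}=x\,T(x)^2$, so that
$$ f(x)=x\,T(x)^2+x^4 . $$
Since $x^4=x^{2^2}$ is $\gf_2$-linear it does not affect the associated bilinear form, so $f$ is quadratic and I may use \eqref{Eq-Quad-2}. For $\varphi_b(x)=\tr_n(bf(x))$ the bilinear form simplifies to $B_{\varphi_b}(x,y)=\tr_n\!\big(b(x\,T(y)^2+y\,T(x)^2)\big)$, and with the adjoint identity $\tr_n(u\,\tr_{n/m}(v))=\tr_n(\tr_{n/m}(u)\,v)$ together with $\tr_n(cx^2)=\tr_n(c^{1/2}x)$ I would rewrite it as $B_{\varphi_b}(x,y)=\tr_n\big(\Lambda_b(y)\,x\big)$ with $\Lambda_b(y)=b\,T(y)^2+\big(\tr_{n/m}(by)\big)^{1/2}$. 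The kernel $V_{\varphi_b}=\{y:\Lambda_b(y)=0\}$ then controls everything through \eqref{Eq-Quad-2} and \eqref{wf}.

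The key dichotomy is whether $b\in\gf_{2^m}$. If $b\notin\gf_{2^m}$, then $\Lambda_b(y)=0$ forces $T(y)=0$ (otherwise $b\,T(y)^2\notin\gf_{2^m}$ cannot cancel the element $\big(\tr_{n/m}(by)\big)^{1/2}\in\gf_{2^m}$), and the condition reduces to $\tr_{n/m}(by)=0$; hence $V_{\varphi_b}=\ker T\cap(b\gf_{2^m})^{\perp}$ has dimension $m$ and the Walsh magnitude is $2^{(n+m)/2}=2^{2m}$. If $b\in\gf_{2^m}^{*}$, the condition collapses to one on $s=T(y)$, namely $bs^2=(bs)^{1/2}$, whose solutions form the $\gf_2$-line $s\in\{0,b^{-1/3}\}$ (the cube root being unique because $m$ odd gives $\gcd(3,2^m-1)=1$); thus $V_{\varphi_b}=T^{-1}(\{0,b^{-1/3}\})$ has dimension $2m+1$ and magnitude $2^{(n+2m+1)/2}$, an integer power precisely because $m$ is odd. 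These two kernel dimensions already settle part~(1): since $W_f(a,0)=0$ for $a\neq0$, the value $2^n$ occurs only at the origin, so $\dim\mathcal{C}_f=2n$ by \eqref{dimensionK1}; checking that each magnitude $2^{2m}$ and $2^{(n+2m+1)/2}$ is attained for a suitable $(a,b)$ (e.g. $a=b^{1/4}$, $b\notin\gf_{2^m}$ for the former) and translating through \eqref{WtDist-Relation1} gives exactly the five nonzero weights listed.

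For $\mathcal{C}_{D(f)}$ I set $a=0$ and evaluate $\varphi_{0,b}$ on $V_{\varphi_b}$. When $b\in\gf_{2^m}^{*}$, $\varphi_{0,b}$ vanishes identically on $\ker T$ (there $\varphi_{0,b}(y)=\tr_n(b^{1/4}y)$ and $b^{1/4}\in\gf_{2^m}=(\ker T)^{\perp}$), so $W_f(b)\neq0$ is decided on the extra coset $T^{-1}(b^{-1/3})$; a direct computation gives $\varphi_{0,b}(y_1)=\tr_m(1)+\tr_m\!\big(b^{-1/12}\big)=1+\tr_m\!\big(b^{-1/12}\big)$, whence $W_f(b)=\pm2^{(n+2m+1)/2}$ for exactly the $2^{m-1}$ values of $b\in\gf_{2^m}^{*}$ with $\tr_m(b^{-1/12})=1$, and $W_f(b)=0$ otherwise. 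When $b\notin\gf_{2^m}$, on $V_{\varphi_b}$ one has $T(y)=0$, hence $\varphi_{0,b}(y)=\tr_n(by^4)=\tr_n(b^{1/4}y)$; splitting the full sum as $W_f(b)=\sum_{s\in\gf_{2^m}}\sum_{T(x)=s}(-1)^{\tr_n((bs^2+b^{1/4})x)}$ over the cosets of $\ker T$, no coset contributes, because $bs^2+b^{1/4}\in\gf_{2^m}$ would force $\delta^3\in\gf_{2^m}$ with $\delta=b+b^{2^m}\notin\gf_{2^m}$, which is impossible since $3\nmid 2^n-1$. Therefore $W_f(b)=0$ for every $b\notin\gf_{2^m}$.

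Assembling these facts, $W_f(b)=2^n$ only at $b=0$, so $\dim\mathcal{C}_{D(f)}=n$ by \eqref{dimension} and, using that $f$ is two-to-one (so the length is $2^{n-1}-1$), \eqref{wt_new} shows $\mathcal{C}_{D(f)}$ has precisely the three nonzero weights $2^{n-2}$ and $2^{n-2}\pm2^{(n+2m-3)/2}$; the first three Pless power moments \eqref{Eq-WtDist2} then pin down the multiplicities in Table~\ref{Thn_n=3m_table}. The step I expect to be the main obstacle is the case $b\notin\gf_{2^m}$: one must prove that $W_f(b)$ always vanishes there, so that the $\pm2^{2m}$ weights visible in $\mathcal{C}_f$ collapse out of $\mathcal{C}_{D(f)}$, and this rests squarely on the non-cubeness fact $\delta^3\notin\gf_{2^m}$, i.e. on $m$ being odd so that $3\nmid 2^n-1$ and all the relevant cube, twelfth, and square roots are single-valued. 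The companion bookkeeping with $b^{-1/3}$ in the kernel and $b^{-1/12}$ in the trace condition is the other delicate point to carry out carefully.
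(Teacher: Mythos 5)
Your proof is correct, and while it follows the paper's overall skeleton (write $f(x)=x\,T(x)^2+x^4$ with $T=\tr_{n/m}$, analyse the quadratic form via its linearized kernel, split on $b\in\gf_{2^m}$ versus $b\notin\gf_{2^m}$, then finish with \eqref{dimensionK1}, \eqref{dimension}, \eqref{WtDist-Relation1} and the Pless moments \eqref{Eq-WtDist2}), it replaces the paper's hardest step with a genuinely different and shorter argument. For $b\in\gf_{2^m}^{*}$ your treatment is essentially identical to the paper's: the kernel is $T^{-1}(\{0,b^{-1/3}\})$ of dimension $2m+1$, and the all-zero test on the coset gives $1+\tr_m(b^{-1/12})$, which agrees with the paper's $1+\tr_m\left(\sqrt[3]{b^{-1}}\right)$ since $\tr_m(x^4)=\tr_m(x)$. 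The real divergence is the case $b\notin\gf_{2^m}$, where the paper parametrizes $\ker(L_b)=\left\{(\alpha+b)\eta:\eta\in\gf_{2^m}\right\}$ via the symmetric functions $\alpha,\beta,\gamma$ of $b,b^{2^m},b^{2^{2m}}$ and then proves $U_b=\tr_{n/m}\left(b(\alpha+b)^4\right)\neq 0$ by a page of manipulation; you instead decompose the whole sum $W_f(b)=\sum_{s\in\gf_{2^m}}\sum_{T(x)=s}(-1)^{\tr_n\left((bs^2+b^{1/4})x\right)}$ over cosets of $\ker T$ and kill every coset at once because $bs^2+b^{1/4}\in\gf_{2^m}$ would force $\delta^3\in\gf_{2^m}$ for $\delta=b+b^{2^m}\notin\gf_{2^m}$, impossible as $3\nmid 2^n-1$ makes cubing bijective on both $\gf_{2^n}^{*}$ and $\gf_{2^m}^{*}$. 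This is cleaner and avoids the symmetric-function computation entirely; what the paper's explicit parametrization buys in exchange is that it simultaneously yields the kernel dimension $m$ and the $a^4=b$ attainment needed for part (1), whereas you must supply the dimension separately. You assert rather than prove that $\dim\left(\ker T\cap\{y:T(by)=0\}\right)=m$; this does need a line: the map $y\mapsto(T(y),T(by))$ is onto $\gf_{2^m}^2$, since a dependence $T((c+db)y)\equiv 0$ forces $c+db=0$, hence $c=d=0$ when $b\notin\gf_{2^m}$. Two further points, shared with the paper rather than specific to you: the claim of \emph{exactly} five weights in part (1) would require checking that both signs of each Walsh magnitude occur, which neither you nor the paper verifies; and both proofs silently require $m\geq 3$ (for $m=1$ one has $2^{(n+2m+1)/2}=2^n$ and $\tr_n(f(x))\equiv 0$, so $b=1$ enters $K_2$ and the stated dimensions fail).
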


\begin{proof}
	First of all, we shall determine the value $W_f(a,b)$. It is clear that when $(a,b)=(0,0)$, $W_f(a,b)=2^n$. Let $\varphi_b(x)=\tr_n(bf(x))$.  Since  $f$ is quadratic, according to (\ref{Eq-Quad-2}), we need to determine the dimension of kernel of the bilinear form of $\varphi_b(x)$. Note that the bilinear form of $\varphi_b(x)$ is given by 
	\begin{eqnarray*}
		B_{\varphi_b}(x,y) &=& \varphi_b(x+y) + \varphi_b(x) + \varphi_b(y) \\
		&=& \tr_n\left(  b(x^{2^{2m+1}}y+xy^{2^{2m+1}}+x^{2^{m+1}}y+xy^{2^{m+1}}+x^2y+xy^2) \right)   = \tr_n\left( L_b(y)x^{2} \right),
	\end{eqnarray*}
	where $$L_b(y)=b^2y^{2^{2m+2}}+ b^{2^{2m}}y^{2^{2m}}+b^2y^{2^{m+2}}+b^{2^m}y^{2^m}+b^2y^4+by.$$ Let the dimension of $\ker(L_b)$ be $d_b$.  It follows from (\ref{Eq-Quad-2}) that 
	$$ W_f(a,b)  = \begin{cases}
	\pm 2^{\frac{n+d_b}{2}}, & \text{if } \tr_n(bf(x)+ax)=0 \text{ for all } x \in \gf_{2^m}, \\
	0, & \text{otherwise.}  
	\end{cases}$$ 
	
		Next, we consider the equation $L_b(y)=0$, i.e.,
	$$b^2\tr_{n/m}(y)^4=\tr_{n/m}(by).$$
	Since $\tr_{n/m}(y), \tr_{n/m}(by)\in\gf_{2^m}$, we have $b\in\gf_{2^m}^{*}$ or $\tr_{n/m}(y)=\tr_{n/m}(by)=0$. 
	
	{\bfseries Case 1:} If $b\in\gf_{2^m}^{*}$, then the equation $L_b(y)=0$ becomes $\tr_{n/m}(y)=0 $ or  $\sqrt[3]{b^{-1}}$. Thus in this case, the number of solutions of $L_b(y)$ is $2^{2m+1}$. Namely, $d_b=2m+1$. In the following, we show that there exist some $b\in\gf_{2^m}^{*}$ such that the restriction of $\tr_n(b(f(x)))$ on $\ker(L_b)=\left\{ y\in \gf_{2^n} : \tr_{n/m}(y)=0 ~~\text{or}~~\sqrt[3]{b^{-1}} \right\}$ is the all-zero mapping or not, i.e., $W_f(b)\in\left\{ 0, \pm2^{\frac{n+2m+1}{2}} \right\}$ for $b\in\gf_{2^m}^{*}$. On one hand,  if $\tr_{n/m}(y)=0$, $$\tr_n(bf(y)) = \tr_n\left(b(y\tr_{n/m}(y)^2+y^4)\right)=\tr_n(by^4)= \tr_m\left(b\tr_{n/m}(y^4)\right)=0.$$
On the other hand,	if $\tr_{n/m}(y)=\sqrt[3]{b^{-1}}$, 
	\begin{eqnarray*}
		\tr_n(bf(y)) &=& \tr_n\left(b(y\tr_{n/m}(y)^2+y^4)\right) \\
		&=& \tr_m\left( \tr_{n/m} \left(  by\tr_{n/m}(y)^2 \right) + \tr_{n/m}(by^4) \right) \\
		&=& \tr_m\left( b\tr_{n/m}(y)^3 + b\tr_{n/m}(y)^4 \right) \\
		&=& \tr_m(1+\tr_{n/m}(y)) = 1+\tr_m\left(\sqrt[3]{b^{-1}}\right).
	\end{eqnarray*}
	Then  $\varphi_b(y)=0$ if and only if  $\tr_m\left(\sqrt[3]{b^{-1}}\right)=1$ and  thus  the restriction of $\tr_n(b(f(x)))$ on $\ker(L_b)$ is the all-zero mapping if and only if $\tr_m\left(\sqrt[3]{b^{-1}}\right)=1$. Therefore $W_f(b)\in\left\{ 0, \pm2^{\frac{n+2m+1}{2}} \right\}$ and then clearly $W_f(a, b)\in\left\{ 0, \pm2^{\frac{n+2m+1}{2}} \right\}$ in this case. 
	
	{\bfseries Case 2:} If $b\in\gf_{2^n}\backslash\gf_{2^m}$, then $$\ker(L_b) = \left\{ y: y\in\gf_{2^n} ~~\text{and}~~ \tr_{n/m}(y)=\tr_{n/m}(by)=0 \right\}.$$ For any $b\in\gf_{2^n}\backslash\gf_{2^m}$, define 
	\begin{equation*}
	\begin{cases} 
	b + b^{2^m} + b^{2^{2m}} = \alpha \\ 
	b^{2^m+1}+b^{2^{2m}+1}+b^{2^{2m}+2^m} = \beta  \\ 
	b^{2^{2m}+2^m+1} = \gamma \end{cases}
	\end{equation*}
	and $g(x)=x^3+\alpha x^2+\beta x+\gamma\in\gf_{2^m}[x]$.  Then it is clear that $g(x)=(x+b)\left(x+b^{2^m}\right)\left(x+b^{2^{2m}}\right)$ and $g(x)$ is irreducible on $\gf_{2^m}$. Since $g(x+\alpha) = x^3+(\alpha^2+\beta)x+\alpha\beta+\gamma$ is also irreducible, we have $$\alpha^2+\beta\neq0 ~~\text{and}~~\alpha\beta+\gamma\neq0.$$
	In addition, for any fixed $b\in\gf_{2^n}\backslash\gf_{2^m},$ it is well known that for any $y\in\gf_{2^n}$, there exist unique $y_0,y_1,y_2\in\gf_{2^m}$ such that $y=y_0+y_1b+y_2b^2$. Then  $$\tr_{n/m}(y)=y_0+y_1\alpha+y_2\alpha^2=0$$
	and 
	\begin{eqnarray*}
		\tr_{n/m}(by) &=& \tr_{n/m}\left( y_0b+y_1b^2+y_2b^3 \right) \\
		&=& \tr_{n/m}\left( (y_1+y_2\alpha)b^2 + (y_0+y_2\beta) b +y_2\gamma  \right) \\
		&=& \alpha y_0 + \alpha^2y_1 + \left(\gamma+\alpha\beta+\alpha^3\right)y_2 = 0.
	\end{eqnarray*}
	Plugging $y_0=y_1\alpha+y_2\alpha^2$ into the above equation and simplifying it, we obtain $(\gamma+\alpha\beta)y_2=0$ and then $y_2=0$ since $\gamma+\alpha\beta\neq0$. Thus $$\ker(L_b) = \left\{ (\alpha+b)\eta: \eta\in\gf_{2^m} \right\}.$$ Clearly, in this case, the dimension of $\ker(L_b)$ is $m$. Moreover, for $x\in\ker(L_b),$ 
	\begin{eqnarray*}
	\tr_n(bf(x)+ax) &=& \tr_n\left(b(x\tr_{n/m}(x)^2+x^4) +ax \right)=\tr_n((b+a^4)x^4) \\
	&=& \tr_n\left( (b+a^4) (\alpha+b)^4\eta^4 \right) \\
	&=& \tr_m\left( \tr_{n/m}\left( (b+a^4)(\alpha+b)^4 \right)\eta^4 \right).
	\end{eqnarray*}
Obviously, if $a^4=b$, the restriction of $\tr_n(b(f(x))+ax)$ on $\ker(L_b)$ is the all-zero mapping and  thus $W_f(a,b)=\pm 2^{\frac{n+m}{2}}$. 

Moreover, if $a=0$, $\tr_n(bf(x))=\tr_m\left(\tr_{n/m}\left(b(\alpha+b)^4\right)\eta^4\right)=\tr_m\left( U_b \eta^4  \right),$ where $$U_b = \tr_{n/m}(b(\alpha^4+b^4)) = \alpha^5 + b^5+b^{5\cdot 2^m} + b^{5\cdot 2^{2m}}.$$
In the following, we will show that $U_b\neq0$ for any $b\in\gf_{2^n} \backslash\gf_{2^m}$. If there exist some $b\in\gf_{2^n}\backslash\gf_{2^m}$ such that $U_b=0$, then by simplifying it, we get
$$(b+b^{2^m})(b+b^{2^{2m}})(b^{2^m}+b^{2^{2m}})\left(\alpha^2+\beta\right)=0,$$
which is impossible since $b\not\in\gf_{2^m}$ and $\alpha^2+\beta\neq0$. Thus for any $b\in\gf_{2^n}\backslash\gf_{2^m}$, $U_b\neq0$  and then the restriction of $\tr_n(b(f(x)))$ on $\ker(L_b)$ can not be the all-zero mapping. Thus $W_f(b)=0$. 

In conclusion, for any $a,b\in\gf_{2^n}$, $$W_f(a,b)\in\left\{ 0, 2^n, \pm2^{\frac{n+2m+1}{2}}, \pm  2^{\frac{n+m}{2}}  \right\}.$$ However, for any $b\in\gf_{2^n}$, $$W_f(b) \in\left\{ 0, 2^n, \pm2^{\frac{n+2m+1}{2}}  \right\}.$$

Next, we consider the parameters of $\mathcal{C}_f$ and $\mathcal{C}_{D(f)}$, respectively. 

(1) For the linear code $\mathcal{C}_f$, since $W_f(a,b)=2^n$ if and only if $(a,b)=(0,0)$, by (\ref{dimensionK1}), the dimension of $\mathcal{C}_f$  is $2n$. Moreover, since for any $a,b\in\gf_{2^n}$, $W_f(a,b)\in\left\{ 0, 2^n, \pm2^{\frac{n+2m+1}{2}}, \pm  2^{\frac{n+m}{2}}  \right\},$ by (\ref{WtDist-Relation1}), the weights of the codewords $\mathbf{c}_b$ in $\mathcal{C}_{f}$ satisfy
$$\mathrm{wt}(\mathbf{c}_b)\in\left\{ 2^{n-1}, 0, 2^{n-1}-2^{\frac{n+2m-1}{2}},2^{n-1}+2^{\frac{n+2m-1}{2}}, 2^{n-1} - 2^{\frac{n+m-2}{2}}, 2^{n-1}+ 2^{\frac{n+m-2}{2}} \right\}.$$

(2) For the linear code $\mathcal{C}_{D(f)}$, $W_f(b)= 2^n$ if and only if $b=0$, which means that the dimension of $\mathcal{C}_{D(f)}$ is $n$ according to (\ref{dimension}). Since for any $b\in\gf_{2^n}$, $W_f(b)\in\left\{ 0, 2^n, \pm2^{\frac{n+2m+1}{2}} \right\}$, by (\ref{wt_new}), the weights of the codewords $\mathbf{c}_b$ in $\mathcal{C}_{D(f)}$ satisfy
$$\mathrm{wt}(\mathbf{c}_b)\in \left\{ 2^{n-2}, 0, 2^{n-2}- 2^{\frac{n+2m-3}{2}},2^{n-2}+2^{\frac{n+2m-3}{2}} \right\}.$$

In the following, we determine the weight distribution of $\mathcal{C}_{D(f)}$.  Define 
$$\mathrm{w}_1 = 2^{n-2}-2^{\frac{n+2m-3}{2}},~~ \mathrm{w}_2=2^{n-2},~~ \mathrm{w}_3 = 2^{n-2}+2^{\frac{n+2m-3}{2}}.$$
Then solving (\ref{Eq-WtDist2})  gives the desired weight distribution. 
\end{proof}

\begin{Rem}
\emph{	In Theorems \ref{Th_n=2m+1_3} - \ref{Th_n=3m}, the linear codes $\mathcal{C}_{D(f)}$ has the same parameters as the shortened Hadamard codes,
	which is locally decodable code that provides a way to recover parts of the original message with high probability, while only looking at a small fraction of the received word.
	This property gives rise to applications in the computational complexity theory and in the CDMA communication system.
	The dual codes of $\mathcal{C}_{D(f)}$ are the binary Hamming codes with parameters $[2^{n-1}-1, 2^{n-1}-n, 3]$.}
\end{Rem}

\begin{Rem}
\emph{	In \cite{lirecent} there are several other classes of two-to-one quadratic polynomials. {The experiment results show that we can obtain 3-weight or 5-weight binary linear codes as well from  generalized quadratic polynomials. 
	Due to the similarities of the parameters of those codes and the proofs, we choose some representatives of them that are more difficult and omitted the others in this paper.} In addition, the linear code $\mathcal{C}_f$ in Theorem \ref{Th_n=2m+1_2}
	appears to be a 3-weight code by numerical results. Nevertheless, we didn't manage to prove it by the techniques used in this paper.
	We cordially invite interested readers to determine the weight distribution of the linear codes $\mathcal{C}_f$ in Theorem \ref{Th_n=2m+1_2} and Theorem \ref{Thn_n=3m}.}
\end{Rem}

{
\begin{Prob}
	Determine the weight distribution of the linear codes $\mathcal{C}_f$ in Theorem \ref{Th_n=2m+1_2} and Theorem \ref{Thn_n=3m}.
\end{Prob}	
	
	According to the experiment results, we also have the following conjecture. 

\begin{Conj}
	Let $n=2m+1$ and $f(x)=x^{3\cdot 2^{m+1}} + x^{2^{m+2}+1}+x^{2^{m+1}+1}+x$. Then $f(x)$ is two-to-one over $\gf_{2^n}$. Moreover, when $m\ge4$, the linear code $\mathcal{C}_{D(f)}$ has the parameters $\left[ 2^{n-1}-1,n,  2^{n-1}-2^{\frac{n-1}{2}}\right]$ and the weights of the codewords $\mathbf{c}_b$ in $\mathcal{C}_{D(f)}$ satisfy
	$$\mathrm{wt}(\mathbf{c}_b)\in\left\{ 2^{n-2}, 0, 2^{n-2}-2^{\frac{n-3}{2}},2^{n-2}+2^{\frac{n-3}{2}}, 2^{n-2}-2^{\frac{n-1}{2}},2^{n-2}+2^{\frac{n-1}{2}} \right\}.$$ If possible, determine the weight distribution of the linear code $\mathcal{C}_{D(f)}$. 
\end{Conj}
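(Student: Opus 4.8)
The plan is to exploit the fact that every exponent of $f$ has binary weight at most two (for $m\geq 2$), so that $f$ is a \emph{quadratic} function and the whole apparatus of \eqref{Eq-Quad-2}, \eqref{wt_new} and \eqref{dimension} applies. Setting $Y=x^{2^{m+1}}$, which is a bijection of $\gf_{2^n}$, and using $x^{2^{2m+1}}=x$, I would first rewrite
$$
f(x)=Y^{3}+Y^{2}x+Yx+x=Y^{3}+Y^{2^{m}}\left(Y^{2}+Y+1\right)=:h(Y),
$$
so that the fibres and the entire Walsh spectrum of $f$ coincide with those of the sparse polynomial $h$. The decisive structural observation is that $1$ is a linear structure of $h$: since $(Y+1)^{2}+(Y+1)+1=Y^{2}+Y+1$, $(Y+1)^{2^{m}}=Y^{2^{m}}+1$ and $(Y+1)^{3}=Y^{3}+(Y^{2}+Y+1)$, a one-line cancellation gives $h(Y+1)=h(Y)$, i.e. $f(x+1)=f(x)$ for all $x$.

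Given this linear structure, every fibre of $f$ contains a coset $\{x,x+1\}$, so $f$ is two-to-one precisely when it is injective on a transversal of $\gf_2$; equivalently, $h(Y+\delta)+h(Y)=0$ must have no solution $Y$ whenever $\delta\notin\gf_2$. A direct expansion reduces this to the affine equation
$$
\left(\delta+\delta^{2^{m}}\right)Y^{2}+\left(\delta^{2}+\delta^{2^{m}}\right)Y+\left(\delta^{2}+\delta\right)Y^{2^{m}}=h(\delta),
$$
which I would show is unsolvable for $\delta\neq0,1$. The vanishing of the left-hand side is a linearized condition whose kernel I can describe explicitly, after which solvability collapses to a low-degree polynomial condition in $\delta$; here I expect to invoke the cubic factorization criteria of Lemmas~\ref{cubic}--\ref{cubic_solution} (and Lemma~\ref{quartic} should a quartic appear) exactly as in the two-to-one proofs of Section~4. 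This step, rather than a routine kernel count, is the first real obstacle, because $n=2m+1$ has no proper subfield to organise the Frobenius conjugates.

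With two-to-one in hand I would compute $W_f(b)=W_h(b)$ through \eqref{Eq-Quad-2}. Writing the bilinear form of $\varphi_b(Y)=\tr_n(bh(Y))$ as $\tr_n\!\left(L_b(Y')\,Y\right)$ and moving Frobenius powers off $Y$ by the trace adjoint yields the linearized polynomial
$$
L_b(Y')=\bigl[b\left(Y'+Y'^{2^{m}}\right)\bigr]^{2^{2m}}+b\left(Y'^{2}+Y'^{2^{m}}\right)+\bigl[b\left(Y'^{2}+Y'\right)\bigr]^{2^{m+1}}.
$$
The linear structure reappears as $L_b(1)=0$ for every $b$, so $1\in\ker L_b$ and $d_b:=\dim_{\gf_2}\ker L_b\geq1$; since $n$ is odd and $W_f(b)=\pm2^{\frac{n+d_b}{2}}$ or $0$, $d_b$ is odd. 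I would then prove $d_b\in\{1,3\}$ by analysing $L_b$ and, since $\varphi_b$ automatically vanishes at $1\in\ker L_b$, fix the sign of the Gauss sum, obtaining $W_f(b)\in\{0,\pm2^{\frac{n+1}{2}},\pm2^{\frac{n+3}{2}}\}$ for $b\neq0$. Via \eqref{wt_new} this is exactly the six-element weight set of the statement; $W_f(b)=2^n\iff b=0$ gives $d_{K_2}=0$ and dimension $n$, and an image-counting argument in the style of Theorem~\ref{Th_n=2m} confirms that the extreme weight $2^{n-2}-2^{\frac{n-1}{2}}$ is attained, yielding the minimum distance (the exponent $2^{n-1}$ in the stated triple appears to be a typo for $2^{n-2}$, consistent with the listed weights).

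The hard part will be the exact multiplicities. As the code has five nonzero weights, the three Pless power moments of \eqref{Eq-WtDist2} are underdetermined, so — unlike the three-weight theorems — the distribution cannot simply be solved for. One must instead count, for each of $\pm2^{\frac{n+1}{2}}$ and $\pm2^{\frac{n+3}{2}}$, the number of $b$ realising the corresponding kernel dimension $d_b$ together with the corresponding behaviour of $\varphi_b$ on $\ker L_b$, i.e. count solutions of the explicit equations governing $\dim_{\gf_2}\ker L_b$ and the restriction $\varphi_b|_{\ker L_b}$. This is precisely the enumeration left open for the five-weight codes of Theorems~\ref{Th_n=2m+1_2} and~\ref{Thn_n=3m}; I therefore expect the two-to-one property, the dimension, the weight set and the minimum distance to be provable, while the full weight distribution remains the genuine difficulty.
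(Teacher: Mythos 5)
You should first be aware that the paper contains \emph{no} proof of this statement: it appears as a Conjecture, offered on the basis of numerical experiments, and even determining the weight distribution is explicitly posed as open. So your proposal cannot be compared against a paper proof; it must stand alone, and it does not yet do so. That said, the parts you actually carry out are correct, and I verified them: with $Y=x^{2^{m+1}}$ (so $x=Y^{2^m}$, since $Y^{2^m}=x^{2^{2m+1}}=x$) one has $f(x)=h(Y)$ with $h(Y)=Y^3+Y^{2^m}\left(Y^2+Y+1\right)$; the identity $h(Y+1)=h(Y)$, hence $f(x+1)=f(x)$, is a correct one-line computation; the two-to-one property is indeed equivalent to the unsolvability, for $\delta\notin\gf_2$, of $\left(\delta+\delta^{2^m}\right)Y^2+\left(\delta^2+\delta^{2^m}\right)Y+\left(\delta^2+\delta\right)Y^{2^m}=h(\delta)$; your adjoint polynomial $L_b$ is the right one and satisfies $L_b(1)=0$; and you are right that the conjectured minimum distance $2^{n-1}-2^{\frac{n-1}{2}}$ must be a misprint for $2^{n-2}-2^{\frac{n-1}{2}}$.

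The gap is that every step which actually constitutes the conjecture is deferred. (i) Unsolvability of the displayed affine equation \emph{is} the two-to-one claim, and you give no argument for it; as you yourself note, the factorization tools of Lemmas \ref{cubic}--\ref{quartic} were exploited in Section 4 through the subfield structure of $n=3m$, which is absent for $n=2m+1$. (ii) The assertion $d_b\in\{1,3\}$ is unsupported: squaring your $L_b$ and reducing modulo $Y^{2^n}+Y$ gives a linearized polynomial of degree $2^{m+3}$, so a naive degree count only yields $d_b\le m+3$; one needs a genuine reduction (in the spirit of the substitution $z=y^2+y$ in the proof of Theorem \ref{Th_n=2m+1_2}), and failing a proof that $d_b\le 3$, one must at least show that $d_b\ge 5$ forces $\varphi_b$ to be nonzero on $\ker(L_b)$, i.e.\ $W_f(b)=0$, since otherwise weights outside the conjectured set would occur. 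Incidentally, the vanishing $\varphi_b(1)=0$ does not ``fix the sign of the Gauss sum''; it only removes one obstruction to $W_f(b)\neq 0$, and both signs must in any case occur to produce the conjectured weight set. (iii) The attainment of the extreme weight $2^{n-2}-2^{\frac{n-1}{2}}$, which is exactly what the minimum-distance claim requires, is only gestured at (``in the style of Theorem \ref{Th_n=2m}'') and not performed. In short, your reformulation --- the linear structure $f(x+1)=f(x)$, the explicit affine equation, and the explicit $L_b$ --- is correct and is real progress on which a proof could be built, but the conjecture remains as open after your proposal as it is in the paper.
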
}

{\section{Binary linear codes from $(x^{2^t}+x)^e$}

It is clear that the function $\left(x^{2^t}+x\right)^e$ with $\gcd(t, n)=1$ and $\gcd(e, 2^n-1)$ is a two-to-one function from $\gf_{2^n}$ to itself.
In this section, we construct binary linear codes from two-to-one functions in this form. 

Recall that given a two-to-one function $f$, the parameters of the linear codes $\mathcal{C}_{D(f)}$ in \eqref{Eq-GenCons2} depend on the investigation of the value $W_f(b)$. 
We first present an interesting relation on $W_f(b)$ for $f(x)=(x^{2^t}+x)^e$ and the Walsh transform of $\tr_n(x^e)$. Actually, we consider a general form $f(x)=P(\psi(x))$, where $P$ is a permutation polynomial over $\gf_{2^n}$ and $\psi(x)$ is two-to-one with $\mathrm{Im}(\psi) = \{y \in \gf_{2^n} : \tr_n(y)=0\}$.  

\begin{Prop}
	\label{relation}
	Let $f(x)=P(\psi(x))$, where $P$ is a permutation polynomial over $\gf_{2^n}$ and $\psi(x)$ is two-to-one with $\mathrm{Im}(\psi) = \{y \in \gf_{2^n} : \tr_n(y)=0\}$.
    Then for any $b\in\gf_{2^n}^{*}$, 
    $$ W_f(b)=\sum_{x\in \gf_{2^n}}(-1)^{\tr_n\left(bP(y)+y\right)}.$$
\end{Prop}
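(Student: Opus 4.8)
The plan is to reduce the identity to two elementary character-sum facts: the two-to-one hypothesis on $\psi$ (which controls the multiplicity with which each value is attained) and the permutation hypothesis on $P$ (which, together with $b\neq 0$, forces a complete additive character sum to vanish). First I would start from the definition in \eqref{wf}, namely $W_f(b)=\sum_{x\in\gf_{2^n}}(-1)^{\tr_n(bP(\psi(x)))}$. Since $\psi$ is two-to-one with $\mathrm{Im}(\psi)=H:=\{y\in\gf_{2^n}:\tr_n(y)=0\}$, every $y\in H$ has exactly two preimages under $\psi$ while every $y\notin H$ has none. Grouping the sum over $x$ according to the common value $y=\psi(x)$ therefore gives
\begin{equation*}
W_f(b)=2\sum_{\substack{y\in\gf_{2^n}\\ \tr_n(y)=0}}(-1)^{\tr_n(bP(y))}=:2S_0,
\end{equation*}
where $S_0$ denotes the sum restricted to the hyperplane $H$.

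Next I would expand the claimed right-hand side. Writing $S_1$ for the analogous sum over the complementary coset $\{y:\tr_n(y)=1\}$, splitting according to the value of $\tr_n(y)$ yields
\begin{equation*}
\sum_{y\in\gf_{2^n}}(-1)^{\tr_n(bP(y)+y)}=\sum_{y\in\gf_{2^n}}(-1)^{\tr_n(bP(y))}(-1)^{\tr_n(y)}=S_0-S_1.
\end{equation*}
So the identity to be proved is equivalent to $2S_0=S_0-S_1$, i.e.\ to the single relation $S_0+S_1=0$.

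The key step is then to observe that $S_0+S_1=\sum_{y\in\gf_{2^n}}(-1)^{\tr_n(bP(y))}$ is a complete sum over $\gf_{2^n}$. Because $P$ is a permutation polynomial, the substitution $z=P(y)$ reindexes this as $\sum_{z\in\gf_{2^n}}(-1)^{\tr_n(bz)}$, which by orthogonality of the additive characters equals $0$ for every $b\in\gf_{2^n}^{*}$. Hence $S_1=-S_0$, and therefore $S_0-S_1=2S_0=W_f(b)$, which is exactly the asserted equality.

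I do not expect any genuine obstacle here; the argument is short and the only points requiring care are bookkeeping: correctly carrying the multiplicity two out of the two-to-one property, and invoking the permutation property of $P$ at precisely the point where the total sum must vanish (this hypothesis is essential — without it $S_0+S_1$ need not be zero, and the two displayed quantities would differ). I would also note that the summation variable on the right-hand side of the statement should read $y$ rather than $x$, so as to match the dummy variable appearing in the summand $\tr_n(bP(y)+y)$.
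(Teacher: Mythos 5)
Your proof is correct and rests on exactly the same two ingredients as the paper's: the two-to-one hypothesis gives $W_f(b)=2\sum_{\tr_n(y)=0}(-1)^{\tr_n(bP(y))}$, and the permutation property of $P$ together with $b\neq 0$ makes the complete sum $\sum_{y\in\gf_{2^n}}(-1)^{\tr_n(bP(y))}$ vanish. The only difference is bookkeeping: you split the target sum by the sign $(-1)^{\tr_n(y)}$ and reduce the identity to $S_0+S_1=0$, whereas the paper reaches the same conclusion by translating the trace-one coset by an auxiliary element $a$ with $\tr_n(a)=1$; your version is a slightly cleaner arrangement of the same argument.
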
 

\begin{proof}
	Let  
	$$
	T_0= \{y \in \gf_{2^n} : \tr_n(y)=0\}.
	$$ Take an element $a\in \gf_{2^n}$ with $\tr_n(a)=1$. Then 
	$$
	T_1 := \{y \in \gf_{2^n} : \tr_n(y)=1\} = \{a+y\,:\, y \in T_0\}.
	$$
	For any $b\in\gf_{2^n}^{*}$, the fact $\sum_{y\in \gf_{2^n}} (-1)^{\tr_n\left(bP(y)\right)}=0$ implies 
	$$
	\sum_{y\in T_0} (-1)^{\tr_n\left(bP(y)\right)}  = - \sum_{y\in T_1} (-1)^{\tr_n\left(bP(y)\right)} = \sum_{y\in T_0} (-1)^{\tr_n\left(bP(y+a)+a\right)}.
	$$
	Thus, 
	\begin{eqnarray*}	
		W_f(b) &=& \sum_{x\in \gf_{2^n}}(-1)^{\tr_n\left(bP(\psi(x))\right)} 
		          = 2 \sum_{y\in T_0} (-1)^{\tr_n\left(bP(y)\right)}   
		\\     &=& \sum_{y\in T_0} (-1)^{\tr_n\left(bP(y)\right)} + \sum_{y \in T_0}(-1)^{\tr_n\left( bP(y+a)+a\right)} 
		\\     &=& \sum_{y\in T_0}(-1)^{\tr_n\left(bP(y)+y\right)} + \sum_{y\in T_0}(-1)^{\tr_n\left(bP(y+a)+y+a\right)} 		
		\\     &=& \sum_{y\in T_0}(-1)^{\tr_n\left(bP(y)+y\right)} + \sum_{y\in T_1}(-1)^{\tr_n\left(bP(y)+y\right)} 		
		\\     &=& \sum_{y\in \gf_{2^n}}(-1)^{\tr_n\left(bP(y)+y\right)}. 
	\end{eqnarray*}
\end{proof}

\begin{Rem}
\emph{	It is well known that given an integer $e$ with $\gcd(e, 2^n-1)=1$, the calculations of the weight distribution of $\mathcal{C}_f$ with $f(x)=x^e$, 
	the Walsh spectrum of $x^e$, the cross-correlation distribution of $m$-sequences and their $e$-decimated sequences are equivalent.
	The relation has provided a great amount of interesting results which originated from cryptography, coding theory and sequence design.
	Proposition \ref{relation} exhibits a similar relation, which indicates the equivalence between the computation of the weight distribution of $C_{D(f)}$ for $f(x)=(x^{2^t}+x)^e$ and 
	the Walsh spectrum of $\tr_n(x^e)$. 
	In other words, any power function $x^e$, $\gcd(e, 2^n-1)=1$, with $t$-valued Walsh spectrum
	can be employed to construct linear codes $\mathcal{C}_{D(f)}$ with $t$ nonzero weights.
}

\emph{	Recently Li and Zeng in \cite{Li2018} surveyed the exponents $e$ that allow for 3-valued, 4-valued, 5-valued Walsh spectra of $x^e$.
	All the exponents $e$ listed in \cite{Li2018} with $\gcd(e, 2^n-1)=1$ can be employed to generate binary linear codes $C_{D(f)}$ with few weights.}
\end{Rem}

\begin{table}[!t]
	\centering
	\caption{Known almost bent power functions $x^e$ over $\gf_{2^n}$, $n$ odd} \label{AB}
	\begin{tabular}{cccc}
		\hline
	Functions & $e$ & Conditions & References \\
		\hline
		Gold & $2^i+1$ & $\gcd(i,n)=1$ &  \cite{gold1968maximal,nyberg1993differentially} \\
		Kasami & $2^{2i}-2^i+1$ & $\gcd(i,n)=1$ &  \cite{kasami1971weight}			 \\
		Welch & $ 2^m+3$ & $n=2m+1$ & \cite{canteaut2000binary,hollmann2001proof} \\
	Niho-1 & $2^m+2^{\frac{m}{2}}-1$ & $n=2m+1$, $m$ even & \cite{hollmann2001proof} \\
	Niho-2 & $2^m+2^{\frac{3m+1}{2}}-1$ & $n=2m+1$, $m$ odd & \cite{hollmann2001proof} \\
		\hline
	\end{tabular}
\end{table}

	For simplicity, we only provide the result from almost bent functions over $\gf_{2^n}$ with $n$ odd, which 
has three-valued Walsh spectrum $\left\{ 0, \pm 2^{\frac{n+1}{2}} \right\}$ \cite{chabaud1994links}.  
The known almost bent exponents $e$ is listed in Table \ref{AB}.
From Proposition \ref{relation}, we have the following theorem on the linear codes $\mathcal{C}_{D(f)}$ defined as in (\ref{Eq-GenCons2}).

\begin{Th}
	\label{AB_codes}
	Let $n=2m+1$, $f(x)=\left(x^{2^t}+x\right)^e$ with $\gcd(t, n)=1$ and $e$ being one of the almost bent exponents in Table \ref{AB}. Let $\mathcal{C}_{D(f)}$ is defined as in \eqref{Eq-GenCons2}. Then
		  $\mathcal{C}_{D(f)}$ is a $\left[ 2^{n-1}-1,n\right]$ binary linear code with weight distribution in Table \ref{Th_n=2m+1_table4}.
\end{Th}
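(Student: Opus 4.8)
The plan is to realise $f$ in the composite form required by Proposition~\ref{relation} and then transfer the known three-valued Walsh spectrum of the almost bent power function $x^e$ into the weight distribution of $\mathcal{C}_{D(f)}$ through \eqref{wt_new} and \eqref{Eq-WtDist2}. Concretely, write $f=P\circ\psi$ with $P(y)=y^e$ and $\psi(x)=x^{2^t}+x$. First I would check the two hypotheses of Proposition~\ref{relation}. Since $\gcd(e,2^n-1)=1$, the map $P(y)=y^e$ is a permutation of $\gf_{2^n}$. Since $\gcd(t,n)=1$, the linearised map $\psi$ has kernel $\{x:x^{2^t}=x\}=\gf_{2^{\gcd(t,n)}}=\gf_2=\{0,1\}$, so $\psi$ is two-to-one; moreover $\tr_n(\psi(x))=\tr_n(x^{2^t})+\tr_n(x)=0$ gives $\mathrm{Im}(\psi)\subseteq\{y:\tr_n(y)=0\}$, and comparing cardinalities ($|\mathrm{Im}(\psi)|=2^{n-1}=|\{y:\tr_n(y)=0\}|$) forces equality. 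Hence Proposition~\ref{relation} applies and, for $b\in\gf_{2^n}^{*}$,
$$W_f(b)=\sum_{y\in\gf_{2^n}}(-1)^{\tr_n\left(by^e+y\right)}.$$

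Next I would reduce this to the Walsh spectrum of the Boolean function $g(x)=\tr_n(x^e)$. Setting $\widehat{g}(a)=\sum_{x\in\gf_{2^n}}(-1)^{\tr_n(x^e+ax)}$ and substituting $y=\lambda z$ with $\lambda=b^{-1/e}$ (the unique $e$-th root, available because $\gcd(e,2^n-1)=1$), so that $b\lambda^e=1$, I obtain $W_f(b)=\widehat{g}(b^{-1/e})$. As $b$ runs over $\gf_{2^n}^{*}$ so does $\lambda=b^{-1/e}$, hence the value distribution of $W_f(b)$ on $\gf_{2^n}^{*}$ coincides with that of $\widehat{g}(a)$ on $\gf_{2^n}^{*}$. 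Because $e$ is an almost bent exponent from Table~\ref{AB}, the spectrum of $g$ is three-valued, $\widehat{g}(a)\in\{0,\pm 2^{\frac{n+1}{2}}\}$; in particular $W_f(b)\in\{0,\pm 2^{\frac{n+1}{2}}\}$ for every $b\neq0$, and none of these equals $2^n$, so $K_2=\{0\}$, $d_{K_2}=0$, and by \eqref{dimension} the dimension of $\mathcal{C}_{D(f)}$ is $n$. Feeding the three Walsh values into \eqref{wt_new} yields the three nonzero weights $2^{n-2}$ and $2^{n-2}\pm 2^{\frac{n-3}{2}}=2^{n-2}\pm 2^{m-1}$, which are pairwise distinct for $n\ge3$.

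Finally I would insert these three weights together with the length $\ell=2^{n-1}-1$ into the Pless power-moment system \eqref{Eq-WtDist2} and solve the resulting $3\times3$ linear system for the multiplicities recorded in Table~\ref{Th_n=2m+1_table4}. Alternatively, one may bypass \eqref{Eq-WtDist2} and read off the multiplicities directly from the value distribution of an almost bent spectrum, which follows from Parseval's identity $\sum_{a}\widehat{g}(a)^2=2^{2n}$ and the first moment $\sum_{a}\widehat{g}(a)=2^n$; these give $2^{n-2}\pm 2^{\frac{n-3}{2}}$ occurrences of $\pm 2^{\frac{n+1}{2}}$ and $2^{n-1}$ occurrences of $0$ over all $a$, from which one subtracts the $a=0$ contribution. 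I do not expect a genuine obstacle here: Proposition~\ref{relation} has already done the conceptual work of turning $W_f(b)$ into a standard Walsh sum, and the only points demanding care are the two elementary verifications above, namely that $\mathrm{Im}(\psi)$ is exactly the trace-zero hyperplane and that the three weights are distinct, both of which are immediate from $\gcd(t,n)=1$ and $n\ge3$.
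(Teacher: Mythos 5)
Your proof is correct and follows essentially the same route as the paper's: invoke Proposition~\ref{relation}, use the almost bent property of $x^e$ to conclude $W_f(b)\in\left\{0,\pm 2^{\frac{n+1}{2}}\right\}$ for $b\in\gf_{2^n}^{*}$, deduce dimension $n$ from \eqref{dimension}, read off the three nonzero weights from \eqref{wt_new}, and solve the Pless moment system \eqref{Eq-WtDist2}. Your extra steps---verifying the hypotheses of Proposition~\ref{relation} for $P(y)=y^e$ and $\psi(x)=x^{2^t}+x$, and the substitution $y=\lambda z$ identifying $W_f(b)$ with a value of the Walsh spectrum of $\tr_n(x^e)$---merely make explicit what the paper treats as immediate.
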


\begin{table}[!t]
	\centering
	\caption{The weight distribution of the codes $\mathcal{C}_{D(f)}$ in Theorem \ref{AB_codes}} \label{Th_n=2m+1_table4}
	\begin{tabular}{cc}
		\hline
		Weight & Multiplicity \\
		\hline
		$0$ & $1$\\
		$2^{n-2}-2^{m-1}$ & $2^{n-2}+2^{m-1}$  \\
		$2^{n-2}$ & $ 2^{n-1}-1$ \\
		$2^{n-2}+2^{m-1}$ & $2^{n-2}-2^{m-1}$\\
		\hline
	\end{tabular}
\end{table}

\begin{proof}
	From Proposition \ref{relation} and the almost bent property of $x^e$, we know that for $b\in\gf_{2^n}^{*}$, $$W_f(b)\in\left\{ 0,  \pm 2^{\frac{n+1}{2}} \right\}.$$
	Moreover, it is clear that $W_f(b)= 2^n$ if and only if $b=0$, which means that the dimension of $\mathcal{C}_{D(f)}$ is $n$ according to (\ref{dimension}). Furthermore,  by (\ref{wt_new}), the weights of the codewords $\mathbf{c}_b$ in $\mathcal{C}_{D(f)}$ satisfy
	$$\mathrm{wt}(\mathbf{c}_b)\in \left\{ 2^{n-2}, 0, 2^{n-2}- 2^{\frac{n-3}{2}},2^{n-2}+2^{\frac{n-3}{2}} \right\}.$$
Finally,  define 
	$$\mathrm{w}_1 = 2^{n-2}-2^{\frac{n-3}{2}},~~ \mathrm{w}_2=2^{n-2},~~ \mathrm{w}_3 = 2^{n-2}+2^{\frac{n-3}{2}}.$$
	Then solving (\ref{Eq-WtDist2})  gives the desired weight distribution. 
\end{proof}

As for the linear codes $\mathcal{C}_f$ defined as in \eqref{Eq-GenCons1}, where $f=(x^{2}+x)^e$ for examples, it seems hard to compute the Walsh transform $W_f(a,b)$ for any $a,b\in\gf_{2^n}$. 
However, for the Gold function, we manage to determine its possible values. 

\begin{Th}
	\label{Th_n=2m+1_5}
	Let $n=2m+1$ and  $i$ be a positive integer with $\gcd(i,n)=1$. Let $f(x)=\left(x^{2^t}+x\right)^{2^i+1}$ with $\gcd(t,n)=1$. Define the linear code $\mathcal{C}_f$ as in \eqref{Eq-GenCons1}. Then,
	 $\mathcal{C}_f$ is a $\left[ 2^n-1, 2n \right]$ binary code with five weights. Moreover, the weights of the codewords $\mathbf{c}_b$ in $\mathcal{C}_f$ satisfy
		$$\mathrm{wt}(\mathbf{c}_b) \in \left\{ 2^{n-1}, 0, 2^{n-1} - 2^{\frac{n-1}{2}},   2^{n-1} + 2^{\frac{n-1}{2}},  2^{n-1} - 2^{\frac{n+1}{2}}, 2^{n-1} + 2^{\frac{n+1}{2}}    \right\}. $$  
\end{Th}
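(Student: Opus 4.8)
The plan is to bypass the quadratic bilinear-form machinery and instead reduce $W_f(a,b)$ to the Walsh transform of the Gold permutation, reusing the idea behind Proposition \ref{relation}. Write $\psi(x)=x^{2^t}+x$ and $g(y)=y^{2^i+1}$, so that $f=g\circ\psi$, and denote by $W_g(u,b)$ the Walsh transform of $g$. First I record two facts. Since $\gcd(i,n)=1$ and $n$ is odd we have $\gcd(2^i+1,2^n-1)=1$, so $g$ is a permutation of $\gf_{2^n}$; moreover $g$ is the Gold exponent of Table \ref{AB}, hence almost bent, so $W_g(u,b)\in\{0,\pm 2^{\frac{n+1}{2}}\}$ for every $u\in\gf_{2^n}$ and every $b\in\gf_{2^n}^{*}$. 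Since $\gcd(t,n)=1$, the map $\psi$ is two-to-one with kernel $\{0,1\}$ and image the hyperplane $H=\{y:\tr_n(y)=0\}$. The only new ingredient compared with Proposition \ref{relation} is the linear term $ax$, which I will absorb into the $\psi$-variable.

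Fix $(a,b)$. If $\tr_n(a)=1$, I pair each $x$ with $x+1$ (which has the same $\psi$-value); then $\tr_n(a(x+1))=\tr_n(ax)+1$ while $bf(x)$ is unchanged, so the two contributions cancel and $W_f(a,b)=0$. If $\tr_n(a)=0$, I use the identity $\tr_n\big(d\psi(x)\big)=\tr_n\big((d+d^{2^{n-t}})x\big)$, which follows from $\tr_n(dx^{2^t})=\tr_n(d^{2^{n-t}}x)$. Because $\gcd(n-t,n)=1$, the map $d\mapsto d+d^{2^{n-t}}$ is two-to-one onto $H$, so there are exactly two elements $d,d+1$ with $a=d+d^{2^{n-t}}$, and for such $d$ one has $\tr_n(ax)=\tr_n(d\psi(x))$. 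Substituting and using that $\psi$ is two-to-one onto $H$ gives
\[
W_f(a,b)=\sum_{x\in\gf_{2^n}}(-1)^{\tr_n(d\psi(x)+bg(\psi(x)))}=2\sum_{y\in H}(-1)^{\tr_n(dy+bg(y))}.
\]
Writing $2\sum_{y\in H}=\sum_{y\in\gf_{2^n}}\big(1+(-1)^{\tr_n(y)}\big)$ then splits the right-hand side into $W_g(d,b)+W_g(d+1,b)$.

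With the identity $W_f(a,b)=W_g(d,b)+W_g(d+1,b)$ (for $\tr_n(a)=0$) in hand, the value set follows at once. For $b\neq0$ each summand lies in $\{0,\pm 2^{\frac{n+1}{2}}\}$, so their sum lies in $\{0,\pm 2^{\frac{n+1}{2}},\pm 2^{\frac{n+3}{2}}\}$; for $b=0$ a direct check gives $W_f(a,0)=2^n$ exactly when $a=0$. Together with $W_f(a,b)=0$ for $\tr_n(a)=1$, this shows
\[
W_f(a,b)\in\{0,\,2^n,\,\pm 2^{\frac{n+1}{2}},\,\pm 2^{\frac{n+3}{2}}\},
\]
with $W_f(a,b)=2^n$ if and only if $(a,b)=(0,0)$, because for $b\ne 0$ the two Gold summands cannot both equal $2^{n-1}$ (this exceeds the almost bent bound $2^{\frac{n+1}{2}}$ when $n>3$). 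Hence $d_{K_1}=0$ by \eqref{dimensionK1}, the dimension is $2n$, and \eqref{WtDist-Relation1} converts these Walsh values into the claimed weights $2^{n-1}$, $0$, $2^{n-1}\mp 2^{\frac{n-1}{2}}$ and $2^{n-1}\mp 2^{\frac{n+1}{2}}$.

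The main work is the linear-term reduction in the second paragraph: correctly rewriting $ax$ as $d\psi(x)$ and tracking the factor $2$ coming from the two-to-one property; once this is done, everything else is a transcription of the almost bent spectrum. To upgrade \emph{at most five nonzero weights} to \emph{exactly five}, I would additionally exhibit, for each target Walsh value, a pair $(d,b)$ realizing it, so that $d$ and $d+1$ fall into suitably chosen Gold Walsh-value classes; this is where the finer distributional behaviour of the Gold function would be invoked. Determining the exact multiplicities (the full weight distribution) appears genuinely harder, and indeed the paper leaves it open.
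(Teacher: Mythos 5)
Your proof is correct, and it takes a genuinely different route from the paper's. The paper exploits that $f$ is quadratic and runs the bilinear-form machinery of \eqref{Eq-Quad-2}: it computes the linearized polynomial $L_b$ attached to $\varphi_b(x)=\tr_n(bf(x))$, shows via the substitution $z=y^{2^t}+y$ that $\dim\ker(L_b)\le 3$, and then uses the parity constraint ($n+d_b$ must be even while $n$ is odd) to get $d_b\in\{1,3\}$, hence $W_f(a,b)\in\left\{0,2^n,\pm2^{\frac{n+1}{2}},\pm2^{\frac{n+3}{2}}\right\}$. You instead use the composition $f=g\circ\psi$ and prove the identity $W_f(a,b)=W_g(d,b)+W_g(d+1,b)$ when $\tr_n(a)=0$ (writing $a=d+d^{2^{n-t}}$), together with $W_f(a,b)=0$ when $\tr_n(a)=1$; each step of this reduction (the pairing $x\mapsto x+1$, the rewriting $\tr_n(ax)=\tr_n(d\psi(x))$, the two-to-one-ness of both $\psi$ and $d\mapsto d+d^{2^{n-t}}$ onto the hyperplane $H$) checks out. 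This is exactly the two-parameter extension of Proposition \ref{relation}, which the paper states only for $W_f(b)=W_f(0,b)$, and it buys real generality: your identity applies verbatim to $f=\left(x^{2^t}+x\right)^e$ for \emph{any} almost bent exponent $e$ of Table \ref{AB}, not just the Gold exponent, so it in fact establishes the ``possible weights'' part of the paper's final conjecture on $\mathcal{C}_f$, which the quadratic-form argument cannot reach (the Kasami, Welch and Niho exponents are not quadratic). What the paper's route buys in exchange is self-containedness: it derives the spectrum from scratch, whereas you invoke the almost bent property of the Gold function as a black box.

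Two caveats, neither of which puts you behind the paper. First, the dimension claim ($W_f(a,b)=2^n$ only at $(a,b)=(0,0)$) requires $n>3$ in both proofs, since $2^{\frac{n+3}{2}}=2^n$ when $n=3$; you state this restriction explicitly, the paper leaves it implicit, and it is genuinely needed: for $n=3$, $t=i=1$ and $b^3+b+1=0$ one finds $\tr_3(bf(x))=\tr_3\left(b^2x\right)$ for all $x$, so $W_f\left(b^2,b\right)=2^3$ and the code dimension drops below $2n$. Second, like the paper, you only establish that the nonzero weights lie in the stated five-element set, not that all five occur; your closing remark acknowledges this honestly, and the paper's proof is no stronger on that point.
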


\begin{proof}
	First of all, we shall compute the value $W_f(a,b)$. It is clear that when $(a,b)=(0,0)$, $W_f(a,b)=2^n$. For $(a,b)\neq (0,0)$, let $\varphi_b(x) = \tr_n(b(f(x)))$. Since $f$ is quadratic, from \eqref{Eq-Quad-2}, we need to compute the dimension of kernel of the bilinear form of $\varphi_{b}(x)$. Note that the bilinear form of $\varphi_{b}(x)$ is given by 
	\begin{eqnarray*}
	B_{\varphi_{b}}(x,y) &=& \varphi_{b}(x+y) + \varphi_{b}(x)+\varphi_{b}(y) \\
	&=& \tr_n\left( b\left( y^{2^t}x^{2^{i+t}} + y^{2^{i+t}}x^{2^t} + yx^{2^{i+t}} + y^{2^{i+t}} x + y^{2^t}x^{2^i} + y^{2^i} x^{2^t} + yx^{2^i} + y^{2^i}x  \right)  \right) \\
	&=& \tr_n\left( L_b(y) x^{2^{i+t}} \right),
	\end{eqnarray*}
	where $$L_b(y)=\left( b^{2^t}y^{2^{i+2t}} + (b+b^{2^t}) y^{2^{i+t}} + by^{2^i}\right)^{2^i} + b^{2^t}y^{2^{2t}} + (b+b^{2^t})y^{2^t}+by.$$
	Let $\ker(L_b) = \left\{ y: y \in \gf_{2^n} ~~\text{and}~~L_b(y) = 0 \right\}.$ Next we determine the dimension of $\ker(L_b)$. Let $y^{2^t}+y=z$. Then 
	$$L_b = b^{2^{i+t}}z^{2^{2i+t}}+b^{2^i}z^{2^{2i}}+b^{2^t}z^{2^t} + bz =0,$$
	which means $b^{2^i}z^{2^{2i}}+bz =0$ or $1$. From $b^{2^i}z^{2^{2i}}+bz =0$, since $\gcd\left( i, n \right)=1$, we have $z=b^{-\frac{1}{2^i+1}}$. Thus $L_b(y)=0$ has  at most eight solutions in $\gf_{2^n}$, namely, the dimension of $\ker(L_b)$ is at most $3$.  Moreover, since $W_f(a,b) \in\left\{ 0, \pm 2^{\frac{n+d_b}{2}} \right\}$ and $n$ is odd, where $d_b$ is the dimension of $\ker(L_b)$, $d_b= 1$ or $3$. Therefore, for any $a, b\in\gf_{2^n}$, 
	$$W_f(a,b) = \left\{ 0, 2^n, \pm 2^{\frac{n+1}{2}}, \pm 2^{\frac{n+3}{2}} \right\}.  $$
	
	Next, we consider the parameters of $\mathcal{C}_f$. Since $W_f(a,b)=2^n$ if and only if $(a,b)=(0,0)$, by (\ref{dimensionK1}), the dimension of $\mathcal{C}_f$  is $2n$. Moreover, since for any $a,b\in\gf_{2^n}$, $W_f(a,b)\in\left\{ 0, 2^n, \pm 2^{\frac{n+1}{2}}, \pm 2^{\frac{n+3}{2}} \right\},$ by (\ref{WtDist-Relation1}), the weights of the codewords $\mathbf{c}_b$ in $\mathcal{C}_{f}$ satisfy
	$$\mathrm{wt}(\mathbf{c}_b)\in\left\{ 2^{n-1}, 0, 2^{n-1} - 2^{\frac{n-1}{2}},   2^{n-1} + 2^{\frac{n-1}{2}},  2^{n-1} - 2^{\frac{n+1}{2}}, 2^{n-1} + 2^{\frac{n+1}{2}}    \right\}.$$
\end{proof}

Moreover, according to the experiment results, we have the following conjecture. 
\begin{Conj}
	Let $n=2m+1$ and $e$ be the almost bent exponents as given in Table \ref{AB}. Let $f(x)=(x^{2^t}+x)^e$ with $\gcd(t,n)=1$. Define the linear codes $\mathcal{C}_f$  as in \eqref{Eq-GenCons1}. Then, the parameters of the linear codes $\mathcal{C}_f$ are the same as that in Theorem \ref{Th_n=2m+1_5}. If possible, determine the weight distribution of the linear codes $\mathcal{C}_f$.
\end{Conj}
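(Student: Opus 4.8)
The plan is to avoid the quadratic-form machinery of Theorem~\ref{Th_n=2m+1_5}, which is unavailable once $e$ is a non-Gold exponent (then $f$ is no longer quadratic), and instead to express every Walsh value $W_f(a,b)$ in terms of the Walsh transform of the power map $x^e$, generalizing Proposition~\ref{relation} from $a=0$ to arbitrary $a$. Write $\psi(x)=x^{2^t}+x$, so $f=P\circ\psi$ with $P(y)=y^e$; since $\gcd(t,n)=1$ the map $\psi$ is two-to-one with kernel $\{0,1\}$ and image $T_0=\{y:\tr_n(y)=0\}$. Grouping the defining sum of $W_f(a,b)$ by the fibres of $\psi$, the fibre $\{x_0,x_0+1\}$ over $y$ contributes the factor $1+(-1)^{\tr_n(a)}$, so that $W_f(a,b)=0$ whenever $\tr_n(a)=1$.

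First I would handle the surviving case $\tr_n(a)=0$. Then $a\in T_0=\im(\psi)$, so I may write $a=c^{2^t}+c$ and put $a_1=c^{2^t}$; a one-line Frobenius identity, $\tr_n(a_1y)=\tr_n(c^{2^t}(x_0^{2^t}+x_0))=\tr_n(cx_0)+\tr_n(c^{2^t}x_0)=\tr_n(ax_0)$, shows that on each fibre the linear term equals $\tr_n(a_1y)$. Hence $W_f(a,b)=2\sum_{y\in T_0}(-1)^{\tr_n(by^e+a_1y)}$, and expanding the indicator of $T_0$ as $\frac12\bigl(1+(-1)^{\tr_n(y)}\bigr)$ gives the clean identity $W_f(a,b)=\widehat W(a_1,b)+\widehat W(a_1+1,b)$, where $\widehat W(\alpha,\beta)=\sum_y(-1)^{\tr_n(\alpha y+\beta y^e)}$ is the Walsh transform of $x^e$. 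Replacing $c$ by $c+1$ swaps $a_1$ and $a_1+1$, so the identity is well defined, and taking $a=0$ recovers Proposition~\ref{relation}.

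Next I would invoke the almost bent hypothesis. For $b\neq0$ the three-valued spectrum $\widehat W(\alpha,b)\in\{0,\pm2^{\frac{n+1}{2}}\}$ forces $W_f(a,b)\in\{0,\pm2^{\frac{n+1}{2}},\pm2^{\frac{n+3}{2}}\}$, while for $b=0$ one checks directly that $W_f(a,0)=2^n$ exactly when $a=0$ and vanishes otherwise. Thus $W_f$ takes only the values $\{0,2^n,\pm2^{\frac{n+1}{2}},\pm2^{\frac{n+3}{2}}\}$, with $2^n$ attained solely at $(0,0)$; by \eqref{dimensionK1} the dimension is $2n$, and by \eqref{WtDist-Relation1} the codeword weights lie in the set asserted in Theorem~\ref{Th_n=2m+1_5}. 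In the Gold case this reproduces that theorem, serving as a consistency check, and in general it settles the ``same parameters'' part of the conjecture for every exponent in Table~\ref{AB}.

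The main obstacle is the remaining claim: pinning down the exact multiplicities, and even verifying that all five nonzero weights genuinely occur. By the identity above, the weight distribution is governed by the joint distribution of the pairs $\bigl(\widehat W(\alpha,b),\widehat W(\alpha+1,b)\bigr)$ as $\{\alpha,\alpha+1\}$ ranges over the $2^{n-1}$ cosets of $\{0,1\}$ and $b$ over $\gf_{2^n}^{*}$; the extreme weights $2^{n-1}\pm2^{\frac{n+1}{2}}$ occur precisely when two adjacent points carry equal nonzero Walsh values. This is a second-order correlation of the Walsh spectrum of $x^e$ under the fixed translation $\alpha\mapsto\alpha+1$, and it is strongly exponent-dependent. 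I would expect it to be tractable for the Gold (and perhaps Kasami) exponents via their quadratic or subfield structure, but genuinely hard for the Welch and Niho exponents, for which no closed form for such paired Walsh correlations is presently known; accordingly I would settle the value set and dimension uniformly as above and then attack the multiplicities exponent by exponent, beginning with Gold.
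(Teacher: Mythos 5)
There is no proof in the paper to compare against: this statement is posed as a conjecture, supported only by numerical experiments, and the only case the paper actually proves is the Gold exponent (Theorem \ref{Th_n=2m+1_5}), by a quadratic-form rank argument that is unavailable for the Kasami, Welch and Niho exponents. Your proposal therefore has to stand on its own, and its core does: the fibre-pairing argument giving $W_f(a,b)=0$ when $\tr_n(a)=1$, the identity $\tr_n(ax_0)=\tr_n(a_1\psi(x_0))$ for $a=c^{2^t}+c$, $a_1=c^{2^t}$, and the indicator expansion yielding $W_f(a,b)=\widehat{W}(a_1,b)+\widehat{W}(a_1+1,b)$ are all correct (and the specialization $a=0$ does recover Proposition \ref{relation}, since $\widehat{W}(0,b)=0$ for $b\neq 0$ because $x^e$ is a permutation). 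Combined with the three-valued AB spectrum this places $W_f(a,b)$ in $\left\{0,\,2^n,\,\pm 2^{\frac{n+1}{2}},\,\pm 2^{\frac{n+3}{2}}\right\}$, hence gives exactly the weight set of Theorem \ref{Th_n=2m+1_5} and, via \eqref{dimensionK1}, dimension $2n$. So you have proved the ``same parameters'' half of the conjecture uniformly over Table \ref{AB}, which is strictly more than the paper establishes, and your reformulation of the multiplicity question as a translation-correlation of the Walsh spectrum (pairs $\bigl(\widehat{W}(\alpha,b),\widehat{W}(\alpha+1,b)\bigr)$ over cosets of $\{0,1\}$) is the right description of what remains open.

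Two caveats you should make explicit. First, the conclusion that $2^n$ is attained only at $(0,0)$ needs $n\geq 5$: when $n=3$ one has $2^{\frac{n+3}{2}}=2^n$, and the value $2^n$ genuinely can occur with $b\neq 0$ --- e.g.\ for $f(x)=(x^2+x)^3$ over $\gf_{2^3}$ one computes $\tr_3(bf(x)+ax)=\tr_3\left((b^4+b^2+b)x^3+(a+b^2)x\right)$, which vanishes identically whenever $b^3+b+1=0$ and $a=b^2$, so $d_{K_1}>0$ and the dimension drops below $2n$. This edge case equally afflicts the paper's own Theorem \ref{Th_n=2m+1_5}, but since you invoke that theorem as the benchmark you should state the restriction. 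Second, as you yourself acknowledge, neither the multiplicities nor even the occurrence of all five nonzero weights is established; that part of the conjecture (the ``if possible, determine the weight distribution'' clause) remains open, so your argument is a genuine partial resolution rather than a proof of the full statement.
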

}

\section{Conclusion}
In this paper, we employed {some known and new} two-to-one functions  in two generic constructions of binary linear codes.
By investigating the Walsh transform of relevant quadratic functions, we obtained the possible Hamming weights of the codewords in constructed linear codes.
The two-to-one functions with few-valued Walsh transforms are particularly studied. As a result, a large number of new binary codes with few weights are presented.
Moreover, the weight distributions of the codes with one nonzero weight and with three nonzero weights are determined.

\bibliographystyle{plain}
\bibliography{ref}

\end{document}